\newtheorem{thm}{Theorem}
\newtheorem{cor}[thm]{Corollary}
  \newtheorem{defn}[thm]{Definition}
  \newtheorem{lem}[thm]{Lemma}
  \newtheorem{lem*}[thm]{Lemma}
\begin{document}
\global\long\def\Tasks{T}
\global\long\def\TasksP{T'}
\global\long\def\Def{:=}
\global\long\def\St{:\,}
\global\long\def\dright{\mathrm{RIGHT}}
\global\long\def\dleft{\mathrm{LEFT}}
\global\long\def\dn{\downarrow}
\global\long\def\up{\uparrow}

\providecommand{\eps}{\varepsilon}
\providecommand{\cT}{\mathcal{T}}
\providecommand{\cM}{\mathcal{M}}
\providecommand{\cA}{\mathcal{A}}
\providecommand{\cD}{\mathcal{D}}
\providecommand{\cR}{\mathcal{R}}
\providecommand{\cS}{\mathcal{S}}



\newcommand{\Crit}[1]{\ensuremath{C(#1)}}
\newcommand{\CritP}[1]{\ensuremath{C'(#1)}}
\newcommand{\CritLeft}[1]{\ensuremath{C_L(#1)}}
\newcommand{\CritRight}[1]{\ensuremath{C_R(#1)}}
\newcommand{\CritMed}[1]{\ensuremath{C_M(#1)}}
\newcommand{\CritCand}[1]{\ensuremath{\mathcal{C}(#1)}}
\newcommand{\CritCandLeft}[1]{\ensuremath{\mathcal{C}_L(#1)}}
\newcommand{\CritCandRight}[1]{\ensuremath{\mathcal{C}_R(#1)}}
\newcommand{\CritNum}{\ensuremath{\mathit{cr}}}
\newcommand{\CellVal}[1]{\mathit{OPT}(#1)}
\newcommand{\CUp}{C_{\up}}
\newcommand{\CDown}{C_{\dn}}
\newcommand{\CellAreaTasks}[1]{T(#1)}
\newcommand{\Line}[2]{\ensuremath{#1\!\!-\!\!#2}\xspace}

\newcommand{\abv}{abv}
\newcommand{\crit}{crit}
\newcommand{\subc}{subc}
\newcommand{\bound}{bound}

\newcommand{\Bf}{\boldsymbol{f}}
\newcommand{\Be}{\boldsymbol{e}}

\newcommand{\MUp}{m_{\up}}
\newcommand{\MDown}{m_{\dn}}

\newcommand{\Card}[1]{\left\lvert{#1}\right\rvert}


\graphicspath{{./figures/}}

\title{A Mazing 2+$\ensuremath{\mathbf{\epsilon}}$ Approximation \\ for Unsplittable
Flow on a Path}

\author{Aris Anagnostopoulos\footnote{Sapienza University of Rome, \texttt{aris@dis.uniroma1.it}} , 
Fabrizio Grandoni\footnote{IDSIA, University of Lugano, \texttt{fabrizio@idsia.ch}} , \\
Stefano Leonardi\footnote{Sapienza University of Rome, \texttt{leon@dis.uniroma1.it}} , and
Andreas Wiese\footnote{Max-Planck-Institut f\"ur Informatik, Saarbr\"ucken, \texttt{awiese@mpi-inf.mpg.de}. This work was partially supported by a fellowship within the Postdoc-Programme of the German Academic Exchange Service (DAAD).}}

\date{}

\maketitle

\begin{abstract}
We study the unsplittable flow on a path problem (UFP) where we are given
a path with non-negative edge capacities and tasks, which are
characterized by a subpath, a demand, and a profit. The goal is to find
the
most profitable subset of tasks whose total demand does not violate the
edge capacities. This problem naturally arises in many settings such
as bandwidth allocation, resource constrained scheduling, and interval
packing. 

A natural task classification defines the size of a task $i$ to be the ratio $\delta$ between the demand of $i$ and the minimum capacity of any edge used by $i$. If all tasks have sufficiently
small $\delta$, the problem is already well understood and there is a $1+\eps$ approximation. For the complementary
setting---instances whose tasks all have large $\delta$---much remains
unknown, and the best known polynomial-time procedure gives only (for any constant $\delta>0$) an approximation
ratio of $6+\eps$.

In this paper we present a polynomial time $1+\eps$ approximation for the latter setting.
Key to this result is a complex geometrically inspired 
dynamic program. Here each task is represented as a segment underneath
the capacity curve, and we identify a proper maze-like structure so that
each \emph{passage} of the maze is \emph{crossed} by only $O(1)$ tasks
in the computed solution.
In combination with the known PTAS for $\delta$-small tasks, our result implies a $2+\eps$ approximation for UFP, improving on the previous best $7+\eps$ approximation~[Bonsma et al., FOCS 2011].
We remark that our improved approximation factor matches the best known approximation ratio for the 
considerably  easier special case of uniform edge capacities.
 \end{abstract}


\section{Introduction}
\label{sec:intro}

In the \emph{unsplittable flow on a path} problem (UFP) we are given
a set of tasks $\Tasks$ and a path $G=(V,E)$.
Each edge~$e$ has a capacity $u_{e} \in \mathbb{N}^+$. Each task $i\in\Tasks$
is specified by a subpath $P(i)$ between the start (i.e. leftmost) vertex $s(i)\in V$ and the end 
(i.e. rightmost) vertex~$t(i)\in V$, a demand $d(i)> 0$, and a profit (or weight) $w(i)\geq 0$.  For each edge
$e\in E$, denote by $\Tasks_{e}$ all tasks $i$ using $e$, i.e. such that $e\in P(i)$. 
For a subset of tasks $T'$, let $w(\Tasks'):=\sum_{i\in\Tasks'}w(i)$ and $d(\Tasks'):=\sum_{i\in\Tasks'}d(i)$.
The goal is to select
a subset of tasks $\Tasks'$ with maximum profit $w(\Tasks')$ such that
$d(\Tasks'\cap\Tasks_{e})\le u_{e}$,
for each edge $e$.

The problem and variations of it are motivated by several applications
in settings such as bandwidth allocation, interval packing, multicommodity
flow, and scheduling. For example, edge capacities might model a given resource whose supply varies over a time horizon. Here,  
tasks correspond to jobs with
given start- and end-times and each job has a fixed demand for the mentioned resource. The goal is then to select the most
profitable subset of jobs whose total demand at any time can be satisfied with the available resources. 

When studying the problem algorithmically, a natural classification
of the tasks is the following: W.l.o.g., let us assume that edge
capacities are all distinct (this can be achieved by slight
perturbations and scaling, see~\cite{BSW11}). For each task $i$
define its \emph{bottleneck
capacity} $b(i):=\min\{u_{e}: e\in P(i)\}$. Let also the \emph{bottleneck edge} $e(i)$ of $i$ be the edge of $i$ with capacity $b(i)$. W.l.o.g. we can assume $d(i)\leq b(i)$, otherwise task $i$ can be discarded. For any value $\delta\in(0,1]$
we say that a task $i$ is $\delta$-large if $d(i)\geq \delta\cdot b(i)$
and \emph{$\delta$-small} otherwise. 

If all tasks are $\delta$-small ($\delta$-small instances), then the
problem is well understood. In particular, by applying the LP-rounding
and grouping techniques from~\cite{BSW11}, we immediately obtain the
following result (see the appendix for its proof).
\begin{lem}\label{lem:ptasSmall}
For any $\eps>0$ there is a $\delta\in (0,1]$ such that in polynomial time one can compute a $(1+\eps)$-approximation for $\delta$-small instances of UFP.
\end{lem}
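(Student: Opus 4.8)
The plan is to reduce the claim to the LP-rounding machinery of~\cite{BSW11}, whose small-task PTAS was developed precisely for this situation; the only point to verify is that our notion of $\delta$-small, defined through the bottleneck capacity $b(i)=\min\{u_e:e\in P(i)\}$, coincides with (and is in any case no weaker than) the one under which that machinery operates. Indeed, if task $i$ is $\delta$-small then for \emph{every} edge $e\in P(i)$ we have $d(i)\le \delta\, b(i)\le \delta\, u_e$, so $i$ occupies at most a $\delta$-fraction of the capacity of each edge it traverses. Hence a $\delta$-small instance in our sense is also a ``uniformly $\delta$-small'' instance, and the results of~\cite{BSW11} apply.

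Concretely, first I would solve the natural LP relaxation: maximise $\sum_{i\in\Tasks}w(i)x_i$ subject to $\sum_{i\in\Tasks_e}d(i)x_i\le u_e$ for all $e\in E$ and $0\le x_i\le 1$. Let $x^\star$ be an optimal solution; its value is at least $\mathrm{OPT}$. The task is then to round $x^\star$ to an integral feasible solution of profit at least $(1-O(\eps))\sum_i w(i)x^\star_i$, which combined with $\delta\to 0$ will give the $(1+\eps)$ guarantee.

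For the rounding I would follow~\cite{BSW11}. Build the laminar family of subpaths obtained by recursively splitting the path at its minimum-capacity edge; this yields a binary decomposition tree in which every edge is the split edge of exactly one node, and each task $i$ is ``anchored'' at the node whose split edge is $i$'s bottleneck edge $e(i)$, so that all tasks anchored at a node $v$ use the split edge of $v$ and lie inside $v$'s subpath. One then rounds group by group over a constant number of geometric capacity classes, first applying a standard shifting step (discarding, at a random offset, the tasks whose bottleneck falls in every $\Theta(\log_{1+\eps}(1/\eps))$-th class) that costs at most an $\eps$-fraction of the LP profit and makes the retained classes well separated. Exploiting uniform smallness, after scaling $x^\star$ down by $1-\Theta(\eps)$ one can, within each node's anchored set, select an integral subset whose demand respects the residual capacities while giving up only an $\eps$-fraction of the profit, because each such task occupies only a $\delta$-fraction of every edge it uses. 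The laminar structure guarantees that the rounding decisions at a node interact with only $O(1)$ capacity classes, so these losses do not cascade along the path.

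I expect the main obstacle to be exactly this last point: bounding the capacity violation introduced by rounding without triggering a chain of further deletions. This is what the bottleneck-induced laminar decomposition provides, by localizing the effect of rounding an anchored set to a bounded ``band'' of capacity classes; the crux of a careful write-up is therefore to check that our (possibly mildly different) definition of $\delta$-small does not disturb that localization, which it does not, since bottleneck-smallness implies edgewise smallness. Adding up the $(1+\eps)$ losses from the shifting step and the per-node rounding, and choosing $\delta=\delta(\eps)$ small enough, yields the claimed polynomial-time $(1+\eps)$-approximation for $\delta$-small instances.
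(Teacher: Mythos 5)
Your proposal is correct and follows essentially the same route as the paper: both defer to the LP-rounding and grouping machinery of~\cite{BSW11}, i.e.\ partition the tasks into geometric bottleneck-capacity classes $F^{k,\ell}$, apply a shifting step costing an $\eps$-fraction of profit, round the LP within each class while reserving a $\beta\cdot 2^k$ slack on each edge, and use the observation that $d(i)\le\delta\, b(i)\le\delta\, u_e$ for every $e\in P(i)$ to make the per-class rounding work. The paper simply packages this as two black-box lemmas (a $(\tfrac{1+\eps}{1-\beta},\beta)$-approximation per class and a combination framework) and then chooses $q$, $\ell$, and $\delta$, which is exactly the bookkeeping your sketch describes.
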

 
However, much remains unclear for the complementary case where all tasks are $\delta$-large ($\delta$-large instances),  even if $\delta$ is very close to $1$.
Importantly, for such instances the canonical LP has an integrality
gap of $\Omega(n)$~\cite{CCGK2007}. The best known approximation factor 
for this setting is $2k$, where $k\in \mathbb{N}$ such that 
$\delta>\frac{1}{k}$ (and in particular $k\ge 2$)~\cite{BSW11}.
Bonsma et al.~\cite{BSW11} reduce the problem to an instance of maximum independent set of rectangles and this approach inherently loses a factor of $2k\geq 4$ in the approximation ratio.
The best known
$(6+\eps)$-approximation algorithm for $\delta$-large instances, for
any $\delta>0$, combines the approach above (with $k=2$) with another
algorithm, which is $2+\eps$ approximate for instances that are
$1/2$-small and $\delta$-large at the same time.
Combining this $(6+\eps)$ approximation with the result from Lemma \ref{lem:ptasSmall}, one obtains the currently best $(7+\eps)$-approximation algorithm for UFP on general instances \cite{BSW11}.

\subsection{Related Work}

As said above, the best known polynomial time approximation algorithm
for unsplittable flow on a path achieves an approximation factor of $7+\eps$~\cite{BSW11}.
This result improves on the previously best known polynomial time
$O(\log n)$-approximation algorithm designed by Bansal et
al.~\cite{SODA-unsplit-flow}.
When allowing more running time, there is a quasi-PTAS, that is, a $(1+\eps)$-approximation running in 
$O(2^ {\mathrm{polylog}(n)})$ time that additionally assumes a
quasi-polynomial bound on the edge capacities and the
demands of the input instance~\cite{BCES2006}. 
In terms of lower bounds, the problem is strongly NP-hard, even in the 
case of uniform edge capacities and unit profits \cite{BSW11,CWMX-ESA2010,DPS2010}.

The canonical LP-relaxation suffers from a $\Omega(n)$ integrality gap~\cite{CCGK2007}.
Adding further constraints, Chekuri, Ene, and Korula give an LP relaxation with an integrality gap
of only $O(\log^2 n)$~\cite{CEKApprox2009}, which was recently improved to $O(\log n)$~\cite{CEKunp}.

Because of the difficulty of the general problem, researchers have
studied special cases. A very common
assumption is the \emph{no-bottleneck assumption} (NBA), which requires that $\max_i \{d(i)\} \le \min_e \{u_e\}$. 
Chekuri, Mydlarz, and Shepherd~\cite{CMS07} give the currently best known
$(2+\eps)$-approximation algorithm under NBA.
Note that this matches the best known result for the further restricted
case of uniform edge capacities of Calinescu et al.~\cite{CCKR2002}.

When generalizing the problem to trees, Chekuri et al.~\cite{CEKApprox2009} give a $O( \log(1/\gamma) /
\gamma^3)$-approximation
algorithm for the case that all tasks are $(1-\gamma)$-small. Under the NBA, Chekuri et al.~\cite{CMS07} design a
48-approximation algorithm.  Note that on trees the problem
becomes APX-hard, even for unit demands, edge capacities being either 1
or 2, and trees with depth three~\cite{GVY1997}.

On arbitrary graphs, UFP generalizes the well-known Edge Disjoint Path
Problem (EDPP). On directed graphs, there is a
$O(\sqrt{\Card{E}})$ approximation algorithm by Kleinberg~\cite{Kleinberg96},
which matches the lower bound of $\Omega(\Card{E}^{1/2-\eps})$
by Guruswami et al.~\cite{GuruswamiKhanna2003}. When assuming the NBA,
Azar, and Regev~\cite{AzarRegev2006} give an
$O(\sqrt{\Card{E}})$ approximation algorithm for UFP. On the other hand, they
show that without the NBA the problem is
NP-hard to approximate within a factor better than $O(\Card{E}^{1-\eps})$.

\subsection{Our Contribution}

In this paper, we present a polynomial-time approximation scheme for
$\delta$-large instances of UFP (for any constant $\delta>0$), improving
on the previous best $6+\eps$ approximation~\cite{BSW11}.
We remark that instances with only $\delta$-large tasks might be relevant in practice.

Furthermore, in combination with the algorithm from
Lemma~\ref{lem:ptasSmall}, our PTAS implies a $2+\eps$ approximation
for arbitrary UFP instances,
without any further assumptions (such as the NBA or restrictions on edge capacities).
This improves on the previous best $7+\eps$ approximation for the
problem~\cite{BSW11}. Note that our $2+\eps$ approximation matches
the best known approximation factor for the considerably easier special
case of uniform edge-capacities \cite{CCKR2002}, where, in particular,
the canonical LP has an integrality gap of a small constant and the
$\delta$-large tasks can be handled easily with a straightforward
dynamic program (DP). Therefore, we close the gap in terms of (known)
polynomial-time approximation ratios between the uniform and general
case.

\begin{figure}
\centerline{\scalebox{.25}{\input{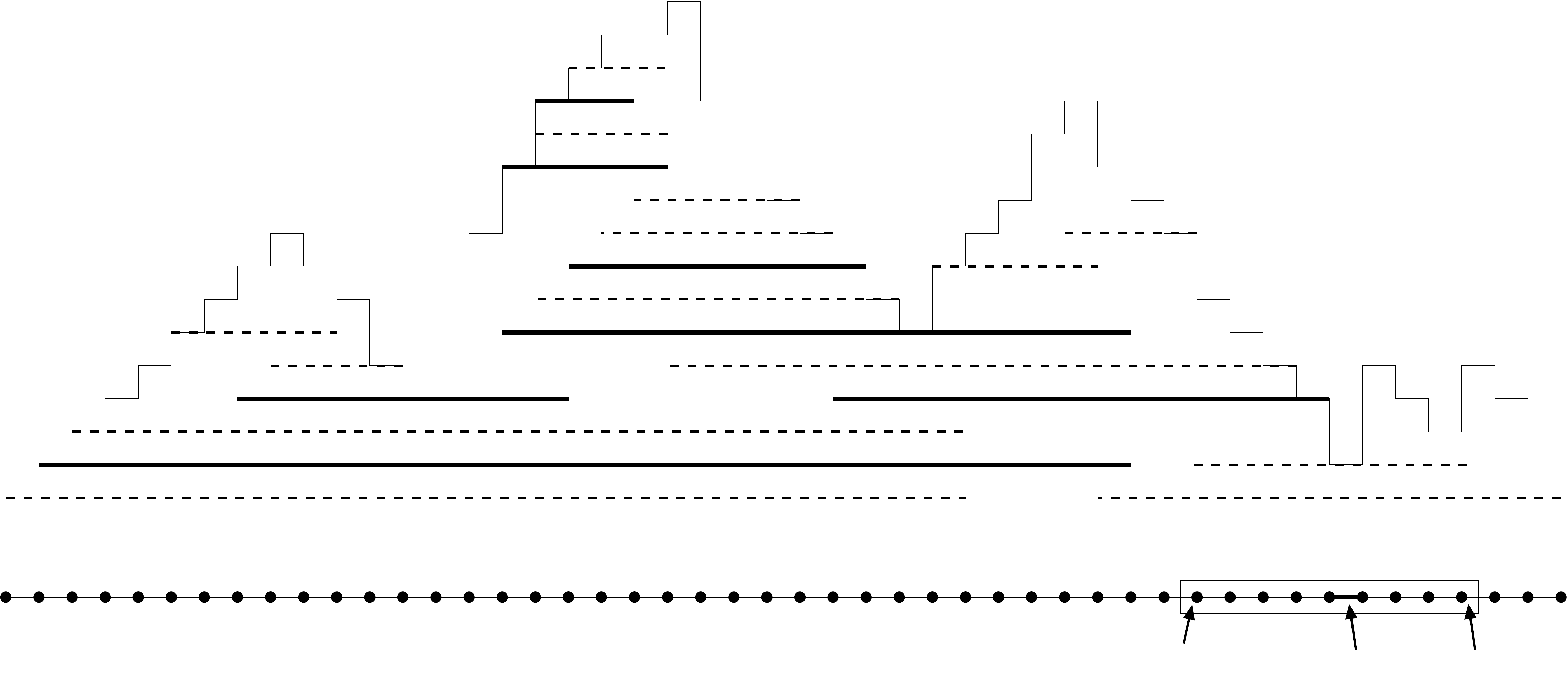_t}}}
\caption{
An example of capacity curve, with segments associated to some tasks $T'$ (dashed) and m-tasks $M'$ (bold). Note that $(T',M')$ is 2-thin.
}
\label{fig:maze}
\end{figure}

When solving general UFP, large tasks are difficult to handle: as
mentioned above, the canonical LP-relaxation suffers from an integrality
gap of $\Omega(n)$~\cite{CCGK2007}. Also, in contrast to
the NBA-case there is no canonical polynomial time dynamic program
for them since the number of large tasks per edge can be up to $n$
(whereas under the NBA it is $O(1/\delta)$, see~\cite{CMS07}).

In~\cite{BSW11} a dynamic program for large tasks was presented; however,
their reduction to maximum independent set in rectangle intersection
graphs inherently loses a factor of 4 in the approximation ratio,
already for $\frac{1}{2}$-large tasks (and even a factor of $2k$
for $\frac{1}{k}$-large tasks). 

Our PTAS for $\delta$-large instances is also a dynamic program,
but it deviates substantially from the DP-approaches above. We exploit
the following geometric viewpoint. We represent capacities with a
closed curve on the 2D plane (the \emph{capacity curve}) as follows:
Let us label nodes from $1$ to $n$ going from left to right. For
each edge $e=(v,v+1)$, we draw a horizontal line segment (or segment
for short) $[v,v+1]\times\{u_{e}\}$. Then we add a horizontal segment
$[1,n]\times\{0\}$, and vertical segments in a natural way to obtain
a closed curve. We represent each task $i$ as a horizontal segment
$(s(i),t(i))\times\{b(i)\}$. In particular, this segment is contained
in the capacity curve, and touches the horizontal segment corresponding
to its bottleneck edge (see Figure~\ref{fig:maze}).

Canonically, one might want to traverse the path from left to right
and introduce DP-cells encoding all possible choices for certain subpaths.
Instead, we traverse the area underneath the capacity profile using
the above geometric representation, going from the root in the bottom
left to the dead-ends of the maze (the leaves). To guide this traversal,
we use some tasks which we call \emph{maze tasks} below. Those tasks
fulfill two functions: they structure the area within the capacity
curve into a maze with a tree topology such that, intuitively, each
\emph{passage} of the maze is \emph{crossed} by only $k=O(1)$ tasks (see Figure~\ref{fig:maze}).
We will refer to this property as $k$\emph{-thinness} later. It will
be crucial for bounding the number of DP-cells. 

One still arising difficulty is that when traversing the maze in higher
regions, we cannot afford to remember precisely which tasks were selected
in lower regions. To this end, the maze tasks have a second function.
We use them to make it affordable to ``forget'' some decisions while
moving from the root to the leaves. Before returning a solution we
remove all maze tasks from the computed set. By constructing our algorithm carefully,
we ensure that the capacity of each maze task $m$ compensates for
the information we allowed to ``forget'' due to $m$. We call a
solution \emph{weakly feasible} if it balances the latter correctly\emph{.
}So our DP computes a weakly feasible $k$-thin pair $(T',M')$ where
$T'\subseteq T$ is a set of tasks and $M'$ is a set of maze tasks.
The final output consists only of $T'$ whose weight we seek to maximize,
and the DP computes the optimal solution among all pairs $(T',M')$.

Since at the end we will remove the maze tasks of a computed solution
we need to ensure that there is in fact a solution $(T',M')$ where
the weight of $T'$ is close to the optimum~$T^{*}$. This is proved
by a non-trivial sequence of reductions where eventually the tasks
of $T^{*}$ are mapped into directed paths of a proper rooted tree.
On those paths we define a min-flow LP where each integral solution
induces a $k$-thin pair $(T',M')$ where the tasks $T^{*}$ are partitioned
into $T'$ and $M'$. The objective is to minimize $w(M')$. The claim
then follows by showing that there exists a cheap fractional solution of weight at most $\eps\cdot w(T^*)$,
and that the LP matrix is totally unimodular.

\section{Overview of the Algorithm}
\label{ref:overview}

In this section we describe our methodology, which results in a
polynomial-time $(1+\eps)$-approximation algorithm for
$\delta$-large UFP instances  (for any two given constants
$\eps,\delta>0$). After running a polynomial time preprocessing routine we
can assume that each vertex is either the start or the end vertex of exactly
one task in $T$ (similarly as in~\cite{BCES2006}). 
Thus, the number $n$ of nodes in the graph is $\Theta(T)$.

Now we define the \emph{maze tasks}, or m-tasks for short,
which we use to structure our solution.
For each pair of tasks $i$ and~$j$ that share the same
bottleneck edge $e$ (possibly $i=j$), we define an m-task $m$ with
$P(m)=P(i)\cup P(j)$. Analogously to regular tasks, we set $b(m)=u_e$ and $e(m)=e$. 
Furthermore, we define $d(m)=\delta\cdot u_e$ and $w(m)=0$. Let $M_e$ be the m-tasks $m$ with $e\in P(m)$. 
Note that, by the above preprocessing, no two m-tasks with different bottleneck capacity share the same endpoint.

Our goal is to search for solutions in the form of \emph{maze pairs}
$(T',M') \in 2^T\times 2^M$, where we require for any two different
tasks $m,m' \in M'$ that $b(m')\neq b(m'')$. Let  $k=k(\eps,\delta)$ be
a proper integer constant to be defined later. We restrict our attention
to maze pairs that are \emph{$k$-thin} and \emph{weakly feasible}, as
defined below.

Intuitively, a maze pair $(T',M')$ is $k$-thin if, for any edge~$e$, between two
consecutive line segments associated to m-tasks from $M'\cap M_e$ there are at most $k$
segments associated to tasks in $T'\cap T_e$ (see Figure~\ref{fig:maze}).
\begin{defn} [$k$-thinness]
A maze pair $(T',M')$ is $k$-thin if for every edge $e$ and every set $T'' \subseteq T' \cap T_e$ with
$\Card{T''} > k$ there is an m-task $m \in M' \cap M_e$ such that
$\min_{i\in T''}\{b(i)\}  \le b(m) < \max_{i\in T''}\{b(i)\}$.
\end{defn}

In Section~\ref{sec:construct-maze} we prove that, for large enough $k$,
there exists a $k$-thin maze pair $(\tilde{T},\tilde{M})$ so that
$\tilde{T}$ is a good approximation to the optimum $T^*$ and
$\tilde{T}\cup \tilde{M}$ is feasible (i.e., $d(\tilde{T}\cap T_e)+d(\tilde{M}\cap M_e)\leq u_e$ on each edge $e$).
\begin{lem}\label{lem:apx}
For any $\eps,\delta>0$ there is a $k\in \mathbb{N}$, such that
for any $\delta$-large instance of UFP, there exists a $k$-thin
maze pair $(\tilde{T},\tilde{M})$ such that $w(\tilde{T})\geq
(1+\eps)^{-1}w(T^*)$ and $\tilde{T}\cup \tilde{M}$ is feasible.
\end{lem}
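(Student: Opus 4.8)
The plan is to build $(\tilde{T},\tilde{M})$ from an optimal solution $T^{*}$ by selecting a low-weight subset $R\subseteq T^{*}$ to promote to m-tasks and setting $\tilde{T}:=T^{*}\setminus R$. Two properties then come essentially for free. First, \emph{feasibility}: group the tasks of $R$ by their bottleneck edge; the subpaths in one group all contain the common bottleneck edge $f$, so their union is an interval equal to $P(i)\cup P(j)$ for the two group members reaching farthest left and right, which defines one m-task $m\in M$ with $b(m)=u_{f}$ and $d(m)=\delta u_{f}$. Every group member is $\delta$-large with bottleneck capacity $u_{f}$, hence has demand $\ge d(m)$, and every edge of $P(m)$ is used by some group member; summing over groups gives $d(\tilde{M}\cap M_{g})\le d(R\cap T_{g})$ on every edge $g$, so $\tilde{T}\cup\tilde{M}$ is feasible because $T^{*}$ is. Distinct m-tasks of $\tilde{M}$ have distinct bottleneck edges, hence distinct bottleneck capacities, so $(\tilde{T},\tilde{M})$ is a maze pair. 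Second, \emph{$k$-thinness}: it suffices that on every edge $e$ the set $R\cap T_{e}$ hits every window of $k$ consecutive tasks of $T^{*}\cap T_{e}$ in the order of increasing bottleneck capacity. Indeed, by monotonicity one only has to verify the $k$-thinness condition for $T''\subseteq\tilde{T}\cap T_{e}$ with $\Card{T''}=k+1$; the window of $k$ positions of $T^{*}\cap T_{e}$ starting at the lowest position occupied by $T''$ contains some $r\in R$, and since $T''$ occupies $k+1$ positions its highest one lies strictly above that window, so the m-task $m\in\tilde{M}$ with bottleneck edge $e(r)$ satisfies $m\in M_{e}$ and $\min_{i\in T''}b(i)\le b(m)=b(r)<\max_{i\in T''}b(i)$. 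Thus it remains to produce $R\subseteq T^{*}$ hitting every such window with $w(R)\le\tfrac{\eps}{2}w(T^{*})$, since then $w(\tilde{T})\ge(1-\tfrac{\eps}{2})w(T^{*})\ge(1+\eps)^{-1}w(T^{*})$.

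To find $R$ I would write the window-hitting requirement as the integer program $\min\{\sum_{i}w(i)x_{i}:\sum_{i\in W}x_{i}\ge 1\text{ for every window }W,\ x\in\{0,1\}^{T^{*}}\}$ and exhibit a cheap solution. Following the chain of reductions announced in the introduction, one first cleans up $T^{*}$, losing only a $(1+\eps_{0})$ factor in weight — perturbing so that all relevant capacities are distinct and restructuring the instance — so that the surviving tasks of $T^{*}$ are realized as the directed paths of a rooted tree $\mathcal{T}$ in which, for every path-edge $e$, each window $W$ of $T^{*}\cap T_{e}$ is exactly the set of tasks whose directed path traverses one particular edge $\tau_{W}$ of $\mathcal{T}$. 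Then the constraint matrix of the program is, up to signs, a network matrix of $\mathcal{T}$ — its columns are the incidence vectors of the directed task-paths — hence totally unimodular, and with the integral box constraints $0\le x\le\mathbf{1}$ adjoined the whole system stays totally unimodular, so the LP relaxation attains its optimum at an integral point. Feasibility of the uniform point $x\equiv 1/k$ is immediate (on a constrained edge $\tau_{W}$ its value is $\tfrac{1}{k}\Card{W}=1$), and it has cost $\tfrac{1}{k}w(T^{*})$; choosing $k$ a large enough function of $\eps$ (and, through the preprocessing, of $\delta$) makes this at most $\tfrac{\eps}{2}w(T^{*})$, so an integral optimum of no larger cost exists and its support is the desired $R$. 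Tracking the $(1+\eps_{0})$ loss and reparametrising $\eps$ gives the lemma, with $k=k(\eps,\delta)$ the constant fixed here.

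The main obstacle is the construction of $\mathcal{T}$ with the stated tightness. On the raw instance the window-hitting program is \emph{not} totally unimodular: two path-edges can induce windows whose overlap pattern, once each is restricted to its own subsequence of the global bottleneck order, contains an odd cycle. The reductions must therefore reshape $T^{*}$ — discarding a small-weight fraction of tasks and adjusting subpaths and bottleneck edges — until the collection of all window constraints, simultaneously over all path-edges, becomes laminar and is carried by a single tree with each window equal to the task-set of one tree-edge. Establishing that such reductions exist, cost only a $(1+\eps)$ factor, preserve $\delta$-largeness and the bottleneck structure used above, and yield a tree that is tight in this sense, is where essentially all the work lies; the flow formulation, its total unimodularity, and the uniform fractional solution are then routine.
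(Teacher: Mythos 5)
Your setup matches the paper's: start from $T^{*}$, pick a low-weight $R\subseteq T^{*}$ to convert into m-tasks, argue feasibility by noting each m-task's demand $\delta u_{f}$ is dominated by the demand of any group member, and reduce $k$-thinness to a per-edge covering condition on consecutive tasks in bottleneck order. Those parts are essentially correct (one small caveat: several tasks on an edge can share the same bottleneck capacity, so the strict inequality $b(m)<\max_{i\in T''}b(i)$ can fail at a tie; the paper absorbs this by noting there are at most $1/\delta$ such tasks and paying an extra $1/\delta$ in the thinness parameter, and you would need the same adjustment).

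The genuine gap is that the entire technical core of the proof is deferred rather than supplied. You correctly observe that the raw window-hitting program is not totally unimodular and that one must ``reshape $T^{*}$ \ldots until the collection of all window constraints \ldots becomes laminar and is carried by a single tree,'' and you then state that establishing this ``is where essentially all the work lies.'' That work is the proof. The paper carries it out through a concrete chain: (i) split each segment at its bottleneck edge into a left-going and a right-going piece, so each half only needs to be covered separately (costing a factor $2$ in $k$); (ii) for the right-going pieces, lift each segment to height $b(i)+M\cdot s(i)+\eps\cdot i$ and extend it to the leftmost edge, which makes the vertical order of segments on any edge coincide with a globally consistent order and makes all segments share a common left endpoint; (iii) build an explicit hierarchical decomposition tree $\cD$ whose nodes carry disjoint ``representative sets'' of between $k/2$ and $k-1$ segments per edge, and prove that hitting every representative set already gives $2k$-thinness. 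Only after (i)--(iii) does one get a covering LP whose constraint matrix is the arc--path incidence matrix of directed paths in a directed tree, hence totally unimodular, with the uniform $2/k$ point feasible because every arc carries at least $k/2$ paths. Note also that the paper does not make each \emph{window} a tree edge, as your plan requires; it settles for the weaker (and achievable) property that a partition into $\Theta(k)$-sized representative sets is tree-structured, and pays constant factors in $k$ for the slack. Without an explicit construction of this kind, the claim that a suitable tree $\mathcal{T}$ exists is unsupported, so the proof is incomplete.
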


However, we are not able to compute the most profitable $k$-thin maze
pair in polynomial time. For this reason we relax the notion of
feasibility of a maze pair $(T',M')$ so that $T'\cup M'$ might not be
feasible, but still $T'$ alone is feasible (which is sufficient for our
purposes). We need some definitions first. For every m-task $m\in M$
and any subset of tasks $T'$, we partition the set 
$T'(m):=\{i\in T' : P(i)\cap P(m)\ne\emptyset\}$ of tasks of $T'$ sharing some edge with $m$ into three (disjoint) subsets:

\begin{itemize}\itemsep0pt
\item[$\bullet$] {\bf (above tasks)} $\abv(m,T'):=\{i\in T'(m) : b(i)> b(m)\}$.
\item[$\bullet$] {\bf (critical tasks)} $\crit(m,T'):=\{i\in T'(m) : b(m)\geq b(i)\geq \frac{\delta}{2}b(m)\}$.
\item[$\bullet$] {\bf (subcritical tasks)} $\subc(m,T'):=\{i\in T'(m) : b(i)< \frac{\delta}{2}b(m)\}$.
\end{itemize}
We also define $\abv_e(m,T'):=\abv(m,T')\cap T_e$, and we define
analogously $\crit_e(m,T')$ and $\subc_e(m,T')$.
\begin{defn}[Weak feasibility]\label{dfn:weak-feasibility}
We define a maze pair $(T',M')$ to be \emph{weakly feasible} if for every edge $e$ it holds that $d(\abv_e(m_e,T'))+d(\crit_e(m_e,T'))+d(m_e)\leq u_e$, where
$m_e$ is the m-task in $M'\cap M_e$ of largest bottleneck capacity, or 
$d(T'\cap T_e)\leq u_e$, if $M'\cap M_e=\emptyset$.
\end{defn}
Next we show that weak feasibility of a maze pair $(T',M')$ implies feasibility of $T'$.
\begin{lem}\label{lem:weak-feas-global-feas}
Let $(T',M')$ be a weakly feasible maze pair. Then $T'$ is feasible.
\end{lem}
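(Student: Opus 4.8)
The goal is to show that weak feasibility of $(T',M')$ forces $d(T'\cap T_e)\le u_e$ on every edge $e$. Fix an edge $e$. If $M'\cap M_e=\emptyset$ the definition of weak feasibility directly gives $d(T'\cap T_e)\le u_e$, so assume $M'\cap M_e\neq\emptyset$ and let $m_e$ be the m-task in $M'\cap M_e$ of largest bottleneck capacity. The decomposition $T'\cap T_e = \abv_e(m_e,T')\cup\crit_e(m_e,T')\cup\subc_e(m_e,T')$ into the above, critical, and subcritical tasks that also use $e$ is a partition (every task in $T'\cap T_e$ shares edge $e$ with $m_e$, hence lies in $T'(m_e)$). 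Weak feasibility already controls $d(\abv_e(m_e,T'))+d(\crit_e(m_e,T'))+d(m_e)\le u_e$, so it remains to bound the total demand of the subcritical tasks using $e$ by $d(m_e)=\delta\cdot u_e$.

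First I would argue that every subcritical task $i\in\subc_e(m_e,T')$ is entirely contained in the subpath $P(m_e)$. Indeed, $b(i)<\frac{\delta}{2}b(m_e)\le b(m_e)$, so the bottleneck edge $e(i)$ of $i$ has capacity strictly smaller than $b(m_e)=u_{e(m_e)}$; since $i$ and $m_e$ share edge $e$, the subpath $P(i)$ cannot extend past the bottleneck edge $e(m_e)$ of $m_e$ in either direction — crossing $e(m_e)$ would force $b(i)\le u_{e(m_e)}$ only, which is not yet a contradiction, so here I need the stronger structural fact coming from how m-tasks are defined. The right statement is: $P(m_e)$ is the union of two task-subpaths sharing the bottleneck edge $e(m_e)$, and any task $i$ using $e$ with $b(i)<b(m_e)$ whose path leaves $P(m_e)$ would have to pass through a vertex that is an endpoint of one of those two constituent tasks; by the preprocessing (each vertex is the endpoint of exactly one task, and m-tasks with distinct bottleneck capacities do not share endpoints) one derives that such an $i$ cannot exist unless it stays within $P(m_e)$. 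I expect this containment step to be the main obstacle, as it is where the precise construction of m-tasks and the endpoint-uniqueness preprocessing must be invoked carefully.

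Granting that $P(i)\subseteq P(m_e)$ for every subcritical $i\in\subc_e(m_e,T')$, and in particular $P(i)$ contains the bottleneck edge $e(i)$ with $u_{e(i)}=b(i)<\frac{\delta}{2}b(m_e)$, I would then bound the number of such tasks crossing $e$. Here is where $k$-thinness enters indirectly, but more directly one can argue via the bottleneck edges: the subcritical tasks using $e$ all have their bottleneck edges inside $P(m_e)$, and $m_e$ is by hypothesis the maximum-bottleneck m-task of $M'\cap M_e$; if there were more than $k$ subcritical tasks on $e$, $k$-thinness would produce an m-task $m\in M'\cap M_e$ with $\min b(i)\le b(m)<\max b(i)$, but all these $b(i)$ are $<\frac{\delta}{2}b(m_e)$, so $b(m)<b(m_e)$, consistent with $m_e$ being the largest — so this does not immediately cap the count. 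The cleaner route is purely demand-based: each subcritical task $i$ has $d(i)\le b(i)<\frac{\delta}{2}b(m_e)$, and I claim at most two subcritical tasks can simultaneously use $e$ with large demand — no: instead, observe that a subcritical task $i$ also satisfies $d(i)\le b(i)=u_{e(i)}$, and distinct subcritical tasks on $e$ have nested or disjoint paths forcing a telescoping bound. Ultimately I would establish $d(\subc_e(m_e,T'))\le \delta\cdot u_e = d(m_e)$ either by this containment-plus-counting argument or by a direct appeal to the part of the construction that $d(m)=\delta u_e$ was chosen precisely to absorb. Combining with the weak-feasibility inequality yields $d(T'\cap T_e)=d(\abv_e)+d(\crit_e)+d(\subc_e)\le (u_e-d(m_e))+d(m_e)=u_e$, completing the proof.
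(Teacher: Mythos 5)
Your setup is correct: you reduce the problem to showing $d(\subc_e(m_e,T'))\le d(m_e)$, and you rightly observe that weak feasibility already controls the above and critical tasks. But the proposal never actually establishes the subcritical bound --- you sketch two candidate routes and explicitly abandon both. The first route (every subcritical task satisfies $P(i)\subseteq P(m_e)$) is false in general: a task with a very low bottleneck edge far outside $P(m_e)$ can still cross $e$ and extend past either end of $P(m_e)$, and no preprocessing fact forbids this; in any case the containment is not what is needed. The second route ($k$-thinness or a purely local demand count at $e$) cannot work in principle, because weak feasibility at the edge $e$ says nothing whatsoever about the subcritical tasks there, so no argument confined to the single edge $e$ can bound their number or total demand. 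Also note $d(m_e)=\delta\, b(m_e)$, not $\delta\, u_e$ as you wrote: the bottleneck edge of $m_e$ need not be $e$.

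The missing idea is a global induction on the edges sorted by increasing capacity, proving $d(T'\cap T_{e_j})\le u_{e_j}$ in that order. At edge $e_j$ with $m_j=m_{e_j}$, no subcritical task has bottleneck edge $e_j$ itself (its bottleneck capacity is below $\frac{\delta}{2}b(m_j)\le u_{e_j}$), so one splits the subcritical tasks into those with bottleneck edge to the left of $e_j$ and those with bottleneck edge to the right. All tasks in the left group use the bottleneck edge $e_L$ of the member $i_L$ of largest bottleneck capacity, and $u_{e_L}=b(i_L)<\frac{\delta}{2}b(m_j)<u_{e_j}$, so the induction hypothesis already gives feasibility at $e_L$ and hence $d(SC_L)\le d(T'\cap T_{e_L})\le u_{e_L}<\frac{\delta}{2}b(m_j)$; symmetrically for the right group. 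Summing yields $d(\subc_{e_j}(m_j,T'))<\delta\, b(m_j)=d(m_j)$, which is exactly the quantity the m-task was designed to absorb, and the lemma follows by combining with the weak-feasibility inequality as you indicated. This induction on capacity order is the essential ingredient your proof lacks, and it cannot be replaced by any edge-local argument.
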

\begin{proof}
We order the edges by their capacities in non-decreasing order. Assume
w.l.o.g.~that this order is given by $\{e_{1},...,e_{m}\}$. 
We prove the claim by induction on the index~$j$. More precisely,
we prove that $d(T' \cap T_{e_j}) \le u_{e_j}$ for all $j$.

Consider first $e_1$. If there is no m-task using $e_1$, then the claim
is true by definition. Otherwise let $m_1=m_{e_1}$ be the (only) m-task in $M'\cap M_{e_1}$. All tasks $i\in T'\cap T_{e_1}$ must have $b(i)=u_{e_1}$ (in particular, they are critical for $m_1$). Thus 
$d(T'\cap T_e)\leq d(T'\cap T_e)+d(m_{1})=d(\abv_{e_1}(m_{1},T'))+d(\crit_{e_1}(m_{1},T'))+d(m_{1})\leq u_e$.

Now suppose by induction that there is a value $j\in\mathbb{N}$ such
that $d(T'\cap T_{e_{j'}})\le u_{e_{j'}}$ for all
$j'\in\{1,...,j-1\}$. Consider the edge $e_{j}$. Once again if there is no m-task using $e_j$, then the claim is true by definition.
Otherwise, let $m_j:=m_{e_j}$.
Consider the subcritical tasks $SC:=\subc_{e_j}(m_j,T')$. By definition, $e_j$ is not the bottleneck edge of any task in $SC$.
We partition $SC$ into the sets $SC_{L}$ and $SC_{R}$, containing
the tasks with bottleneck edge on the left of $e_j$ and on the right
of $e_j$, respectively. Consider the set $SC_{L}$. Let $i_L\in SC_{L}$
be a task with maximum bottleneck capacity in $SC_{L}$ and let $e_L$
all tasks
in $SC_{L}$ use $e_{L}$ and $u_{e_{L}}=b(i_{L})<\frac{\delta}{2}\cdot b(m_j)$.
Using the induction hypothesis on $e_L$, we obtain that $d(SC_{L})=d(SC_{L}\cap T_{e_{L}})\le d(T'\cap T_{e_{L}})\le u_{e_{L}}<\frac{\delta}{2}\cdot b(m_j)$.
Similarly, we obtain that $d(SC_{R})<\frac{\delta}{2}\cdot b(m_j)$.
Since $d(m_j)=\delta\cdot b(m_j)$ the m-task $m$ compensates for all
tasks in $SC$, that is, $d(SC)=d(\subc_{e_j}(m_j,T'))\leq d(m_j)$. Hence 
\begin{eqnarray*}
d(T'\cap T_{e_{j}}) & \hspace{-10pt}= & \hspace{-8pt} d(\abv_{e_j}(m_j,T'))+d(\crit_{e_j}(m_j,T'))\\
& &\hspace{-10pt} + \; d(\subc_{e_j}(m_j,T')) \\
& \hspace{-10pt}\leq &\hspace{-8pt} d(\abv_{e_j}(m_j,T'))+d(\crit_{e_j}(m_j,T'))+d(m_j)\\
& \hspace{-10pt}\leq &\hspace{-8pt} u_{e_j},
\end{eqnarray*}
where the last inequality follows from the weak feasibility of $(T',M')$.
\end{proof}

Note that, by definition, the maze pair $(\tilde{T},\tilde{M})$ obtained in
Lemma~\ref{lem:apx} is also weakly feasible.
In Section~\ref{sec:DP} we present a polynomial-time dynamic program
that computes the weakly feasible $k$-thin 
maze pair with highest profit.

\begin{lem}\label{lem:DP}
For any constants $\delta,k>0$, $k\in \mathbb{N}$, there is a polynomial-time dynamic program that computes a weakly
feasible $k$-thin maze pair $(T',M')$ of largest profit $w(T')$.
\end{lem}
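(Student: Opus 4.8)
The plan is to design a dynamic program that explores the planar region bounded by the capacity curve, following the maze carved out by a candidate set of m-tasks, and that decides the pair $(T',M')$ as it advances, reporting only $w(T')$ in the end. First I make the maze structure precise. Fix a maze pair $(T',M')$. Each $m\in M'$ is a horizontal segment at height $b(m)$ over $P(m)$ which coincides with the capacity curve over its bottleneck edge $e(m)$ (where $u_{e(m)}=b(m)$ is the unique minimum capacity of $P(m)$) and lies strictly below the curve over every other edge of $P(m)$. Adding these ``shelves'' to the region bounded by the curve subdivides it into \emph{passages}; a single shelf $m$ produces an ``above-left'', an ``above-right'' and a ``below'' passage, the first two reachable from the third only around the two ends of $m$. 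Since the m-tasks of $M'$ have pairwise distinct bottleneck capacities and, by the preprocessing, no two of distinct bottleneck capacity share an endpoint, the passages with these adjacencies form a rooted tree whose root is the passage containing the bottom-left corner (this is also the viewpoint of Section~\ref{sec:construct-maze}). The DP traverses this tree from the root towards the leaves, i.e.\ the dead ends of the maze.

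A cell corresponds to a passage $P$ (refined, if needed, by a sweep position inside $P$). A passage is specified by $O(1)$ combinatorial data — the shelves or curve pieces forming its floor and side walls, equivalently two delimiting vertices together with, over the relevant bottleneck edge, the two consecutive m-task heights between which $P$ lives — which is $n^{O(1)}$ possibilities; in addition the cell stores the set $X$ of tasks of $T'$ whose segments cross a distinguished entrance gate of $P$, its opening towards the parent. The point is that $|X|\le k$: the tasks crossing any vertical gate of $P$ over an edge $e$ are tasks of $T'\cap T_e$ whose bottleneck capacities lie strictly between the heights of the floor and the ceiling of $P$ over $e$, i.e.\ between two consecutive elements of $\{b(m):m\in M'\cap M_e\}\cup\{0,u_e\}$; hence no m-task of $M'\cap M_e$ straddles their bottleneck range, and $k$-thinness caps their number at $k$. (Taking $M'\cap M_e=\emptyset$ in the same argument also shows $|T'\cap T_e|\le k$ on any edge with no m-task, so the ``naked'' columns cause no difficulty.) Therefore there are $n^{O(k)}$ cells, and the value stored at a cell is the maximum profit of the tasks of $T'$ contained in the subtree of passages below $P$, over all completions that agree with $X$ and are weakly feasible on the edges that are ``closed off'' within that subtree.

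From a cell we enumerate: the shelves bounding the children passages — a constant number of child branches, each shelf drawn from some $M_e$, giving $n^{O(1)}$ choices; the tasks of $T'$ whose segments lie in $P$ but in none of its children, again $O(k)$ many by the argument above and hence $n^{O(k)}$ choices; and a bounded amount of load bookkeeping for the $O(k)$ tasks crossing the gates currently under consideration. For each choice we verify that every task of $X$ either ends inside $P$ or is routed consistently into exactly one child; that local $k$-thinness holds; and the portion of weak feasibility (Definition~\ref{dfn:weak-feasibility}) that becomes determined here, namely, for each edge $e$ whose ceiling over $P$ is the true capacity $u_e$ — so that the floor shelf of $P$ over $e$ is forced to be the topmost m-task $m_e\in M'\cap M_e$ — that $d(\abv_e(m_e,T'))+d(\crit_e(m_e,T'))+d(m_e)\le u_e$, resp.\ $d(T'\cap T_e)\le u_e$ when $M'\cap M_e=\emptyset$, the summands being available from $X$, from the locally chosen tasks, from the children's values, and from the bookkeeping. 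The value of the cell is the best over all choices of the locally chosen profit plus the sum of the child values; leaves contribute only their local profit. The answer is the value of the root cell, and evaluating the $n^{O(k)}$ cells from the leaves to the root takes polynomial time.

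For correctness two directions are needed: every weakly feasible $k$-thin maze pair decomposes along its passage tree into a legal assignment of cells and transitions of total value $w(T')$ (so the DP value is at least the optimum), and every legal execution of the DP reconstructs a weakly feasible $k$-thin maze pair (so it does not overcount). The first direction is the hard part, and the delicate points are: (a) proving that the passage subdivision really is a rooted tree and fixing a traversal in which every task segment and every m-task is charged to exactly one cell or transition — the care being needed exactly where a shelf meets the ceiling and where task endpoints fall; (b) composing the local weak-feasibility checks into the global statement of Definition~\ref{dfn:weak-feasibility}, which is precisely where the ``forgetting'' of subcritical tasks must be paid for: once the subtree below an m-task $m$ is sealed off, its tasks vanish from all ancestor interfaces, and — exactly as in the proof of Lemma~\ref{lem:weak-feas-global-feas}, but now built into the cell semantics — one must guarantee that on every edge the forgotten subcritical demand is absorbed by $d(m)=\delta\cdot b(m)$ through the recursive bound over edges of smaller capacity; and (c) keeping both the interface and the locally relevant edge set of size $O(k)$ so that every transition is polynomial-time enumerable. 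Once these structural facts are in place, both directions, and hence the lemma, follow routinely.
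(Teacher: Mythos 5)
Your overall architecture --- a DP that traverses the passage tree of the maze carved out by the m-tasks, with cells indexed by a passage refined by a sweep position and carrying an interface of at most $k$ gate-crossing tasks --- is the same as the paper's (whose cells $(e,m_{\up},C_{\up},m_{\dn},C_{\dn},B)$ are exactly ``a passage between two consecutive shelves $m_\dn,m_\up$ at sweep position $e$, plus interface data'', and whose single/top-bottom/left-right branchings are your sweep advance, new shelf, and go-around-the-end transitions). However, there is a concrete gap in how you propose to verify weak feasibility, and it is not a detail that ``follows routinely.'' The check of Definition~\ref{dfn:weak-feasibility} at an edge $e$ needs $d(\crit_e(m_e,T'))$, the total demand of the tasks critical for the topmost m-task $m_e$ on $e$. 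These tasks have bottleneck capacity in $[\tfrac{\delta}{2}b(m_e),\,b(m_e)]$, so they live in passages \emph{below} $m_e$, i.e.\ in the subtree that your DP has already sealed off; they are not among the at most $k$ tasks crossing the entrance gate of the passage above $m_e$, and your children's cells return only a profit value, which carries no information about which critical tasks were selected down there. Storing their total demand as a number would blow up the state space (demands are arbitrary), so ``a bounded amount of load bookkeeping'' is not justified as stated. The paper's resolution is to put the \emph{explicit sets} $C_\up,C_\dn$ of critical tasks for the two bounding m-tasks into every cell, to prove separately (Lemma~\ref{lem:critical}) that a weakly feasible pair has at most $ncrit(\delta)=\tfrac{4}{\delta^2}+\tfrac{1}{\delta}$ critical tasks per m-task so that this guess is polynomial, and to enforce consistency of these guesses across transitions via the inclusion and compatibility properties. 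None of this machinery appears in your cell definition, and without it the local feasibility checks you describe cannot be composed into the global statement --- which is exactly your deferred point (b), and is where the paper spends three inductive lemmas with extensive case analyses.

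A smaller misplacement: the compensation of forgotten \emph{subcritical} demand by $d(m)=\delta\cdot b(m)$ does not need to be ``built into the cell semantics'' at all. The DP only ever checks the weak-feasibility inequality, which by design omits subcritical tasks; the argument that the m-task's demand absorbs them is done once, outside the DP, in Lemma~\ref{lem:weak-feas-global-feas} (weak feasibility of $(T',M')$ implies feasibility of $T'$). Separating these two concerns is what makes the cell interface finite; trying to fold the subcritical accounting into the recursion, as you suggest, would reintroduce unbounded state.
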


A crucial property that we exploit in the design of our dynamic program
is that for each m-task in a weakly feasible maze pair
the number of critical tasks is bounded by a constant depending only on $\delta$.
\begin{lem}\label{lem:critical}
Let $(T',M')$ be a weakly feasible maze pair and $m\in M'$. It holds that
$\Card{\crit(m,T')}\leq ncrit(\delta):=\frac{4}{\delta^2}+\frac{1}{\delta}$.

\end{lem}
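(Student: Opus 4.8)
The plan is to partition $\crit(m,T')$ according to where the bottleneck edge $e(i)$ of a critical task $i$ lies relative to the subpath $P(m)$ --- inside $P(m)$, strictly to its left, or strictly to its right --- and to bound each part using only the feasibility of $T'$ granted by Lemma~\ref{lem:weak-feas-global-feas}. Two simple observations will be used throughout. Writing $e:=e(m)$, first note that $b(m)=u_e=\min\{u_f : f\in P(m)\}$: by construction $P(m)=P(i_0)\cup P(j_0)$ for tasks $i_0,j_0$ sharing the bottleneck edge $e$, and since $P(i_0)$ and $P(j_0)$ both contain $e$, the minimum capacity along their union is $u_e$. Second, because the instance is $\delta$-large and every $i\in\crit(m,T')$ has $b(i)\ge\tfrac{\delta}{2}b(m)$, such a task satisfies $d(i)\ge\delta\,b(i)\ge\tfrac{\delta^2}{2}b(m)$.

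First I would dispose of the critical tasks $i$ with $e(i)\in P(m)$. For such a task $b(i)=u_{e(i)}\ge u_e$ by the first observation, while $b(i)\le b(m)=u_e$ because $i$ is critical; hence $u_{e(i)}=u_e$, and since edge capacities are distinct, $e(i)=e$. Thus all these tasks use $e$, and each is $\delta$-large with bottleneck capacity exactly $u_e$, so $d(i)\ge\delta u_e$. Feasibility of $T'$ on $e$ then gives at most $1/\delta$ of them.

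Next I would handle the set $C_L$ of critical tasks whose bottleneck edge lies strictly to the left of $P(m)$ (the ``strictly to the right'' case being symmetric). Assuming $C_L\neq\emptyset$, choose $i^\star\in C_L$ whose bottleneck edge $g:=e(i^\star)$ is as far to the right as possible. The key claim is that every $i\in C_L$ uses $g$: $P(i)$ contains $e(i)$, which sits not to the right of $g$, so $P(i)$ extends at least as far left as $g$; and since $i$ shares an edge with $P(m)$ while $g$ lies strictly to the left of $P(m)$, the path $P(i)$ must also reach past $g$ on the right. Hence $g\in P(i)$. Now $u_g=b(i^\star)\le b(m)$ since $i^\star$ is critical, so feasibility of $T'$ on $g$ yields $\sum_{i\in C_L}d(i)\le u_g\le b(m)$, and combined with $d(i)\ge\tfrac{\delta^2}{2}b(m)$ this gives $\Card{C_L}\le 2/\delta^2$; the symmetric argument bounds the critical tasks with bottleneck edge strictly right of $P(m)$ by $2/\delta^2$ as well. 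Adding the three bounds gives $\Card{\crit(m,T')}\le\tfrac1\delta+\tfrac{2}{\delta^2}+\tfrac{2}{\delta^2}=ncrit(\delta)$.

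I expect the only real obstacle to be the choice of the common edge $g$ in the left/right case. The obvious candidate, the leftmost (resp.\ rightmost) edge of $P(m)$ --- which every task of $C_L$ does traverse --- is useless here, since its capacity may be far larger than $b(m)$ and the demand lower bound $\tfrac{\delta^2}{2}b(m)$ would then yield no bound on $\Card{C_L}$. Taking instead the bottleneck edge of the extremal task $i^\star$ pins the relevant capacity at $\le b(m)$, which is exactly what makes the counting work; the remaining verifications are elementary bookkeeping about intervals on a path.
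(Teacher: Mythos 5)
Your proof is correct and follows essentially the same route as the paper's: isolate the at most $1/\delta$ critical tasks whose bottleneck coincides with $e(m)$, then for each side pick an extremal task whose bottleneck edge $g$ (of capacity at most $b(m)$) is traversed by all critical tasks on that side, and divide the capacity of $g$ by the demand lower bound $\tfrac{\delta^2}{2}b(m)$. The only cosmetic difference is that you partition by the location of the bottleneck edge and select the task with the rightmost (resp.\ leftmost) bottleneck edge, whereas the paper partitions by which extreme edge of $P(m)$ is used and selects the task of maximum bottleneck capacity --- these extremal choices coincide, and both arguments go through.
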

\begin{proof}
First recall that by Lemma~\ref{lem:weak-feas-global-feas}, $T'$ is a
feasible solution. Consider the tasks $i\in\crit(m,T')$ with $b(i)=b(m)$. Because all tasks are $\delta$-large, there can be
at most $1/\delta$ such tasks. The remaining tasks $i\in \crit(m,T')$ have $b(i)<b(m)$ and must use the leftmost edge $e_L$ of
$P(m)$ or the rightmost edge $e_R$ of $P(m)$ (or both). Consider the
tasks $C_L$ of the first type: we will show that $|C_L|\leq 2/\delta^2$.
A symmetric argument holds for the remaining tasks $C_R$, hence giving
the claim. Consider the task $i_L\in C_L$ that has the largest 
$b(i_L)$. By the definition of $C_L$ and $i_L$ all tasks in $C_L$ must use $e(i_L)$ and $b(i_L)\le b(m)$. Each task $i\in C_L$ is critical
for $m$ and thus $b(i)\geq \frac{\delta}{2}b(m)$. Also, $i$ is
$\delta$-large and so $d(i)\geq \delta b(i)\geq \frac{\delta^2}{2}b(m)$. Therefore, there can be at most
$b(i_L)/(\frac{\delta^2}{2}b(m))\leq \frac{2}{\delta^2}$ such tasks.
\end{proof}

By combining Lemmas~\ref{lem:apx}, \ref{lem:weak-feas-global-feas}
and~\ref{lem:DP} we obtain the main theorem of this paper.

\begin{thm}\label{thr:ptas}
For any constant $\delta>0$, there is a PTAS for $\delta$-large
instances of UFP.
\end{thm}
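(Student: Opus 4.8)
The plan is to assemble the theorem directly from Lemmas~\ref{lem:apx}, \ref{lem:weak-feas-global-feas} and~\ref{lem:DP}, which between them do all of the heavy lifting; at the level of the theorem itself this is only a bookkeeping step. Fix the target accuracy $\eps>0$ and the instance parameter $\delta>0$. First I would invoke Lemma~\ref{lem:apx} to obtain the integer $k=k(\eps,\delta)$: this is the step where the maze construction guarantees that, at a loss of only a $(1+\eps)$ factor in profit, the optimum $T^*$ can be traded for a $k$-thin maze pair $(\tilde T,\tilde M)$ with $\tilde T\cup\tilde M$ feasible. As noted immediately after that lemma, feasibility of $\tilde T\cup\tilde M$ implies weak feasibility of $(\tilde T,\tilde M)$ in the sense of Definition~\ref{dfn:weak-feasibility}: on each edge $e$ the subcritical demand $d(\subc_e(m_e,\tilde T))$ is nonnegative, so discarding it from the global feasibility inequality only makes the weak-feasibility inequality easier to satisfy. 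Hence $(\tilde T,\tilde M)$ is a weakly feasible $k$-thin maze pair with $w(\tilde T)\ge (1+\eps)^{-1}w(T^*)$.

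Second, I would run the dynamic program of Lemma~\ref{lem:DP} with this value of $k$ and with the given $\delta$; since both $\eps$ and $\delta$ are constants, so is $k$, and the lemma's polynomial running-time bound is therefore uniform in the input size~$n$. The DP returns a weakly feasible $k$-thin maze pair $(T',M')$ maximizing $w(T')$ over all such pairs. Because $(\tilde T,\tilde M)$ is itself a candidate in this family, optimality yields $w(T')\ge w(\tilde T)\ge (1+\eps)^{-1}w(T^*)$. I would then output $T'$ alone (the m-tasks in $M'$ carry zero profit in any case). By Lemma~\ref{lem:weak-feas-global-feas}, weak feasibility of $(T',M')$ implies that $T'$ is a feasible UFP solution. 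Thus $T'$ is a feasible set of tasks with $w(T')\ge (1+\eps)^{-1}w(T^*)$, computed in polynomial time; since $\eps>0$ was arbitrary this is a PTAS for $\delta$-large instances.

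The only point that needs care at this level is precisely the one just flagged — that $k$ depends on $\eps,\delta$ only, so that Lemma~\ref{lem:DP}'s per-$k$ polynomial bound does not hide a dependence on $n$ in the exponent's constant. The genuinely substantive obstacles lie entirely in the two invoked lemmas and are deferred to the later sections: proving existence of a profitable $k$-thin weakly feasible pair (the chain of reductions ending in the totally-unimodular min-flow LP behind Lemma~\ref{lem:apx}), and designing a dynamic program whose cell count stays polynomial while enforcing both $k$-thinness and weak feasibility (Lemma~\ref{lem:DP}, where Lemma~\ref{lem:critical}'s bound $\CritNum(\delta)=\tfrac{4}{\delta^2}+\tfrac1\delta$ on the number of critical tasks per m-task is what makes this possible). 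None of that is re-derived in the proof of the theorem.
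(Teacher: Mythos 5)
Your proof is correct and follows exactly the paper's intended route: the paper proves the theorem by the same combination of Lemma~\ref{lem:apx} (existence of a profitable weakly feasible $k$-thin maze pair), Lemma~\ref{lem:DP} (the DP computes the best such pair in polynomial time), and Lemma~\ref{lem:weak-feas-global-feas} (weak feasibility of the output pair implies feasibility of $T'$). Your added remark that $k$ depends only on $\eps$ and $\delta$, so the DP's running time stays uniformly polynomial, is the right point to flag and matches the paper's setup.
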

Combining Theorem \ref{thr:ptas} with Lemma~\ref{lem:ptasSmall}, we obtain the following corollary.
\begin{cor}
There is a polynomial time $(2+\eps)$ approximation algorithm
for UFP. 
\end{cor}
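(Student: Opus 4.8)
The plan is to combine the two approximation schemes already available — the PTAS for $\delta$-large instances from Theorem~\ref{thr:ptas} and the PTAS for $\delta$-small instances from Lemma~\ref{lem:ptasSmall} — via a split of an arbitrary instance along the $\delta$-small/$\delta$-large dichotomy, losing only a factor $2$ in the process. Fix the target accuracy $\eps>0$ and set $\eps':=\eps/2$. First I would apply Lemma~\ref{lem:ptasSmall} with parameter $\eps'$ to fix a constant $\delta\in(0,1]$ for which $\delta$-small instances of UFP admit a polynomial-time $(1+\eps')$-approximation; this very $\delta$ is then used as the (constant) parameter fed into Theorem~\ref{thr:ptas}.

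Given an arbitrary UFP instance with task set $T$ and optimal solution $T^*$, I would partition $T$ into the $\delta$-small tasks $T_S$ and the $\delta$-large tasks $T_L$. The point is that a task's status (small versus large) is determined solely by the edge capacities along its subpath together with its demand, so it does not change when other tasks are deleted: the sub-instance induced by $T_S$ (retaining the original capacities) is a $\delta$-small instance of UFP, and the one induced by $T_L$ is a $\delta$-large instance. Furthermore $T^*\cap T_S$ is feasible for the former sub-instance and $T^*\cap T_L$ is feasible for the latter, so their respective optima have weight at least $w(T^*\cap T_S)$ and $w(T^*\cap T_L)$.

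Then I would run the $\delta$-small PTAS on the sub-instance induced by $T_S$ and the $\delta$-large PTAS on the sub-instance induced by $T_L$, and output whichever of the two computed solutions has larger weight. Both runs take polynomial time because $\delta$ and $\eps'$ (hence also the constant $k$ of Theorem~\ref{thr:ptas}) are fixed constants. The two solutions have weight at least $(1+\eps')^{-1}w(T^*\cap T_S)$ and $(1+\eps')^{-1}w(T^*\cap T_L)$ respectively; since $w(T^*\cap T_S)+w(T^*\cap T_L)=w(T^*)$, one of the two quantities is at least $\tfrac12 w(T^*)$, so the output has weight at least $\tfrac{1}{2(1+\eps')}w(T^*)$. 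This is a $2(1+\eps')=(2+\eps)$-approximation, as required.

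I do not expect a genuine obstacle here: the argument is a clean ``split and take the better half'' reduction. The only points requiring any care are that the $\delta$-small/$\delta$-large split is well defined on sub-instances — which holds because bottleneck capacities are invariant under task deletion — and that the two pieces of $T^*$ remain feasible in the corresponding sub-instances, which is immediate. If one prefers to avoid even the trivial ``take the max'' step, an equivalent formulation guesses in advance which of $w(T^*\cap T_S)$, $w(T^*\cap T_L)$ dominates and loses the factor $2$ there instead.
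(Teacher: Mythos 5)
Your proposal is correct and is exactly the argument the paper intends when it says the corollary follows by ``combining'' Theorem~\ref{thr:ptas} with Lemma~\ref{lem:ptasSmall}: split the tasks into $\delta$-small and $\delta$-large, run the respective PTAS on each sub-instance, and return the better of the two solutions, losing a factor of $2$. No further comment is needed.
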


\section{A Thin Profitable Maze Pair}\label{sec:construct-maze}

In this section we prove Lemma~\ref{lem:apx}:
we present a procedure that, given an (optimal) solution $T^*$,
carefully selects some of the
tasks from $T^*$ and replaces them with m-tasks from $M$ that use the
same edges. The tasks are selected in such a way that the tasks removed
from $T^*$ have small weight, and at the same time the 
resulting maze pair is $k$-thin.
Hence, our proof is constructive and even leads to a polynomial time
algorithm; note however that for our purposes a non-constructive
argumentation would be sufficient.

\begin{figure*}
\centering
\subfigure[\label{fig:maze-construction-initial}Initial instance. The nubers identify some of the
  tasks.]{\scalebox{.13}{\input{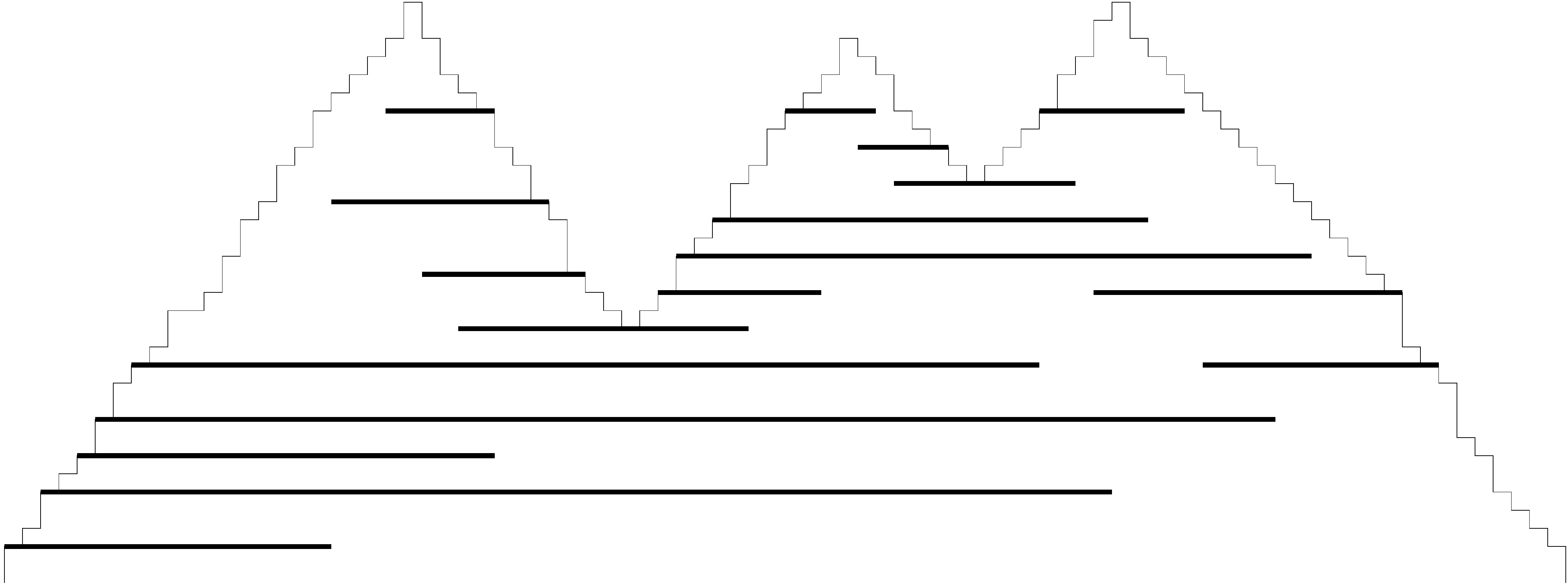_t}}}\hspace{.5cm}
\subfigure[\label{fig:maze-construction-split}Splitting the line segments to the ones in $L_{R}$ (in bold)
  and to $L_{L}$
  (dashed).]{\scalebox{.13}{\input{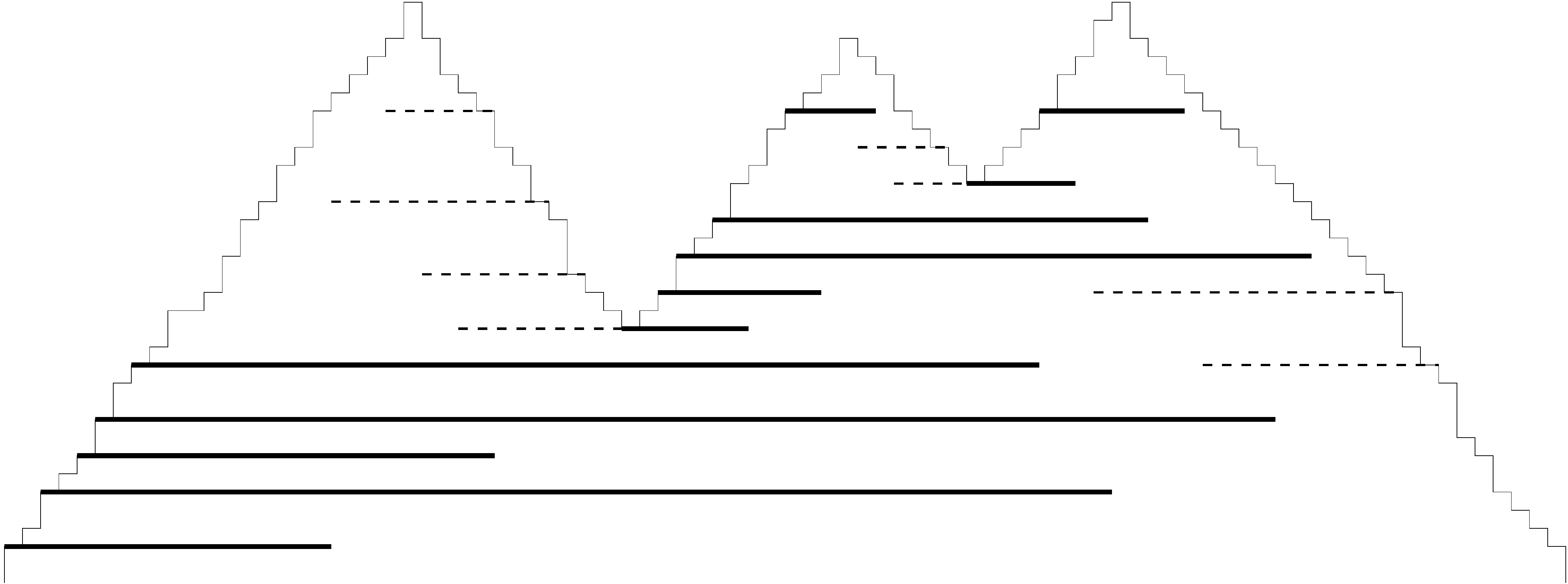_t}}}\hspace{.5cm}

\mbox{

\subfigure[\label{fig:maze-construction-shift}Line segments in $L_{R}$ shifted up (distances distorted) and extended to the
left (in dashed).]{\scalebox{.15}{\input{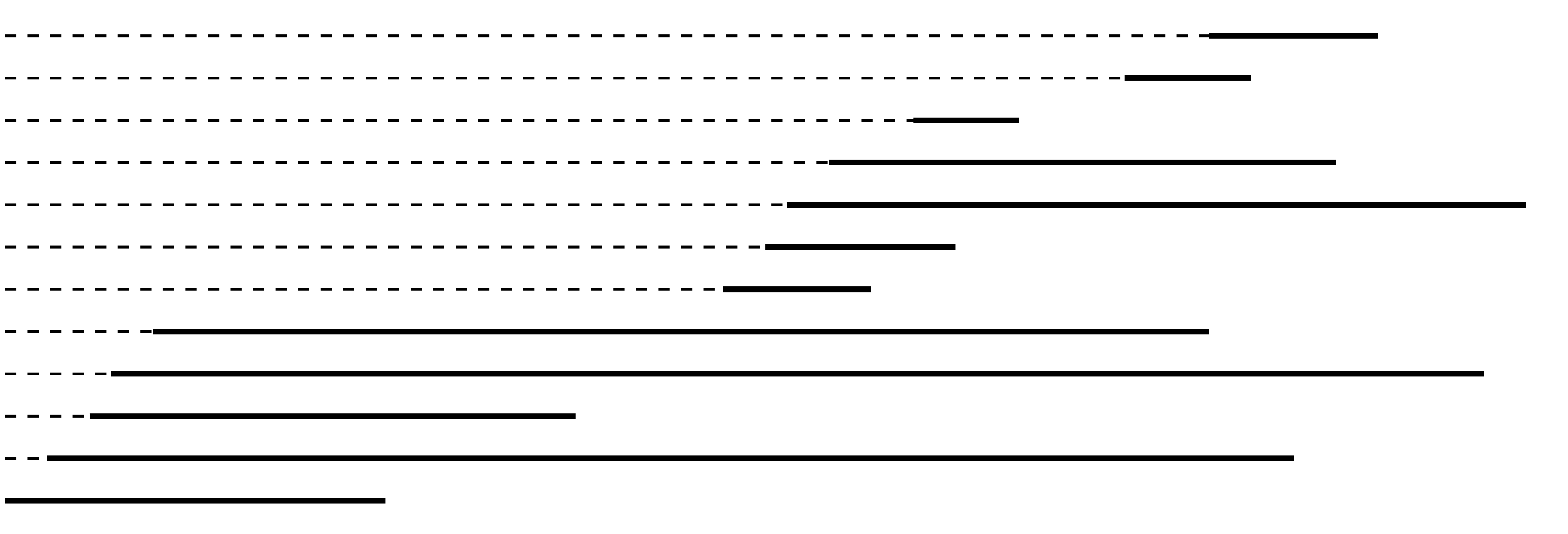_t}}}

\subfigure[\label{fig:maze-construction-decomposition}
Decomposition of the segments for $k=4$. The dashed lines indicates the values ($y_0$ in
the text) where the respective interval $I_w$ is split. 
]
 {\hspace{.7cm}\scalebox{.22}{\input{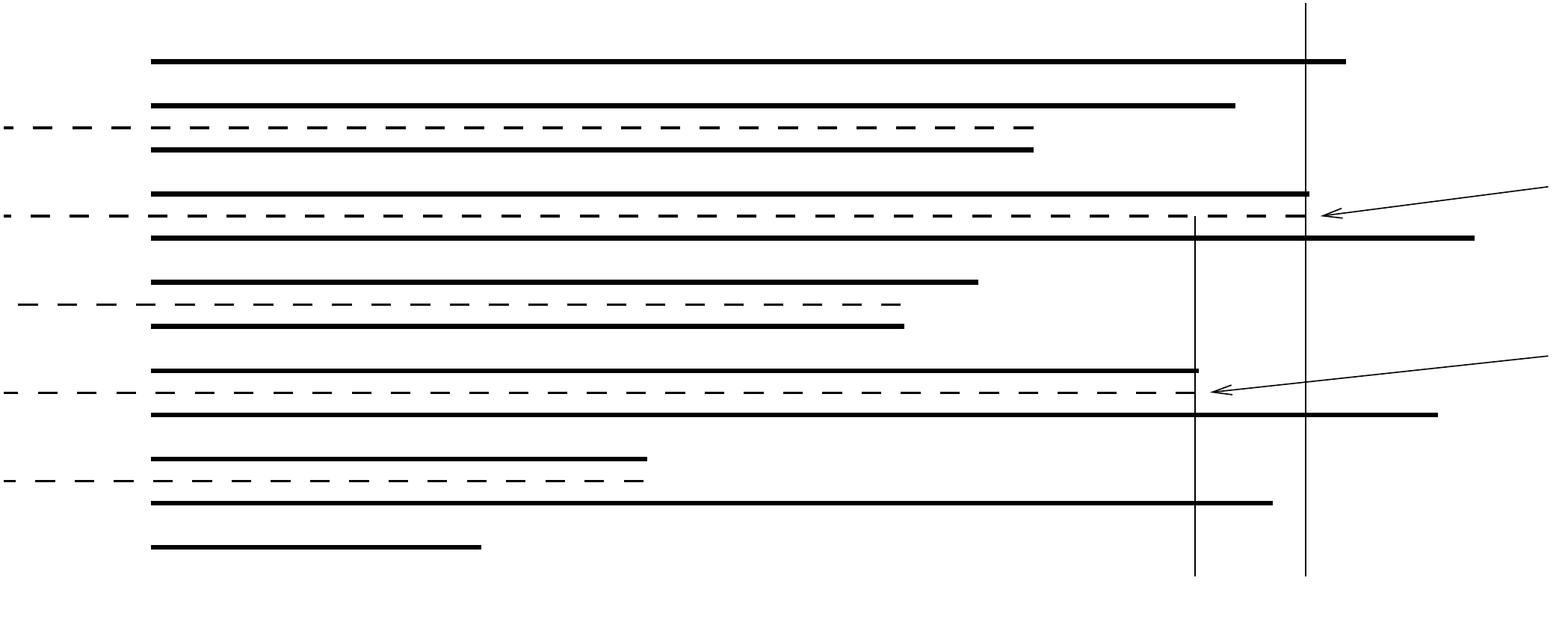_t}}\hspace{.7cm}}\hspace{1cm}
}
\subfigure[\label{fig:maze-construction-tree}The tree created from the
decomposition of (d) (for representation issues, arc directions are omitted and some nodes of degree $2$ are contracted). 
A and B indicate the nodes corresponding to the splittings in (d). 
]{\hspace{.5cm}\scalebox{.19}{\input{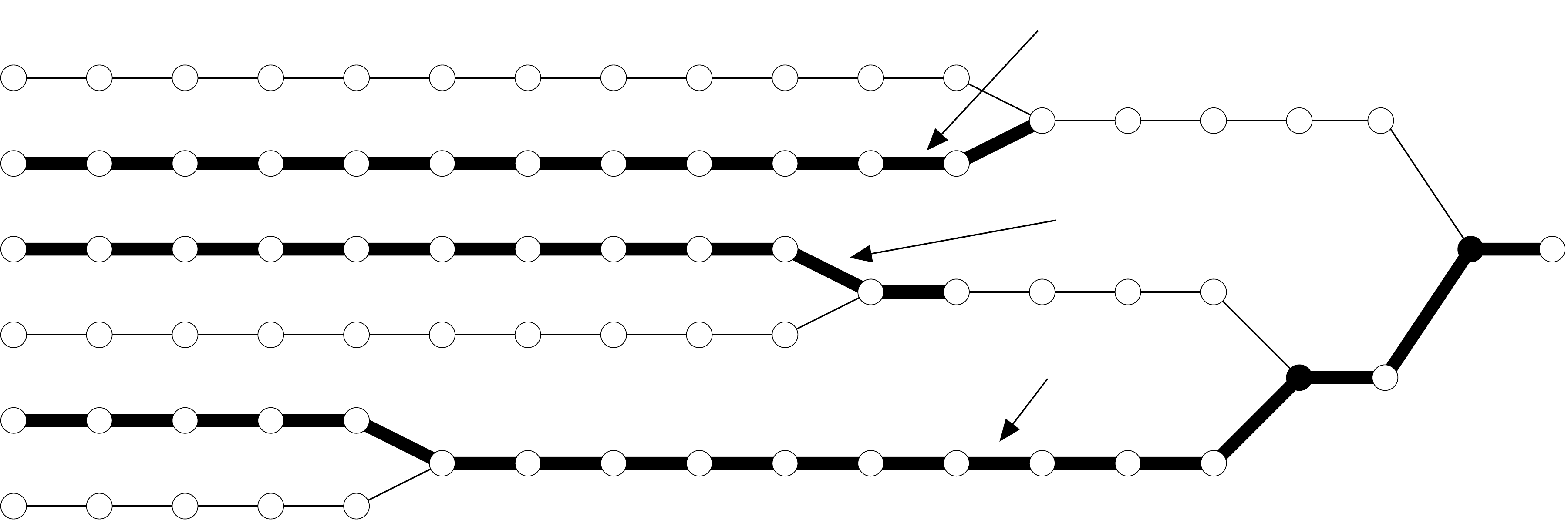_t}}\hspace{.5cm}}
\caption{Construction of the maze}
\label{fig:maze-construction}
\end{figure*}

Let $T^*$ be the optimal solution for the instance under consideration.
Let $\ell(i)$ denote the line segment $(s(i),t(i))\times \{b(i)\}$ associated to each task $i \in T^*$. Define $L:=\{\ell(i) : i\in T^*\}$ and $w(\ell(i)):=w(i)$. We say that a segment $(a,b)\times\{y\}$ \emph{contains an edge $e=(v,v+1)$} if $(v,v+1)\subseteq (a,b)$ where
we assume that the $n$ vertices of the graph are labeled by $1,...,n$ from left to right.

We want to select a subset $L'\subseteq L$ such that 
$w(L'):=\sum_{\ell(i)\in L'}w(\ell(i))$ is at most $\eps\cdot w(T^{*})$
and any vertical segment $\{x\}\times (y_b,y_t)$ intersecting more than
$k$ segments in $L$ intersects at least one segment in $L'$.
We call a set $L'$ with the latter property \emph{$k$-thin} \emph{for
$L$}. 

As we will show, for proving Lemma~\ref{lem:apx} it suffices to
find a $k$-thin set $L'$ for $L$ because of the following transformation of
$L'$ into a maze-pair $(T(L'),M(L'))$. We define
$T(L'):=\{i: \ell(i)\in L\setminus L'\}$.
For constructing $M(L')$ we group the lines in $L'$ according to
the bottleneck edges of their corresponding tasks. For each edge $e$,
we define $L'_{e}:=\{\ell(i)\in L': e(i)=e\}$. Now for each edge $e$
with $L'_{e}\ne\emptyset$ we add an m-task $m_{e}\in M$ into $M(L')$
whose endpoints are the leftmost and rightmost node of the 

path $\cup_{\ell(i)\in L'_e}P(i)=P(i_L)\cup P(i_R)$ for the task $i_L \in L'_e$
with leftmost start vertex and the task $i_R \in L'_e$ with rightmost end vertex
(in a sense, we glue $i_L$ and $i_R$ together to form an m-task). Observe that, as
required in the definition of a maze pair,
$b(m')\neq b(m'')$ for any two distinct $m',m''\in M(L')$.

\begin{lem}
\label{lem:segments} If a set $L'\subseteq L$ is $k$-thin for $L$, then the maze pair
$(T(L'),M(L'))$ is $(k+1/\delta)$-thin and $T(L')\cup M(L')$ is 
feasible. 
\end{lem}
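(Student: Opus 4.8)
The plan is to establish the two assertions of the lemma separately. For feasibility, I would fix an edge $e$ and show $d(T(L')\cap T_e)+d(M(L')\cap M_e)\le u_e$. Write $d(M(L')\cap M_e)=\sum_{e'}d(m_{e'})$, the sum ranging over those $e'$ with $m_{e'}\in M(L')$ and $e\in P(m_{e'})$. The idea is to charge each such m-task to a distinct removed task. If $e\in P(m_{e'})=P(i_L)\cup P(i_R)$, where $i_L,i_R$ are the tasks with segments in $L'_{e'}$ glued to form $m_{e'}$, then --- since $P(i_L)$ and $P(i_R)$ both contain $e'$, so their union is an interval --- $e$ already lies in $P(j_{e'})$ for some $j_{e'}\in\{i_L,i_R\}$. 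This $j_{e'}$ belongs to $T^*\cap T_e$, does not belong to $T(L')$ (its segment lies in $L'$), has $b(j_{e'})=u_{e'}$, and by $\delta$-largeness $d(j_{e'})\ge\delta u_{e'}=d(m_{e'})$. The tasks $j_{e'}$ are pairwise distinct (distinct bottleneck edges) and disjoint from $T(L')\cap T_e$, and together with $T(L')\cap T_e$ they form a subset of $T^*\cap T_e$; since $T^*$ is feasible, $d(T(L')\cap T_e)+\sum_{e'}d(m_{e'})\le d(T(L')\cap T_e)+\sum_{e'}d(j_{e'})\le d(T^*\cap T_e)\le u_e$.

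For $(k+1/\delta)$-thinness, I would fix an edge $e$ and a set $T''\subseteq T(L')\cap T_e$ with $|T''|>k+1/\delta$, and put $b_{\min}:=\min_{i\in T''}b(i)$ and $b_{\max}:=\max_{i\in T''}b(i)$. The key preliminary step is to discard the tasks of $T''$ lying at height exactly $b_{\max}$: because edge capacities are distinct, each of them has the same bottleneck edge $\hat e$ (the unique edge of capacity $b_{\max}$), they all use $\hat e$, each has demand at least $\delta b_{\max}$ by $\delta$-largeness, and they all lie in the feasible solution $T^*$; hence there are at most $1/\delta$ of them, so $T''_{<}:=\{i\in T'':b(i)<b_{\max}\}$ still satisfies $|T''_{<}|>k$. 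Next, pick a point $x$ in the interior of $e$ and $\eta>0$ small enough that no segment of $L$ has height in $(b_{\min}-\eta,b_{\min})$, and consider the vertical segment $\sigma:=\{x\}\times(b_{\min}-\eta,\,b_{\max})$. Every $\ell(i)$ with $i\in T''_{<}$ is crossed by $\sigma$ (since $e\in P(i)$ forces $x\in(s(i),t(i))$, and $b_{\min}\le b(i)<b_{\max}$), so $\sigma$ crosses more than $k$ segments of $L$; by $k$-thinness of $L'$ for $L$ it crosses some $\ell(j)\in L'$, and for that $j$ we obtain $e\in P(j)$ and $b_{\min}\le b(j)<b_{\max}$. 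Taking $e':=e(j)$, the m-task $m_{e'}$ lies in $M(L')$ (because $\ell(j)\in L'_{e'}$) and in $M_e$ (because $e\in P(j)\subseteq P(m_{e'})$), and $b(m_{e'})=u_{e'}=b(j)$ lies in $[\min_{i\in T''}b(i),\max_{i\in T''}b(i))$ --- exactly the condition in the definition of thinness.

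I expect the main obstacle to be this preliminary step in the thinness argument: applying $k$-thinness of $L'$ only produces an m-task whose bottleneck capacity lies in the closed interval $[b_{\min},b_{\max}]$, whereas the definition needs it strictly below $b_{\max}$; so one must first strip off the $T''$-tasks of maximal height and bound their number by $1/\delta$, and this is the unique place where the hypothesis that all tasks are $\delta$-large really enters (it is exactly what produces the additive slack $1/\delta$ in the bound). The feasibility part is a routine charging argument, and the passage between the ``segment'' picture and the ``$P(i)$, $b(i)$'' picture amounts only to the observation that a vertical line through the interior of $e$ meets $\ell(i)$ precisely when $e\in P(i)$.
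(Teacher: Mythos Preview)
Your proof is correct and follows essentially the same approach as the paper. Both arguments bound the number of tasks at height $b_{\max}$ by $1/\delta$ (using $\delta$-largeness and feasibility of $T^*$), then apply $k$-thinness of $L'$ to a vertical segment through the remaining $>k$ tasks to locate an m-task at the right height; and both prove feasibility by charging each m-task $m_{e'}$ crossing $e$ to a removed task in $L'_{e'}$ that also crosses $e$. Your write-up is in fact more careful than the paper's in two places: you make explicit the $\eta$-perturbation at the bottom of the vertical segment (so that $b(j)\ge b_{\min}$ is guaranteed), and your feasibility charging singles out one task $j_{e'}\in\{i_L,i_R\}$ actually using $e$, whereas the paper leaves this as the informal remark that the removed tasks ``in a sense were replaced by $m_e$.''
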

\begin{proof}
Consider any edge $e=(u,u+1)$, and any set of $k+1/\delta+1$ tasks
$T'\subseteq T^*\cap T_e$. Let $\{i_{1},i_{2},\ldots,i_{k+1}\}\subseteq T'$ be
$k+1$ of them with lowest bottleneck capacity, in non-decreasing
order of bottleneck capacity. Since $T^{*}$ is feasible, and since the
tasks in $T^*$ are $\delta$-large, there
cannot be more than $1/\delta$ tasks in $T'$ of bottleneck capacity equal
to $b_{max}:=\max_{i\in T'}\{b(i)\}$. It follows that $b(i_{j})<b_{max}$
for all $1\leq j\leq k+1$. Consider a vertical line segment
$\ell'$ with $x$-coordinate $u+\frac{1}{2}$
that intersects $\ell(i_{1}),\dots,\ell(i_{k+1})$. Since $L'$ is
$k$-thin, $\ell'$ must intersect some segment $\ell(i^{*})\in L'$.
The corresponding task $i^{*}$ then induces an m-task $m\in M(L')$
with $b(i_{1})\leq b(m)\leq b(i_{k+1})<b_{max}$. Hence $(T(L'),M(L'))$
is $(k+1/\delta)$-thin.

For the feasibility of $T(L')\cup M(L')$ recall that $T^{*}$ is
feasible and all tasks in $T^{*}$ are $\delta$-large. Hence, on
every edge $e$ each m-task $m_{e}$ uses at most as much capacity
as the tasks from $T^{*}$ whose segments are in $L'_{e}$ (the latter tasks in a
sense were replaced by $m_e$).
\end{proof}
Next we reduce the problem of finding a $k$-thin set $L'$ with low
weight to the case that each segment $\ell(i)$ starts at $e(i)$ and
either goes only to the right or only to the left.
See Figures~\ref{fig:maze-construction-initial}
and~\ref{fig:maze-construction-split}.
Formally, we split
each segment $\ell(i)$ into two segments $\ell_{L}(i)$ and $\ell_{R}(i)$
such that $\ell_{L}(i)$ contains the edges of $P(i)$ between $s(i)$
and the right vertex of the bottleneck edge $e(i)$ and symmetrically
for $\ell_{R}(i)$. So $\ell_{L}(i)$ and $\ell_{R}(i)$ overlap on
$e(i)$. We set $w(\ell_{L}(i))=w(\ell_{R}(i))=w(i)$. We define
$L_{L}:=\{\ell_{L}(i): \ell(i)\in L\}$ and $L_{R}:=\{\ell_{R}(i): \ell(i)\in L\}$.
The next lemma shows that it suffices to find low weight $k$-thin
sets for $L_L$ and $L_R$.
\begin{lem}
\label{lem:LR-split}
Given $k$-thin sets $L'_{L}$ for $L_{L}$
and $L'_{R}$ for $L_{R}$, there is a $2k$-thin set $L'$ for $L$
with $w(L')\le w(L'_{L})+w(L'_{R})$. 
\end{lem}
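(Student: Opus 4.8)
The plan is to recover $L'$ from $L'_L$ and $L'_R$ by simply ``undoing'' the split, and to read off $2k$-thinness from a one-line geometric fact about the split segments. Concretely, I would set
\[
L' := \{\,\ell(i)\in L \;:\; \ell_L(i)\in L'_L \quad\text{or}\quad \ell_R(i)\in L'_R\,\}.
\]
For the weight bound, recall that $w(\ell_L(i))=w(\ell_R(i))=w(\ell(i))=w(i)$, and that after the preprocessing distinct tasks have distinct start (and end) vertices, so the maps $i\mapsto\ell_L(i)$ and $i\mapsto\ell_R(i)$ are injective on $T^*$. Hence $\sum_{\ell_L(i)\in L'_L}w(i)=w(L'_L)$ and $\sum_{\ell_R(i)\in L'_R}w(i)=w(L'_R)$, and since every $\ell(i)\in L'$ satisfies at least one of the two membership conditions, $w(L')=\sum_{\ell(i)\in L'}w(i)\le w(L'_L)+w(L'_R)$ by a trivial double count.

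The real content is the $2k$-thinness of $L'$ for $L$, and here the key step is an elementary observation: because $\ell_L(i)$ and $\ell_R(i)$ overlap on the bottleneck edge $e(i)$, their union as point sets is exactly $\ell(i)$. This holds uniformly, in particular when $e(i)$ is the first or the last edge of $P(i)$, or when $P(i)$ is a single edge. Consequently, for every vertical segment $v=\{x\}\times(y_b,y_t)$ and every task $i$, the segment $\ell(i)$ meets $v$ if and only if $\ell_L(i)$ meets $v$ or $\ell_R(i)$ meets $v$. I would state and verify this first, as everything else rests on it (together with the equally trivial fact that $\ell_L(j)\subseteq\ell(j)$ and $\ell_R(j)\subseteq\ell(j)$, since each part has the same height $b(j)$ and a horizontal extent contained in that of $\ell(j)$).

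Given this, take any vertical segment $v$ that meets more than $2k$ segments of $L$. By the equivalence above, $\Card{\{i:\ell_L(i)\cap v\neq\emptyset\}}+\Card{\{i:\ell_R(i)\cap v\neq\emptyset\}}>2k$, so at least one of the two counts exceeds $k$; say more than $k$ of the left segments $\ell_L(i)$ meet $v$ (the right case is symmetric). These are more than $k$ segments of $L_L$ meeting $v$, so by the $k$-thinness of $L'_L$ for $L_L$ there is some $\ell_L(j)\in L'_L$ meeting $v$. Since $\ell_L(j)\subseteq\ell(j)$, the segment $\ell(j)$ also meets $v$, and $\ell(j)\in L'$ by construction. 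Thus $L'$ meets every vertical segment meeting more than $2k$ segments of $L$, i.e.\ $L'$ is $2k$-thin for $L$.

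I do not expect a genuine obstacle here. The argument hinges entirely on the two tiny geometric facts $\ell_L(i)\cup\ell_R(i)=\ell(i)$ (from the overlap on $e(i)$) and $\ell_L(j)\subseteq\ell(j)$, which let a ``thin witness'' for $L_L$ or $L_R$ be lifted to a witness for $L$; the only point worth a remark is why the constant doubles, namely that a single $v$ can be crossed by many segments on both the left and the right halves simultaneously, so in the worst case a budget of $2k$ crossings of $L$ splits as $k$ on each side and one must go up to $2k$ before $k$-thinness of one half is triggered.
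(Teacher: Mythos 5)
Your proposal is correct and is essentially the paper's own argument: the same set $L'$ (include $\ell(i)$ iff $\ell_L(i)\in L'_L$ or $\ell_R(i)\in L'_R$), the same trivial weight count, and the same pigeonhole step showing that a vertical segment crossing more than $2k$ segments of $L$ must cross more than $k$ segments of $L_L$ or of $L_R$. You merely spell out the small geometric facts ($\ell_L(i)\cup\ell_R(i)=\ell(i)$ and $\ell_L(j)\subseteq\ell(j)$) that the paper leaves implicit.
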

\begin{proof}
We add a line segment $\ell(i)$ to $L'$ if and only if $\ell_L(i)\in L'_{L}$
or $\ell_R(i)\in L'_{R}$. It follows directly that $w(L')\le
w(L'_{L})+w(L'_{R})$.
Now any vertical segment $\ell'$ crossing at least $2k+1$ segments in $L'$ must either cross
$k+1$ segments from $L_{L}$ or $k+1$ segments from $L_{R}$. Thus, $\ell'$
crosses a segment in $L_{L}'$ or a segment in $L_{R}'$, and hence $\ell'$ crosses a segment in $L'$. 
\end{proof}
Consider now only the segments $L_{R}$ (a symmetric argument holds for $L_{L}$). The next step is to reduce the problem to the case
where, intuitively
speaking, the edge capacities are strictly increasing and all segments
contain the leftmost edge of the graph. To simplify the description of
the step after this one, we also enforce that new segments have
different $y$-coordinates. Formally,
let us assume that task labels $i$ are integers between
$1$ and $\Card{T}$
(in any order). For each $\ell_R(i)=(v,u)\times \{b(i)\}\in L_R$, we
construct a segment $(1,u)\times \{b(i)+M\cdot v+\eps\cdot i\}$, which we
denote by $\tilde{\ell}_R(i)$. Here $M:=1+\max_{e}\{u_{e}\}$ and
$\eps=\frac{1}{\Card{T}+1}$ (so that $\eps\cdot i<1$). Define
$\tilde{L}_R:=\{\tilde{\ell}_R(i):i\in T^*\}$ and
$w(\tilde{\ell}_R(i))=w(i)$. 
(See Figure~\ref{fig:maze-construction-shift}.)
\begin{lem}\label{lem:left-extend-perturb}
Given a $k$-thin set $\tilde{L}'_R$ for $\tilde{L}_R$, there is a
$k$-thin set $L'_R$ for $L_R$ with $w(\tilde{L}'_R)= w(L'_R)$.
A symmetric claim holds for $\tilde{L}_L$ and $L_L$.
\end{lem}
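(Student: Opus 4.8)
The plan is to keep the index set unchanged: given a $k$-thin set $\tilde L'_R$ for $\tilde L_R$, put $L'_R:=\{\ell_R(i):\tilde\ell_R(i)\in\tilde L'_R\}$. Since $w(\ell_R(i))=w(i)=w(\tilde\ell_R(i))$ we get $w(L'_R)=w(\tilde L'_R)$ for free, so it suffices to show that $L'_R$ is $k$-thin for $L_R$. I argue by contradiction: assuming some vertical segment $\{x\}\times(y_b,y_t)$ crosses more than $k$ segments of $L_R$, none of them in $L'_R$, I will exhibit a vertical segment that crosses more than $k$ segments of $\tilde L_R$, none of them in $\tilde L'_R$. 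For a task $i$ write $\ell_i$ for the left vertex of its bottleneck edge $e(i)$, $r_i:=t(i)$, and $h_i:=b(i)$, so that $\ell_R(i)=(\ell_i,r_i)\times\{h_i\}$ and, by construction, $\tilde\ell_R(i)=(1,r_i)\times\{\tilde h_i\}$ with $\tilde h_i:=h_i+M\ell_i+\eps i$; recall $\ell_i\ge 1$, $h_i\le\max_e u_e=M-1$ and $\eps|T|<1$. Let $B$ be the set of tasks whose $L_R$-segment is crossed by $\{x\}\times(y_b,y_t)$, and set $N:=\{i:\ell_i<x<r_i\}$ and $\tilde N:=\{i:1<x<r_i\}$. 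Then $N\subseteq\tilde N$, $B=\{i\in N:y_b<h_i<y_t\}$ is exactly the set of elements of $N$ whose height lies in a fixed open interval, $|B|>k$, and $\tilde\ell_R(i)\notin\tilde L'_R$ for every $i\in B$.

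The crucial structural fact is that on $N$ the height $h_i$ is monotone in $\ell_i$: for $i,i'\in N$, $\ell_i\le\ell_{i'}$ implies $h_i\le h_{i'}$. Indeed, $\ell_i=\ell_{i'}$ forces $e(i)=e(i')$, hence $h_i=h_{i'}$; and if $\ell_i<\ell_{i'}$ then, since $\ell_i,\ell_{i'},r_i$ are integers with $\ell_i\le\ell_{i'}<x<r_i$, the edge $e(i')$ lies on $P(i)$, so $h_i=b(i)\le u_{e(i')}=b(i')=h_{i'}$. (This is where one uses that each $\ell_R(i)$ starts at its bottleneck edge, which is the point of the previous reduction step.) On the other hand, because $M>\max_e u_e$ and $\eps|T|<1$, the values $\tilde h_i$ over $i\in N$ are ordered lexicographically by $(\ell_i,i)$, so $\tilde h_i<\tilde h_j$ implies $\ell_i\le\ell_j$; and for any $j\in\tilde N$ with $\ell_j\ge x$ (i.e.\ $j\in\tilde N\setminus N$) and any $i\in N$ one has $\ell_i<\ell_j$, hence $\tilde h_i<\tilde h_j$, so $N$ is an initial segment of $\tilde N$ in the $\tilde h$-order.

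Combining the two: if $i,i'\in B$ and $j\in N$ satisfy $\tilde h_i<\tilde h_j<\tilde h_{i'}$, then $\ell_i\le\ell_j\le\ell_{i'}$, hence $h_i\le h_j\le h_{i'}$, hence $y_b<h_j<y_t$ and $j\in B$. Thus $B$ is an interval of the $\tilde h$-order on $N$, and since $N$ is an initial segment of $\tilde N$, it is an interval of the $\tilde h$-order on $\tilde N$. Now choose $\tilde y_b<\tilde y_t$ separating $B$ from the remaining elements of $\tilde N$ in the $\tilde h$-coordinate; then the vertical segment $\{x\}\times(\tilde y_b,\tilde y_t)$ (note $x>\ell_i\ge 1$ for $i\in B$) crosses exactly the segments $\tilde\ell_R(i)$ with $i\in B$ among those of $\tilde L_R$. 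Hence it crosses more than $k$ segments of $\tilde L_R$ and none of $\tilde L'_R$, contradicting the $k$-thinness of $\tilde L'_R$; therefore $L'_R$ is $k$-thin for $L_R$. The statement for $\tilde L_L$ and $L_L$ follows by the mirror-image argument. The only genuinely non-routine step is the monotonicity of $h_i$ in $\ell_i$ on $N$; once that is in hand, the rest is bookkeeping with the definition of $\tilde h_i$ and with the fact that in the transformed picture all segments share the left endpoint $1$.
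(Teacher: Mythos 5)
Your proof is correct and follows essentially the same route as the paper: define $L'_R$ by pulling back the indices, and lift any vertical segment witnessing a violation of $k$-thinness for $L_R$ to a vertical segment in the transformed picture that isolates exactly the corresponding $\tilde\ell_R$-segments. The only difference is one of rigor: you make explicit the monotonicity of $b(i)$ in the left endpoint of $\ell_R(i)$ among segments containing $x$ (and the resulting interval structure of $B$ in the $\tilde h$-order), which the paper's proof uses implicitly when it asserts that the segment $\{x\}\times(\tilde y_1-\eps,\tilde y_{k+1}+\eps)$ intersects precisely the $k+1$ chosen segments.
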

\begin{proof}
We prove the first claim only, the proof of the second one being symmetric.
Let $L'_R:=\{\ell_R(i)\in L_R: \tilde{\ell}_R(i)\in \tilde{L}'_R\}$.
Clearly $w(L'_R)=w(\tilde{L}'_R)$. Consider any vertical segment
$\{x\}\times (y_b,y_t)$ that intersects at least $k+1$ segments from
$L_R$. Let $\ell_R(i_1),\ldots\ell_R(i_{k+1})$ be $k+1$ such segments of
lowest capacity, breaking ties according to the lowest label $i$ of the
corresponding tasks. To prove the lemma it suffices to show that at
least one such segment $\ell_R(i_{j^*})$ belongs to $L'_R$.

W.l.o.g., assume that for any $1\leq j\leq k$, $s(i_j)$ is equal to or to the left of $s(i_{j+1})$,
and $i_j<i_{j+1}$ if $s(i_j)=s(i_{j+1})$.
Then by construction
$\tilde{y}_1<\ldots <\tilde{y}_{k+1}$, where $\tilde{y}_j$ is the
$y$-coordinate of segment $\tilde{\ell}_R(i_j)$. Consider a vertical
segment $\{x\}\times (\tilde{y}_1-\eps,\tilde{y}_{k+1}+\eps)$. For $\eps>0$ small enough, we can assume that $\ell'$ 
intersects precisely the segments
$\tilde{\ell}_R(i_1),\ldots,\tilde{\ell}_R(i_{k+1})$. Hence
$\tilde{\ell}_R(i_{j^*})\in \tilde{L}'_R$ for some $1\leq j^*\leq k+1$.
If follows that $\ell_R(i_{j^*})\in L'_R$ as required.
\end{proof}

It remains to prove that there is a $k$-thin set for $\tilde{L}_{R}$
whose weight is bounded by $O(\frac{1}{k})w(\tilde{L}_{R})$. We do
this by reducing this problem to a min-flow problem in a directed
tree network. 

Let $k\in\mathbb{N}$ be even. We consider the following hierarchical
decomposition of the segments in $\tilde{L}_{R}$, which corresponds
to a (directed) rooted out-tree $\cD$ (see
Figures~\ref{fig:maze-construction-decomposition}
and~\ref{fig:maze-construction-tree}). We construct $\cD$ iteratively,
starting from the root. Each node $w$ of $\cD$ is labelled with a triple
$(e_{w},I_{w},R_{w})$, where $e_{w}$ is an edge in $E$,
$I_{w}\subseteq[0,\infty)$
is an interval, and $R_{w}$ contains all segments that contain $e$
and whose $y$-coordinate is in $I_{w}$ (the \emph{representative
segments} of $w$). Let $e_{r}\in E$ be the rightmost edge that is
contained in at least $k-1$ segments. We let the root $r$ of $\cD$
be labelled with $(e_{r},[0,\infty),R_{r})$. For any constructed
node $w$, if $e_{w}$ is the leftmost edge of the graph, then $w$
is a leaf. Otherwise, consider the edge $e'$ to the left of $e_{w}$,
and let $R'$ be the segments in $I_{w}$ that contain $e'$. Note
that, by the initial preprocessing of the instance, each edge can
be the rightmost edge of at most one segment (task), hence
$\Card{R'}\leq\Card{R_{w}}+1$.
If $\Card{R'}<k$, we append to $w$ a child $w'$ (with a directed arc
$(w,w')$) with label $(e',I_{w},R')$.
Otherwise (i.e., if $\Card{R'}=k$), we append to $w$ two children
$w_{b}$ and $w_{t}$, which are labelled as follows. Let
$\tilde{\ell}_R(i_{1}),...,\tilde{\ell}_R(i_{k})$ be the
segments in $R'$, sorted increasingly by $y$-coordinate. We partition
$R'$ into $R_{b}=\{\tilde{\ell}_R(i_{1}),...,\tilde{\ell}_R(i_{k/2})\}$ and $R_{t}=\{\tilde{\ell}_R(i_{k/2+1}),...,\tilde{\ell}_R(i_{k})\}$.
Let $y_{0}$ be a value such that all segments in $R_{b}$ have a $y$-coordinate
strictly less than $y_{0}$. We label $w_{b}$ and $w_{t}$ with $(e',I_{w}\cap[0,y_0),R_{b})$
and $(e',I_{w}\cap[y_0,\infty),R_{t})$, respectively.

Consider a given segment $\tilde{\ell}_R(i)$, and the nodes $w$ of $\cD$ that
have $\tilde{\ell}_R(i)$ as one of their representative segments $R_{w}$.
Then the latter nodes induce a directed path $\cD(i)$ in $\cD$.
To see this, observe that if $\tilde{\ell}_R(i)\in R_{w}$, then either $w$
is a leaf or $\tilde{\ell}_R(i)\in R_{w'}$ for exactly one child $w'$ of $w$.
Furthermore, each $\tilde{\ell}_R(i)$ belongs to $R_{w}$ for some leaf $w$
of $\cD$ (i.e., no $\cD(i)$ is empty). 

We call a set of segments $\tilde{L}'_{R}\subseteq\tilde{L}_{R}$ a \emph{segment
cover }if for each node $w$ of $\cD$ it holds that $R_{w}\cap\tilde{L}'_{R}\ne\emptyset$. 
\begin{lem}
\label{lem:thin-lines}If $\tilde{L}'_{R}\subseteq\tilde{L}_{R}$
is a segment cover then $\tilde{L}'_{R}$ is $2k$-thin for $\tilde{L}_{R}$. A symmetric claim holds for $\tilde{L}_R$.
\end{lem}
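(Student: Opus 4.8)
The plan is to show the contrapositive-flavored statement directly: take any vertical segment $\ell'=\{x\}\times(y_b,y_t)$ that intersects at least $2k+1$ segments of $\tilde{L}_R$, and exhibit a node $w$ of $\cD$ whose representative set $R_w$ is entirely ``seen'' by $\ell'$, in the sense that $\ell'$ intersects every segment in $R_w$. Since $\tilde{L}'_R$ is a segment cover, it contains some segment of that $R_w$, and that segment is therefore crossed by $\ell'$, which is exactly what $2k$-thinness requires. The key point to exploit is the geometry forced by the construction of $\tilde{L}_R$: every segment in $\tilde{L}_R$ contains the leftmost edge of the graph and extends rightward, so a segment contains the edge $e'$ immediately to the left of $e_w$ iff it contains $e_w$ or ends exactly at $e_w$; thus ``contains $e_w$'' is essentially a downward-closed condition as we walk the tree toward the root, and the sets $R_w$ along a root-to-leaf path shrink in a controlled, nested-in-$y$ way.

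First I would locate, among the $\ge 2k+1$ segments crossed by $\ell'$, the edge $e^*$ they all contain: $\ell'$ has some $x$-coordinate lying in an edge $e^*=(v,v+1)$, and every segment it crosses contains $e^*$ (here I use that segments of $\tilde{L}_R$ are horizontal and, crucially, that a vertical segment crossing a horizontal one means the horizontal one spans the relevant $x$-coordinate). Next I walk down the tree: starting at the root $r$ (whose edge $e_r$ is the rightmost edge contained in $\ge k-1$ segments), I follow the unique child whose interval $I_{w'}$ contains the $y$-range of $\ell'$ — more precisely, at each branching node I pick $w_b$ or $w_t$ according to where the bulk of the $\ge 2k+1$ crossed segments' $y$-coordinates fall, staying with the child that still retains ``many'' of them. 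I continue until I reach a node $w$ with $e_w = e^*$ (such a node exists on this path: the edges $e_w$ strictly decrease by one each step down a non-branching edge and the interval $e^*$ is to the left of or equal to $e_r$ since $e^*$ is contained in $\ge 2k+1 > k-1$ segments). I then argue $R_w$ consists of at most $k$ segments all containing $e^*$ with $y$-coordinates in $I_w$, and — because the branching always split off at least half the segments and I followed the heavier side — at least one segment of $R_w$ lies in the $y$-window of $\ell'$; since all segments of $R_w$ contain $e^*$, that segment is crossed by $\ell'$. Then the segment-cover property finishes it.

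The main obstacle I anticipate is the bookkeeping that guarantees a node $w$ with $e_w=e^*$ is actually hit while descending \emph{and} that at that node $R_w$ still contains a segment inside $(y_b,y_t)$. The count is delicate: we start with $\ge 2k+1$ crossed segments all containing $e^*$; at each of the (at most one relevant) branchings before reaching level $e^*$, we may be forced onto a child that retains only $\ge k$ of them (the split is into halves of size $k/2$ each, but the set of $\ge 2k+1$ relevant segments need not split evenly), but since the split is by $y$-coordinate and $\ell'$'s window is an interval, after picking the heavier side we keep at least $\lceil (2k+1)/2\rceil = k+1$ of them, hence more than the $\le k$ capacity of any single node — so in fact at most one branching can occur along the portion of the path with $e_w$ to the right of $e^*$, and after it we retain $\ge k+1$ relevant segments feeding into a node of capacity $\le k$, which is impossible unless we have already reached a branch or leaf; chasing this to the conclusion that $e_w=e^*$ is reached with $R_w$ meeting $\ell'$'s window is the crux. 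The symmetric claim for $\tilde{L}_L$ is immediate by the mirror-image construction and needs only a one-line remark.
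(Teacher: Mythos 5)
There is a genuine gap, and it sits exactly where you flagged the ``crux.'' The decisive problem is in your third paragraph: you conclude by showing that \emph{at least one} segment of $R_w$ lies in the $y$-window of $\ell'$ and then say ``the segment-cover property finishes it.'' It does not: the cover only guarantees that $\tilde{L}'_R$ contains \emph{some} element of $R_w$, which need not be the element that happens to lie in the window. For the cover to help you must produce a node $w$ with $e_w=e^*$ whose \emph{entire} set $R_w$ consists of segments crossed by $\ell'$ (this is what you correctly announce as the target in your first paragraph, but the argument then quietly weakens it). Moreover, the heavier-child descent does not produce such a node. First, the $2k+1$ crossed segments all contain $e^*$ but need not contain $e_r$ or the other edges to the right of $e^*$, so high in the tree most of them are absent from the representative sets and ``where the bulk falls'' is not well defined. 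Second, your claim that at most one branching occurs before reaching level $e^*$ is false: after a split the set has size $k/2$ and can regrow to $k$ after $k/2$ further leftward steps and split again, so the retained count halves repeatedly rather than once. Third, even at the last relevant split, a boundary block that is only \emph{partially} inside the window can contribute up to $k-2$ window elements while a fully contained block contributes as few as $k/2$; for $k\ge 4$ the greedy rule can therefore steer you onto a partially contained block, where the cover may select a segment outside the window.

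The fix requires no tree descent at all. The intervals $I_w$ of the nodes $w$ with $e_w=e^*$ partition $[0,\infty)$, so the sets $R_w$ at that level partition the segments containing $e^*$ into \emph{consecutive} blocks in the $y$-order, each of size between $k/2$ and $k-1$. The $2k+1$ crossed segments form a consecutive window in that same order (all segments of $\tilde L_R$ containing $e^*$ with $y$-coordinate in $(y_b,y_t)$). At most the two extreme blocks meeting the window can fail to be contained in it, and together they account for at most $2(k-1)<2k+1$ segments, so some block $R_{w'}$ lies entirely inside the window; every element of $R_{w'}$ is then crossed by $\ell'$, and the cover intersects $R_{w'}$. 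This is the paper's argument, and your proposal needs this (or an equivalent global partition argument) in place of the greedy descent.
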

\begin{proof}
We prove the first claim only, the proof of the second one being symmetric.
Consider any vertical segment $\ell'=\{x\}\times (y_b,y_t)$ crossing at least $2k+1$ segments from $\tilde{L}_{R}$, and let 
$\tilde{L}''$ be $2k+1$ such segments of lowest $y$-coordinate. 
Let also $e=(u,u+1)$ be the edge such that $x\in (u,u+1)$, and $\tilde{\ell}_R(i_1),\ldots,\tilde{\ell}_R(i_h)$ be the segments containing edge $e$ in increasing order of $y$ coordinate. Observe that segments $\tilde{L}''$ induce a subsequence $\tilde{\ell}_R(i_{j}),\ldots,\tilde{\ell}_R(i_{j+2k})$ of $\tilde{\ell}_R(i_1),\ldots,\tilde{\ell}_R(i_h)$. Furthermore, the representative sets $R_w$ of nodes $w$ such that $e_w=e$ partition $\tilde{\ell}_R(i_1),\ldots,\tilde{\ell}_R(i_h)$ into subsequences, each one containing between $k/2$ and $k-1$ segments. It follows that there must be one node $w'$ such that $R_{w'}\subseteq \{\tilde{\ell}_R(i_{j}),\ldots,\tilde{\ell}_R(i_{j+2k})\}$. Since $\tilde{L}'_R\cap R_{w'}\neq \emptyset$ by assumption, it follows that $\tilde{\ell}_R(i_{j^*})\in \tilde{L}'_R$ for some $j\leq j^*\leq j+2k$.
\end{proof}

It remains to show that there is a segment cover with small weight.
\begin{lem}
\label{lem:fractionalDirectedPaths} There exists a segment cover $\tilde{L}'_{R}\subseteq\tilde{L}_{R}$
with $w(\tilde{L}'_{R})\leq\frac{2}{k}\cdot w(\tilde{L}_{R})$ (where
$k$ is the parameter used in the construction of $\cD$). \end{lem}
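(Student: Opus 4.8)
\emph{Proof plan.} The plan is to express the existence of a cheap segment cover as an integer program, produce an explicit cheap fractional solution, and then use total unimodularity of the constraint matrix to round it to an integral cover of the same cost.

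First I would set up the covering LP: a variable $x_i\in[0,1]$ for each segment $\tilde{\ell}_R(i)\in\tilde{L}_R$, a constraint $\sum_{i:\,\tilde{\ell}_R(i)\in R_w}x_i\ge 1$ for every node $w$ of $\cD$, and the objective $\min\sum_i w(i)\,x_i$. A $0/1$ point $x$ is exactly the indicator of a segment cover $\{\tilde{\ell}_R(i):x_i=1\}$, of weight $\sum_i w(i)x_i$; the LP is feasible (take $x\equiv 1$) because by the construction of $\cD$ every representative set satisfies $k/2\le|R_w|\le k-1$, in particular $R_w\ne\emptyset$.

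Second, I would check that the constant vector $x_i:=2/k$ is feasible: the left-hand side of the constraint for $w$ equals $\frac{2}{k}|R_w|\ge\frac{2}{k}\cdot\frac{k}{2}=1$, again using $|R_w|\ge k/2$. Its cost is $\frac{2}{k}\sum_i w(i)=\frac{2}{k}w(\tilde{L}_R)$, so the LP optimum is at most $\frac{2}{k}w(\tilde{L}_R)$.

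Third, I would argue that the constraint matrix $A$ (rows $=$ nodes of $\cD$, columns $=$ segments, $A_{w,i}=1$ iff $\tilde{\ell}_R(i)\in R_w$) is totally unimodular; then, the right-hand side being integral and the feasible region bounded (by the box $[0,1]$ constraints, whose identity rows preserve total unimodularity), the LP attains its optimum at an integral vertex $x^*$ with all entries in $\{0,1\}$, which yields a segment cover of weight $\sum_i w(i)x^*_i\le\frac{2}{k}w(\tilde{L}_R)$, proving the lemma. For the total unimodularity I would exploit that the $1$'s in column $i$ are precisely the vertices of the directed path $\cD(i)$, which runs from a node $a_i$ downward to a leaf of $\cD$. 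Adjoining to $\cD$ a new root $\rho$ and the arc $\rho\to r$, and identifying every arc $(p,c)$ of the resulting out-tree $\Gamma$ with its head $c$, gives a bijection between the arcs of $\Gamma$ and the nodes of $\cD$ under which column $i$ becomes the incidence vector of the fundamental $\Gamma$-path between the parent of $a_i$ (or $\rho$, if $a_i=r$) and the bottom leaf of $\cD(i)$, traversed entirely in the forward direction. Hence $A$ is a network matrix of $\Gamma$, and network matrices are totally unimodular. (The degenerate columns, which arise only when $\cD(i)$ is a single leaf, are unit vectors and are harmless, since adjoining unit columns preserves total unimodularity.)

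The hard part is this last step: one has to notice that the ``downward-path-to-a-leaf'' shape of the $\cD(i)$'s is exactly what turns the node--segment incidence matrix into a network matrix. The rest --- writing the LP, verifying feasibility of $x\equiv 2/k$, and extracting an integral cover --- is routine once the size bound $k/2\le|R_w|\le k-1$ from the construction of $\cD$ is in hand.
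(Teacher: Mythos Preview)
Your proposal is correct and follows essentially the same approach as the paper: formulate a covering LP over segments, exhibit the cheap fractional solution $x_i=2/k$ using $|R_w|\ge k/2$, and invoke total unimodularity (network matrix on a tree) to obtain an integral segment cover of no greater cost. The only cosmetic difference is that the paper indexes its constraints by \emph{arcs} of $\cD$ after appending dummy leaves, whereas you index by \emph{nodes} of $\cD$ after appending a super-root; both reductions realize the constraint matrix as a network matrix and are equivalent for the purpose of the argument.
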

\begin{proof}
We can formulate the problem of finding a $\tilde{L}'_{R}$ satisfying
the claim as a flow problem. We augment $\cD$ by appending a dummy
node $w'$ to each leaf node $w$ with a directed edge $(w,w')$ (so
that all the original nodes are internal) and extend the paths $\cD(i)$
consequently (so that each path contains exactly one new edge $(w,w')$).

We define a min-flow problem, specified by a linear program. For each
directed path $\cD(i)$ we define a variable $x_{i}\in[0,1]$.
Let $A$ denote the set of all arcs in $\cD$. For each arc $a$ denote
by $T_{a}$ all values $i$ such that $\cD(i)$ uses $a$. We solve
the following LP:

\begin{align*}
\min &\sum_{i:\ell(i)\in\tilde{L}_{R}}w(i)\cdot x_{i} & \\
\textrm{s.t.} &\quad \sum_{i\in T_{a}}x_{i}  \ge 1  &&\forall a\in A\\
     &\quad x_{i} \geq 0   && \forall\ell(i)\in\tilde{L}_{R}.
\end{align*}

By the construction of $\cD$ every arc is used by at least $k/2$ paths.
Hence, the linear program has a fractional solution of weight
$\sum_{i}w(i)\cdot\frac{2}{k}=\frac{2}{k}\cdot w(\tilde{L}_{R})$,
which is obtained by setting $x_{i}:=2/k$ for each $i$. Since the underlying
network $\cD$ is a directed tree and all paths follow the direction 
of the arcs, the resulting network flow matrix
is totally unimodular, see~\cite{Schrijver03}. Therefore, there exists
also an integral solution with at most the same weight. This integral
solution induces the set $\tilde{L}'_{R}$. 
\end{proof}
Now the proof of Lemma~\ref{lem:apx} follows from the previous reductions.
\begin{proof}[Proof of Lemma~\ref{lem:apx}]
Suppose we are given the optimal solution $T^{*}$. As described above, we construct
the sets $L$, $L_{L}$, $L_{R}$, $\tilde{L}_{L}$, and $\tilde{L}_{R}$.
We compute segment covers $\tilde{L}'_{L}$ for $\tilde{L}_{L}$ and
$\tilde{L}'_{R}$ for $\tilde{L}_{R}$ as described in the proof of Lemma~\ref{lem:fractionalDirectedPaths}.
By Lemma~\ref{lem:thin-lines} they are $2k$-thin for $\tilde{L}_{L}$
and $\tilde{L}_{R}$, respectively. By Lemma~\ref{lem:left-extend-perturb}
we obtain $2k$-thin sets $L'_{L}$ and $L'_{R}$ for $L_{L}$ and
$L_{R}$, respectively, with $w(L'_{L})=w(\tilde{L}'_{L})$ and
$w(L'_{R})=w(\tilde{L}'_{R})$. By Lemma~\ref{lem:LR-split} this yields a
$4k$-thin
set $L'$ for $L$ whose weight is bounded by $w(L'_{L})+w(L'_{R})$.
Finally, set $(\tilde{T},\tilde{M}):=(T(L'),M(L'))$. This maze pair is
feasible by definition.  Furthermore, by Lemma~\ref{lem:segments}, it is
$(4k+\frac{1}{\delta})$-thin and its  weight is bounded by
$w(\tilde{T})\leq w(\tilde{L}'_{L})+w(\tilde{L}'_{R})\le\frac{2}{k}\cdot(w(\tilde{L}_{L})+w(\tilde{L}_{R}))\le\frac{4}{k}\cdot
w(L)=\frac{4}{k}\cdot w(T^{*})$.
\end{proof}

\section{The Dynamic Program}\label{sec:DP}

In this section we present a dynamic program computing the weakly feasible
$k$-thin maze pair with maximum profit (where $k=k(\eps,\delta)$ will
correspond to the constant $k$ of Lemma~\ref{lem:apx}).
Thus, we prove Lemma \ref{lem:DP}.

Let $k\in \mathbb{N}$. To simplify the description and analysis of our DP, we introduce
the following assumptions and notations. For having a clearly defined
root in the DP, we add an edge $e^{*}$ to the left of $E$ with $u_{e^{*}}=0$
(note that $e^{*}$ is used by no task). 
For notational convenience, we add to $M$ two special dummy
m-tasks $\bot$ and $\top$. The paths of $\bot$ and $\top$ span all the
edges of the graph, and they both have demand zero. Furthermore,
$b(\top):=+\infty$
and $b(\bot):=0$. In particular, with these definitions we have that
$\abv_{e}(\top,T')=\crit_{e}(\top,T')=\emptyset$,
$\abv_{e}(\bot,T')=T'\cap T_{e}$, and $\crit_{e}(\bot,T')=\emptyset$.
We let $e(\bot)$ be the rightmost edge of the graph, and we leave
$e(\top)$ unspecified. However, when talking about weak-feasibility
and $k$-thinness of a maze pair $(T',M')$ we will ignore dummy tasks,
that is, we will implicitly consider $(T',M'-\{\bot,\top\})$. 

For any $e\in E$, $T'\subseteq T$, and any two m-tasks $m'$ and
$m''$ with $b(m')<b(m'')$, the \emph{boundary tasks} in $T'$ for the
triple $(e,m',m'')$ are the tasks $\bound_{e}(m',m'',T'):=\{i\in T'\cap
T_{e}\,:\, b(m')<b(i)\leq b(m'')\}$.
Intuitively, boundary tasks $i$ are the tasks using edge
$e$ such that the segment corresponding to~$i$ is sandwiched between
the segments corresponding to $m'$ and $m''$.

In our dynamic programming table we introduce a cell for each entry
of the form $c=(e,m_{\up},C_{\up},m_{\dn},C_{\dn},B)$ where: 
\begin{itemize}\itemsep0pt 
\item $e$ is an edge; 
\item $m_{\dn}\in M_{e}$ and $m_{\up}\in M_{e}$, $b(m_{\dn})<b(m_{\up})$; 
\item $C_{\dn}\subseteq\crit(m_{\dn},T)$ and $C_{\up}\subseteq\crit(m_{\up},T)$,
with $\Card{C_{\up}},\Card{C_{\dn}}\leq ncrit(\delta)$; 
\item $B\subseteq\bound_{e}(m_{\dn},m_{\up},T)$, with $\Card{B}\leq k$. 
\end{itemize}

Observe that $C_{\dn}$ and $B$ are disjoint, while $C_{\up}$ might
overlap with both $C_{\dn}$ and $B$. 
For such a cell to exist we further impose the following \emph{consistency property}: 
\begin{itemize}
\item $(B\cup C_{\dn}\cup C_{\up},\{m_{\dn},m_{\up}\})$ is weakly feasible;
\item for $T'=B\cup C_{\dn}\cup C_{\up}$ we require $\crit(m_{\dn},T')=C_{\dn}$, $\crit(m_{\up},T')=C_{\up}$, and $\bound_{e}(m_{\dn},m_{\up},T')=B$.
\end{itemize}

Given a DP cell $c=(e,m_{\up},C_{\up},m_{\dn},C_{\dn},B)$, as a shorthand notation we use $e(c):=e$, $m_{\up}(c):=m_{\up}$ and similarly for the
other entries of the cell.
We also define $e_{\dn}=e_{\dn}(c):=e(m_{\dn})$ and
$e_{\up}=e_{\up}(c):=e(m_{\up})$ (we set $e_{\up}=e$ if $m_{\up}=\top$).

The idea behind a cell $c$ is as follows. We define $E(c)$ as the
set of edges between $e_{\up}$ (included) and $e_{\dn}$ (excluded) (if
$e_{\up}=e_{\dn}$, we assume $E(c)=\emptyset$). We define $T(c)$
as the set of tasks $i$ with bottleneck edge in $E(c)$ such
that $b(i)>b(m_{\up})$ or $P(i)$ {contains neither $e$
nor $e_{\up}$. We define $M(c)$ similarly w.r.t. m-tasks. For a geometric
intuition we can think of cell $c$ as defining an area such that $T(c)$ and
$M(c)$ belong entirely inside---see Figure~\ref{fig:area}.

Our goal is to compute the maze-pair $(T_{c},M_{c})$ with $T_{c}\subseteq T(c)$
and $M_{c}\subseteq M(c)$ with maximum weight $w(c):=w(T_{c})$ such that: 
\begin{itemize}\itemsep0pt
\item $(T_{c}\cup B\cup C_{\dn}\cup C_{\up},M_{c}\cup\{m_{\dn},m_{\up}\})$
is weakly feasible; 
\item $(T_{c}\cup B,M_{c}\cup\{m_{\dn},m_{\up}\})$ is $k$-thin; 
\item If $i\in crit(m_{\up},T_{c})$ then $i\in C_{\up}$ (\emph{inclusion
property}). 
\end{itemize}
We call maze-pairs fulfilling the above properties \emph{feasible
for~$c$.} From this definition it follows that the optimal solution
for the cell $c^{*}:=(e^{*},\bot,\emptyset,\top,\emptyset,\emptyset)$
is the weakly feasible $k$-thin maze pair $(T_{c^{*}},M_{c^{*}})$
with maximum weight $w(T_{c^{*}})$.

We define a partial order $\prec$ for the cells and fill in the DP-table 
w.r.t. this order (breaking ties arbitrarily). Intuitively speaking,
we define $\prec$ to ensure that $c'\prec c''$ if the area (within the capacity curve) corresponding to $c'$ is contained in the area corresponding to $c''$. The
following definition achieves this: for two edges $e'$ and $e''$,
we let $\Card{e'-e''}$ be the number of edges between $e'$ and $e''$,
boundary included. We define that $c'\prec c''$ if (in a lexicographic
sense)
$(\Card{e_{\up}(c')-e_{\dn}(c')},\Card{e(c')-e_{\dn}(c')})$ $<_{lex}$\newline $(\Card{e_{\up}(c'')-e_{\dn}(c'')},\Card{e(c'')-e_{\dn}(c'')})$.

The base case cells are obtained when $e=e_{\dn}$. In this case one
must have $m_{\up}=\top$, and hence $e_{\up}=e$. Also $T(c)=\emptyset=M(c)$.
For those cells we set $(T_{c},M_{c}):=(\emptyset,\emptyset)$ (hence
$w(c)=0$).

Consider a cell $c$ that is not a base case. For the sake of presentation,
assume that $e_{\dn}$ is to the right of $e$, the other case being
symmetric. Let $e_{r}$ be the first edge to the right of $e$ (possibly
$e_{r}=e_{\dn}$). We will compute $(T_c,M_c)$ as a
function of some pairs $(T_{c'},M_{c'})$ with $c'\prec c$,
considering the following three branching cases (see
Figure~\ref{fig:branching}):

\begin{figure}
\centering
\scalebox{.4}{\input{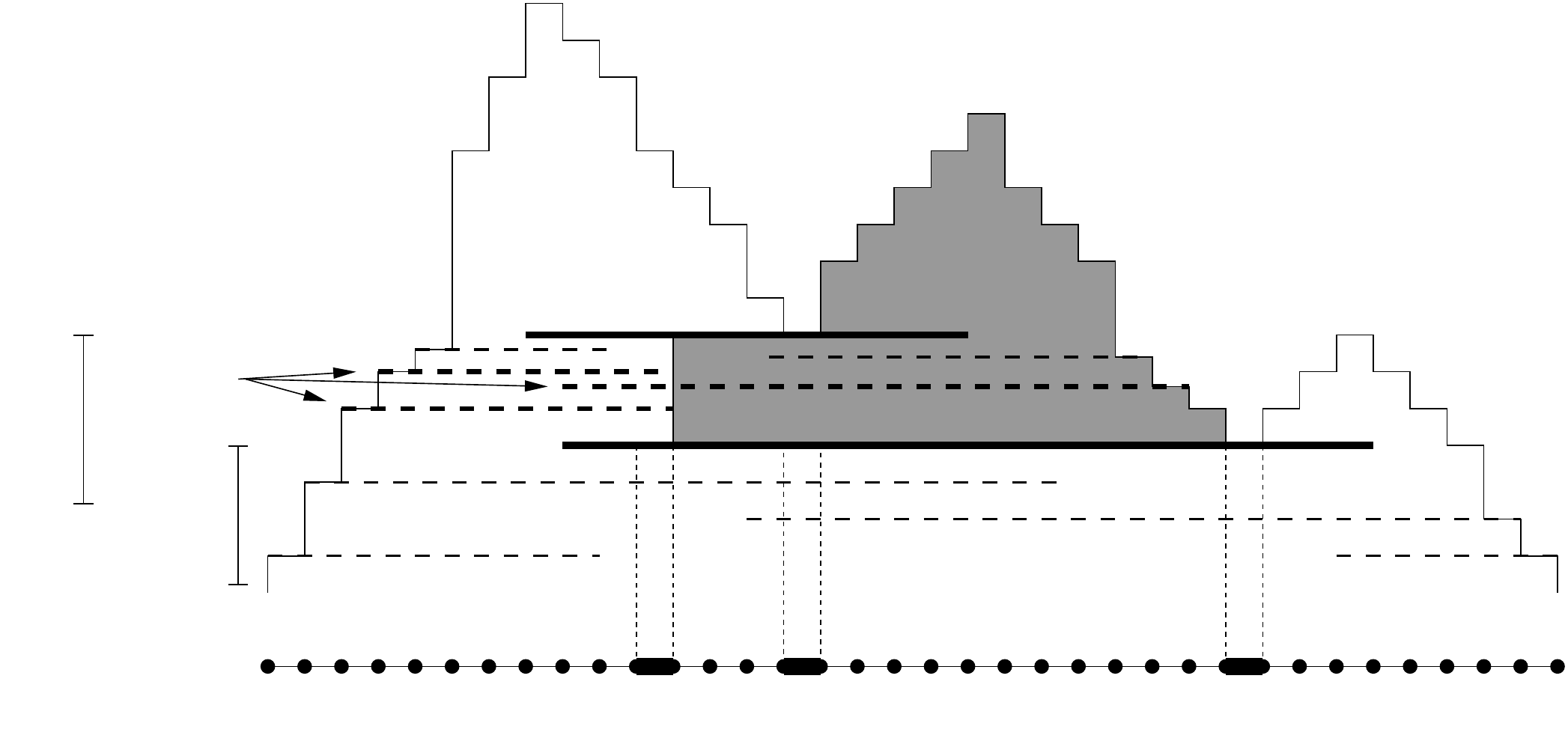_t}}
\caption{\label{fig:area}
Tasks $B\cup C_\dn \cup C_\up$ (dashed) and area associated to a DP
cell~$c$. Tasks in $C_\up$ ($C_\dn$) use a common edge with $m_\up$ ($m_\dn$).
Tasks (resp., m-tasks) that lie entirely within the shaded area
are those that belong to $T(c)$ (resp., $M(c)$).}
\end{figure}

\begin{figure}
\centering
\subfigure[\label{fig:cases-single}single branching 
]{\scalebox{.45}{\input{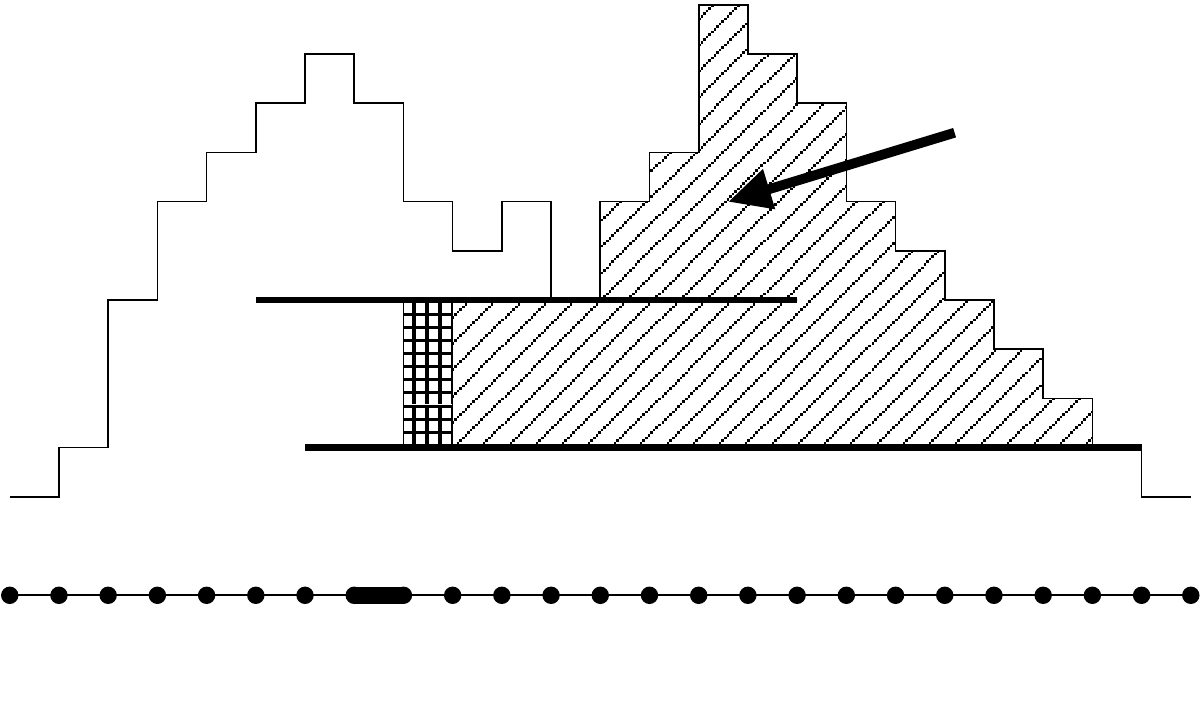_t}}}\hspace{.3cm}
\subfigure[\label{fig:cases-topBottom}top-bottom branching]{\scalebox{.45}{\input{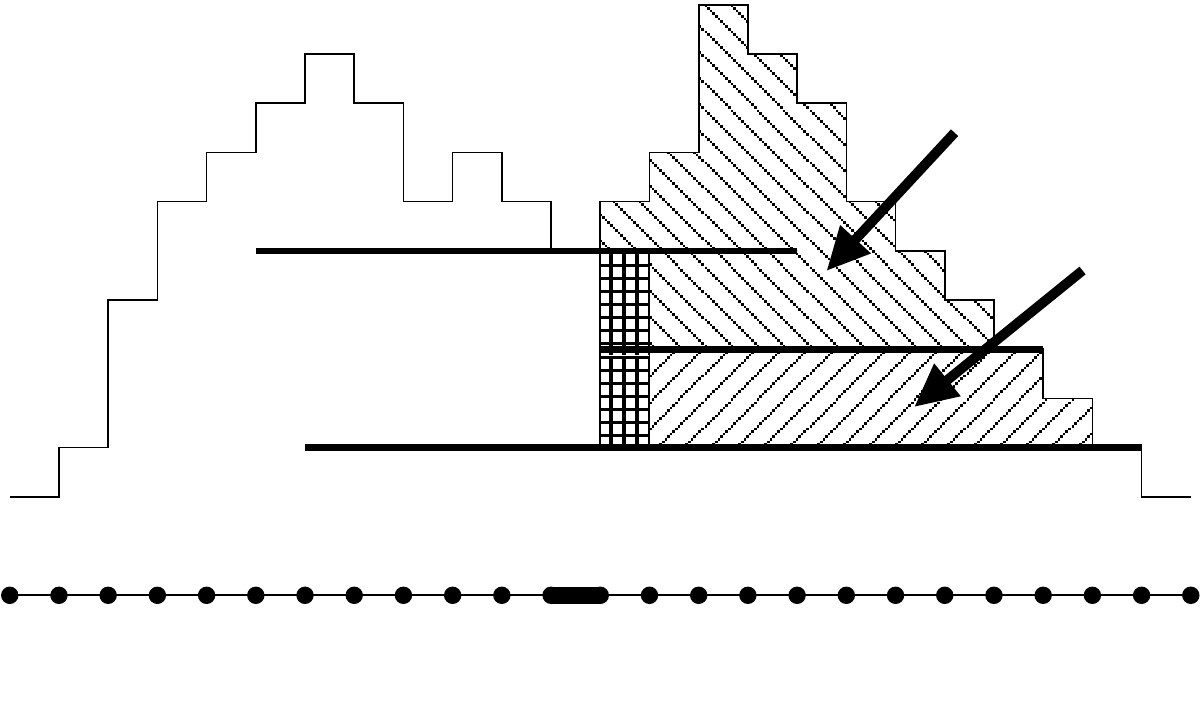_t}}}\hspace{.5cm}
\subfigure[\label{fig:cases-leftRight}left-right branching]{\scalebox{.45}{\input{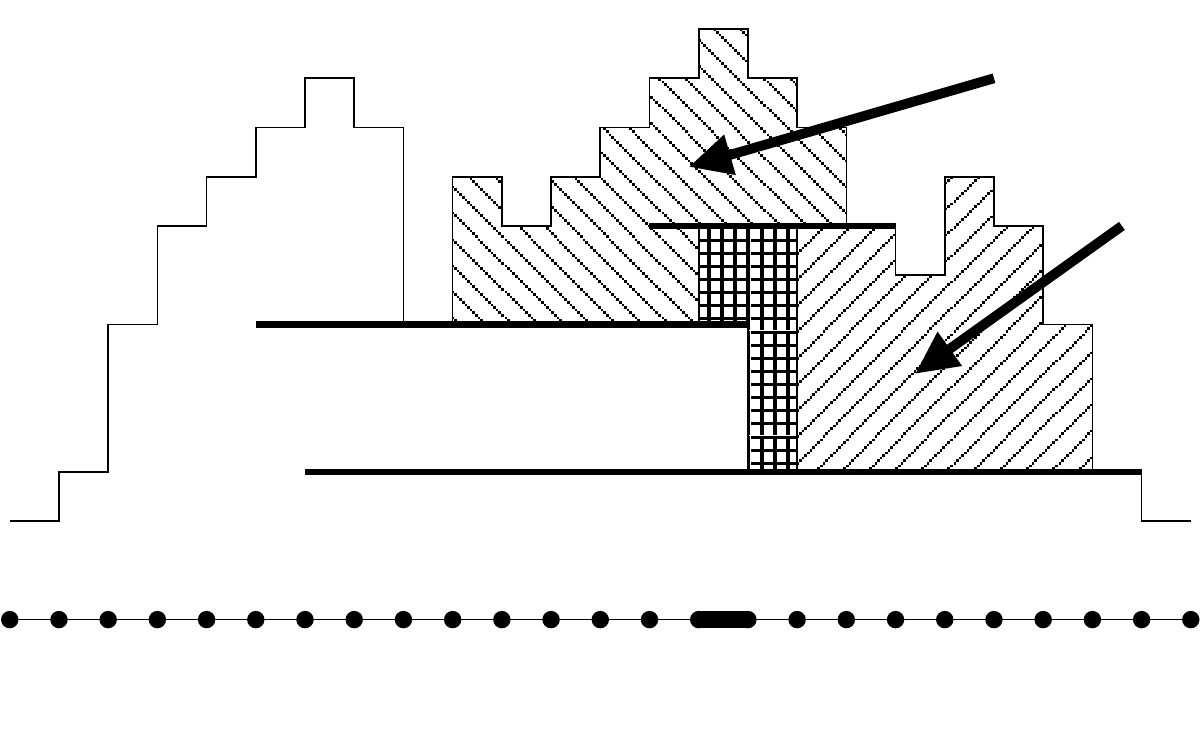_t}}}\hspace{.5cm}
\caption{The three branching cases. The area of the cell $c$ is the area of
the subcell(s) $c_s$, $(c_t,c_b)$, and $(c_l,c_r)$, respectively, in addition to the checkered pattern.}
\label{fig:branching}
\end{figure}

\begin{itemize}
\item \textbf{(single branching)} This case applies only when $m_{\up}$
uses both $e$ and $e_{r}$ (possibly $m_{\up}=\top$). Consider
any feasible entry $c_{s}=(e_{r},m_{\dn},C_{\dn},m_{\up},C_{\up},B_{s})$
with the following extra \emph{compatibility property}: 

\begin{quote}
for $T':=C_{\dn}\cup C_{\up}\cup B\cup B_{s}$, one has $\crit(m_{\dn},T')=C_{\dn}$,
$\crit(m_{\up},T')=C_{\up}$, $\bound_{e}(m_{\dn},m_{\up},T')=B$,
and $\bound_{e_{r}}(m_{\dn},m_{\up},T')=B_{s}$. 
\end{quote}

Set $w_{sb}(c)\leftarrow\max_{c_{s}}\{w(c_{s})+w(B_{s} \setminus B)\}$. 

\item \textbf{(top-bottom branching)} This case applies only when $m_{\up}$
uses both $e$ and $e_{r}$ (possibly $m_{\up}=\top$). Consider
any m-task $m_{mid}\neq\top$ that has $e_{r}$ as its leftmost edge
and such that $b(m_{\dn})<b(m_{mid})<b(m_{\up})$. Consider any pair
of feasible entries\\
$c_{b}=(e_{r},m_{\dn},C_{\dn},m_{mid},C_{mid},B_{b})$ and\\
$c_{t}=(e_{r},m_{mid},C_{mid},m_{\up},C_{\up},B_{t})$ with the following
extra \emph{compatibility property}: 

\begin{quote}
for $T':=C_{\dn}\cup C_{\up}\cup C_{mid}\cup B\cup B_{b}\cup B_{t}$,
one has $\crit(m_{\up},T')=C_{\up}$, $\crit(m_{\dn},T')=C_{\dn}$,
$\crit(m_{mid},T')=C_{mid}$, $\bound_{e}(m_{\dn},m_{\up},T')=B$,
$\bound_{e_{r}}(m_{\dn},m_{mid},T')=B_{b}$, and\\
$\bound_{e_{r}}(m_{mid},m_{\up},T')=B_{t}$. 
\end{quote}

Set $w_{tb}(c)\leftarrow\max_{(c_{b},c_{t})}\{w(c_{b})+w(c_{t})+w((B_{b}\cup B_{t})\setminus B)\}$. 

\item \textbf{(left-right branching)} This branching applies only to the
case that $e$ is the rightmost edge of $m_{\up}$, and $m_{\up}\neq\top$.
Consider any m-task $m_{abv}$ that uses
both $e_{r}$ and $e$ and
with $b(m_{abv})>b(m_{\up})$ (possibly $m_{abv}=\top$). Consider the pairs of feasible entries\\
 $c_{l}=(e,m_{\up},C_{\up},m_{abv},C_{abv},B_{l})$ and\\
 $c_{r}=(e_{r},m_{\dn},C_{\dn},m_{abv},C_{abv},B_{r})$ with the following
extra \emph{compatibility property}: 

\begin{quote}
for $T':=C_{\dn}\cup C_{\up}\cup C_{abv}\cup B\cup B_{l}\cup B_{r}$,
one has $\crit(m_{\up},T')=C_{\up}$, $\crit(m_{\dn},T')=C_{\dn}$,
$\crit(m_{abv},T')=C_{abv}$, $\bound_{e}(m_{\dn},m_{\up},T')=B$,
$\bound_{e_{r}}(m_{\dn},m_{abv},T')=B_{r}$, and\\
$\bound_{e}(m_{\up},m_{abv},T')=B_{l}$. 
\end{quote}

We set $w_{lr}(c)\leftarrow\max_{(c_{l},c_{r})}\{w(c_{l})+w(c_{r})+w((B_{l}\cup B_{r}) \setminus B)\}$. 

\end{itemize}
Finally, we define $w(c):=\max\{w_{sb}(c),w_{tb}(c),w_{lr}(c)\}$.
Depending on the case attaining the maximum, we define $(T_{c},M_{c})$:
if the maximum is achieved in the single-branching case for some $c_{s}$,
then we set $T_{c}\leftarrow T_{c_{s}}\cup(B_{s}\setminus B)$ and $M_{c}\leftarrow M_{c_{s}}$.
If the maximum is achieved in the top-bottom branching for some $c_{b}$
and $c_{t}$, we set $T_{c}\leftarrow T_{c_{b}}\cup T_{c_{t}}\cup((B_{b}\cup B_{t}) \setminus B)$
and $M_{c}\leftarrow M_{c_{b}}\cup M_{c_{t}}\cup\{m_{mid}\}$. Similarly,
if the maximum is achieved in the left-right branching for some $c_{l}$
and $c_{r}$, we set $T_{c}\leftarrow T_{c_{l}}\cup T_{c_{r}}\cup((B_{l}\cup B_{r}) \setminus B)$
and $M_{c}\leftarrow M_{c_{l}}\cup M_{c_{r}}\cup\{m_{abv}\}$.

Observe that, in the single branching case, one has
that $|e_\up(c_s)-e_\dn(c_s)|=|e_\up(c)-e_\dn(c)|$ and that
  $|e(c_s)-e_\dn(c_s)|<|e(c)-e_\dn(c)|$. In the other cases one has
  $\Card{e_\up(c')-e_\dn(c')}<\Card{e_\up(c)-e_\dn(c)}$, where $c'\in
  \{c_b,c_t,c_l,c_r\}$. Hence $c_s,c_b,c_t,c_l,c_r\prec c$ as required.
Note also that $c^{*}$ is the only feasible table entry associated to
edge $e^{*}$ and for any other entry $c$ it holds that $c\prec c^{*}$.
The DP outputs $(T_{c^{*}},M_{c^{*}})$ and we return $T_{c^{*}}$
as the computed set of tasks.
\begin{lem}\label{lem:DP-running-time}
For any constant $\delta,k>0$, $k\in \mathbb{N}$, the above dynamic program runs in polynomial time.
\end{lem}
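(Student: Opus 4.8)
The plan is to bound separately (i) the total number of DP cells and (ii) the work the algorithm performs at a single cell; since the table is filled once per cell, following the order $\prec$, the running time is --- up to the polynomial preprocessing and the polynomial-time construction of $M$, of the sets $M_e$, and of the sets $\crit(\cdot,T)$ and $\bound_e(\cdot,\cdot,T)$ --- the product of these two quantities.

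First I would count the cells. After the preprocessing every vertex is the start or the end of exactly one task, so $|V|=\Theta(|T|)$ and there are $O(|T|)$ choices for $e$ (the edge $e^{*}$ and the dummy m-tasks $\bot,\top$ add only $O(1)$). Each m-task is obtained from a pair of tasks with a common bottleneck edge, and since every task has a unique bottleneck edge the number of such ordered pairs is $\sum_e t_e^2\le(\sum_e t_e)^2=|T|^2$, where $t_e$ is the number of tasks with bottleneck edge $e$; hence $|M|=O(|T|^2)$ and there are $O(|T|^4)$ choices for the pair $(m_{\dn},m_{\up})$. By the definition of a cell, $C_{\dn}$ and $C_{\up}$ are subsets of $T$ of cardinality at most $ncrit(\delta)=\frac{4}{\delta^2}+\frac{1}{\delta}$ (cf.\ Lemma~\ref{lem:critical}), and $B$ is a subset of $T$ of cardinality at most $k$; since $\delta$ and $k$ are constants, the numbers of such subsets are $O(|T|^{ncrit(\delta)})$, $O(|T|^{ncrit(\delta)})$ and $O(|T|^{k})$. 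Multiplying, the number of candidate tuples --- and a fortiori of actual cells --- is $O(|T|^{5+2\,ncrit(\delta)+k})$, polynomial for fixed $\delta,k$. Deciding whether a given candidate is a cell requires only checking the consistency property: weak feasibility of $(B\cup C_{\dn}\cup C_{\up},\{m_{\dn},m_{\up}\})$, which is a demand comparison on each of the $O(|T|)$ edges, and the equalities $\crit(m_{\dn},T')=C_{\dn}$, $\crit(m_{\up},T')=C_{\up}$, $\bound_e(m_{\dn},m_{\up},T')=B$ for $T'=B\cup C_{\dn}\cup C_{\up}$; since $T'$ has constant size, the $\crit$ and $\bound$ sets are obtained by inspecting $O(1)$ tasks, so the whole test runs in polynomial time, and the set of all cells can be enumerated in polynomial time.

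Next I would bound the work spent computing $w(c)$ for a fixed cell $c$. Base cells ($e=e_{\dn}$) are set in $O(1)$. For a non-base cell the neighbouring edge $e_r$ is uniquely determined, and in each of the three branching cases the candidate sub-cells share $m_{\dn},C_{\dn},m_{\up},C_{\up}$ with $c$ (and, in the top-bottom and left-right cases, also the extra m-task --- $m_{mid}$ resp.\ $m_{abv}$ --- and its critical set); the components that vary are the boundary sets $B_s$, resp.\ $(B_b,B_t)$, resp.\ $(B_l,B_r)$ (each of cardinality $\le k$), and, in the latter two cases, one m-task ($O(|T|^2)$ choices) with its critical set ($O(|T|^{ncrit(\delta)})$ choices). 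Hence the number of sub-cell configurations inspected for $c$ is $O(|T|^{k})$ for single branching and $O(|T|^{2+ncrit(\delta)+2k})$ for the other two. Checking the compatibility property of a configuration is once more a constant-size computation of the relevant $\crit$ and $\bound$ sets plus a scan over the $O(|T|)$ edges for weak feasibility, and the needed values $w(c_s),w(c_b),w(c_t),w(c_l),w(c_r)$ are already in the table, since $c_s,c_b,c_t,c_l,c_r\prec c$ (as observed right after the branching rules) and the table is filled in $\prec$-order --- the order $\prec$ itself being a lexicographic comparison of pairs of integers, computable in polynomial time. Thus each cell is processed in polynomial time.

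Combining the two bounds, the DP enumerates polynomially many cells, sorts them by $\prec$, fills each with polynomial work, and outputs $(T_{c^{*}},M_{c^{*}})$; the total time is polynomial. I expect the only delicate point to be the cell count: one must observe that the a priori large objects $\crit(m,T)$, $\bound_e(\cdot,\cdot,T)$, and $M_e$ enter a cell only through a single element or through a subset whose cardinality is bounded by the constant $ncrit(\delta)$ or $k$, so that every exponent appearing in the count is a constant; the rest is routine bookkeeping.
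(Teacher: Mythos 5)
Your proposal is correct and follows the same approach as the paper's own (much terser) proof: the cell count is polynomial because $\Card{T\cup M}$ is polynomial in $\Card{T}$ while $ncrit(\delta)$ and $k$ are constants bounding the cardinalities of $C_{\dn}$, $C_{\up}$, and $B$, and each cell requires only polynomially many sub-cell configurations to evaluate. You have simply made the exponents and the per-cell bookkeeping explicit, which the paper leaves implicit.
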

\begin{proof} Note that the number
of cells is polynomially bounded since $\Card{T\cup M}$ is polynomially
bounded in $\Card{T}$ and $k(\eps,\delta)$ and $ncrit(\delta)$ are constants.
Similarly, in the computation of each pair $(T_{c},M_{c})$ one has
to consider only a polynomial number of possibilities. Altogether,
the dynamic program runs in polynomial time. \end{proof}

We next show the correctness of the dynamic program. Consider any
cell $c$. First observe that $T_c\subseteq T(c)$ and $M_c\subseteq M(c)$. Also, by an easy induction, any two distinct m-tasks in $M_c$ have different bottleneck capacity. In other terms, $(T_c,M_c)$ is a well-defined maze pair. 
We next prove that $w(T_{c})\geq w(T'_{c})$ for any feasible
pair $(T'_{c},M'_{c})$ for $c$ (Lemma~\ref{lem:DP-optimality}).
Then we prove that $(T_{c},M_{c})$ is feasible for~$c$, by showing
that it has the $3$ required properties (Lemmas \ref{lem:DP-k-thinness},
\ref{lem:DP-inclusion}, \ref{lem:DP-weak-feasibility}).

For showing the next lemma, we prove that if a pair $(T'_{c},M'_{c})$
is feasible for a cell $c$, then it can be decomposed into the feasible
solution for a cell $c_{s}$ and the tasks in $B(c_{s})\setminus  B$ or into
feasible solutions for two cells $c_{t},c_{b}$ (or $c_{l},c_{r}$)
and the tasks in $(B(c_{t})\cup B(c_{b}))\setminus  B$ (or $(B(c_{l})\cup B(c_{r})) \setminus B$),
depending on the applying branching case.

\begin{lem}\label{lem:DP-optimality} Let $(T'_{c},M'_{c})$ be a feasible maze-pair for cell $c=(e,m_{\dn},C_{\dn},m_{\up},C_{\up},B)$.
Then $w(T'_{c})\le w(T_{c})$. 
\end{lem}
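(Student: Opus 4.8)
The plan is to prove the lemma by induction along the partial order $\prec$, in lockstep with the three branching rules of the dynamic program. Fix a cell $c=(e,m_{\up},C_{\up},m_{\dn},C_{\dn},B)$ and a maze pair $(T'_c,M'_c)$ feasible for $c$. If $c$ is a base case ($e=e_{\dn}$) then $T(c)=\emptyset$, so $T'_c=\emptyset$ and $w(T'_c)=0=w(T_c)$. Otherwise assume w.l.o.g.\ that $e_{\dn}$ lies to the right of $e$ (the other case is symmetric) and let $e_r$ be the edge immediately to the right of $e$. Since the recursion moves toward $e_{\dn}=e(m_{\dn})$, both $m_{\up}$ and $m_{\dn}$ use $e$, and $m_{\dn}$ additionally uses $e_r$ (its path is contiguous and contains $e$ and the edge $e_{\dn}$ to the right of $e$); the m-task $m_{\up}$ either uses $e_r$ as well, or has $e$ as its rightmost edge, in which case $m_{\up}\neq\top$. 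I would then distinguish exactly the three situations handled by the DP:
\begin{itemize}\itemsep0pt
\item $m_{\up}$ uses $e$ and $e_r$, and no m-task of $M'_c$ has $e_r$ as its leftmost edge with bottleneck capacity strictly between $b(m_{\dn})$ and $b(m_{\up})$: \emph{single branching};
\item $m_{\up}$ uses $e$ and $e_r$, and such an m-task exists: \emph{top-bottom branching}, taking $m_{mid}$ to be one of smallest bottleneck capacity among them;
\item $e$ is the rightmost edge of $m_{\up}$: \emph{left-right branching}, taking $m_{abv}$ to be the m-task of smallest bottleneck capacity in $\{\top\}\cup(M'_c\cap M_e\cap M_{e_r})$ with $b(m_{abv})>b(m_{\up})$.
\end{itemize}

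In each situation I would read off the data of the subcell(s) directly from $(T'_c,M'_c)$ and the data of $c$. Writing $\hat{T}:=T'_c\cup B\cup C_{\dn}\cup C_{\up}$ for the common task collection, I take each new critical set ($C_{mid}$ for $m_{mid}$, or $C_{abv}$ for $m_{abv}$) to be the critical tasks of that m-task with respect to $\hat{T}$, and each new boundary set ($B_s$; or $B_b,B_t$; or $B_l,B_r$) to be the corresponding boundary set of $\hat{T}$ at the relevant edge ($e_r$, and $e$ for the left piece in the left-right case). The restricted pair for a subcell $c'$ is $(T'_c\cap T(c'),\,M'_c\cap M(c'))$, with $m_{mid}$, resp.\ $m_{abv}$, extracted from the m-task side and recorded as the new middle m-task. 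Then I would check that each subcell is a legitimate DP cell: every new critical set has size at most $ncrit(\delta)$ by Lemma~\ref{lem:critical}, every new boundary set has size at most $k$ by the $k$-thinness of $(T'_c\cup B,M'_c\cup\{m_{\dn},m_{\up}\})$, and the consistency property as well as the \emph{compatibility} properties required in the DP transition hold \emph{by construction}, because all the $C$- and $B$-sets are critical/boundary sets of the single collection $\hat{T}$. Weak feasibility and $k$-thinness of the restricted pairs w.r.t.\ the subcell's bounding m-tasks are inherited from those of $(T'_c,M'_c)$ for $c$: every edge of a subarea is an edge of the area of $c$, and on such an edge the relevant top m-task of the subcell is either $m_{\up}$ (or $m_{\dn}$) or one of $m_{mid},m_{abv}$, in each case yielding a weak-feasibility inequality and a $k$-thinness condition no stronger than a global one; the inclusion property for the new top m-task holds since its critical set relative to the smaller restricted solution is contained in its critical set relative to $\hat{T}$.

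With the subcells in place I would invoke the inductive hypothesis (each of $c_s,c_b,c_t,c_l,c_r$ precedes $c$ in $\prec$, as noted before the lemma statement), obtaining $w(T'_{c'})\le w(c')$ for every subcell $c'$. It then remains to observe that $T'_c$ is partitioned by the subarea task set(s) together with the boundary tasks newly appearing at $e_r$: $T'_c=(T'_c\cap T(c_s))\cup(B_s\setminus B)$ in the single case, and analogously $T'_c=(T'_c\cap T(c_b))\cup(T'_c\cap T(c_t))\cup((B_b\cup B_t)\setminus B)$ and $T'_c=(T'_c\cap T(c_l))\cup(T'_c\cap T(c_r))\cup((B_l\cup B_r)\setminus B)$, all unions being disjoint. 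Here one uses that any task of $T'_c\subseteq T(c)$ not captured by a subarea must cross $e_r$ but not $e$ (being in $T(c)$ and not above $m_{\up}$) and have $b(m_{\dn})<b(\cdot)\le b(m_{\up})$ --- the last inequality and the fact that it cannot lie below $m_{\dn}$ follow from a short argument using that the bottleneck edge of a task has the minimum capacity along its path --- hence it lies in the corresponding boundary set. Since $w((B_b\cup B_t)\setminus B)=w(B_b\setminus B)+w(B_t\setminus B)$ and similarly for the left-right case, we get $w(T'_c)\le w(c_s)+w(B_s\setminus B)$ (resp.\ $w(c_b)+w(c_t)+w((B_b\cup B_t)\setminus B)$, resp.\ $w(c_l)+w(c_r)+w((B_l\cup B_r)\setminus B)$), which is exactly one of the quantities over which the DP maximizes when setting $w(c)$. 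Therefore $w(T'_c)\le w(c)=w(T_c)$.

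I expect the main obstacle to be precisely the exact-partition claim of the last paragraph together with the inherited-feasibility check: one must show that restricting a globally weakly feasible and $k$-thin maze pair to a subarea preserves both properties relative to the \emph{subcell's} (possibly new) bounding m-tasks, and simultaneously that every task of $T'_c$ is accounted for once and only once when a task uses both $e$ and $e_r$, or has its bottleneck edge at or near $e_r$, or when an m-task has $e_r$ as its leftmost edge. This is exactly the case distinction that the \emph{compatibility properties} in the DP transition are designed to pin down, so the proof reduces to matching those definitions; carrying this out for all three branchings and their left/right mirror images --- and in particular tracking which m-tasks of $M'_c$ land in which piece of the left-right split --- is where the real work lies.
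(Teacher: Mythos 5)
Your proof follows essentially the same route as the paper's: induction along $\prec$, the same three-way case analysis at $e_r$ (single, top--bottom, left--right), reading the subcells' $C$- and $B$-sets off the given feasible pair, verifying these are legitimate DP cells via Lemma~\ref{lem:critical} and the $k$-thinness of $(T'_c\cup B, M'_c\cup\{m_\dn,m_\up\})$, and concluding from the disjoint partition of $T'_c$ into the subareas' task sets plus the newly exposed boundary tasks at $e_r$. The only cosmetic difference is that you extract the new boundary and critical sets from $T'_c\cup B\cup C_\dn\cup C_\up$ where the paper uses $T'_c\cup B$; this does not change the argument.
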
 
\begin{proof} We show the
claim by induction, following the partial order $\prec$ on cells. For the base cases, it is clear that $w(T'_{c})=w(T_{c})=0$
and the claim follows.

Now consider a non-base-case cell $c$ and suppose the
claim is true for all cells $c'$ with $c'\prec c$. W.l.o.g.~assume
again that $e_{\dn}$ lies on the right of $e$, and let $e_{r}$
be the edge adjacent to $e$ on the right. We distinguish cases, depending
on which m-tasks use $e_{r}$. 

First suppose that there is no m-task $m_{mid}\in M'_{c}\cap M_{e_{r}}$
with $b(m_{\dn})<b(m_{mid})<b(m_{\up})$ using $e_{r}$ and that $m_{\up}$
uses $e_{r}$, where possibly $m_{\up}=\top$ (single branching case).
Then consider the DP-cell $c_{s}=(e_{r},m_{\dn},C_{\dn},m_{\up},C_{\up},B_{s})$
with $B_{s}=\bound_{e_{r}}(m_{\dn},m_{\up},T'_{c}\cup B)$. 
Since $(T'_c,M'_c)$ is feasible for $c$, $c_s$ is indeed a cell in our DP-table. In particular, observe that $|B_s|\le k$ since $(T'_c,M'_c)$ is $k$-thin. The consistency property follows by the weak feasibility of $(T'_c,M'_c)$ and from the compatibility property of the single branching.
By induction, we know that the DP computed the optimal solution
$(T_{c_{s}},M_{c_{s}})$ for $c_{s}$. In particular, $w(T_{c_{s}})\ge w(T'_{c})- w(B_{s}\setminus B)$
since $(T'_{c}\setminus (B_{s}\setminus B),M'_{c})$ is feasible for~$c_{s}$. 
By definition of the DP-transition,
\[\begin{split}
w(T_{c})&\ge w_{sb}(c)\ge w(T_{c_{s}})+w(B_{s}\setminus B)\\
&\ge(w(T'_{c})-w(B_{s}\setminus B))+w(B_{s}\setminus B)\\
&=w(T'_{c}).
\end{split}\]

Next consider the case that there is an m-task $m_{mid}\in M'_{c}\cap M_{e_{r}}$
with $b(m_{\dn})<b(m_{mid})<b(m_{\up})$ using $e_{r}$. Note that by our preprocessing then $m_{\up}$
uses $e_{r}$ where
possibly $m_{\up}=\top$ (top-bottom branching).
Also observe that there can be at most one such
task $m_{mid}$ by our preprocessing and using that any two m-tasks in a maze pair have different bottleneck capacities.
Let us consider the (bottom) cell $c_{b}=(e_{r},m_{\dn},C_{\dn},m_{mid},C_{mid},B_{b})$
and the (top) cell $c_{t}=(e_{r},m_{mid},C_{mid},m_{\up},C_{\up},B_{t})$
where we define $B_{b}:=\bound_{e_{r}}(m_{\dn},m_{mid},T'_{c}\cup B)$, $B_{t}:=\bound_{e_{r}}(m_{mid},m_{\up},T'_{c}\cup B)$, and $C_{mid}:=\crit(m_{mid},T'_{c}\cup B)$.
Also in this case, the feasibility of $(T'_c,M'_c)$ for $c$
implies that $c_l$ and $c_r$ are in fact DP-cells. In particular, since $(T'_{c},M'_{c})$ is weakly feasible, $|C_{mid}|\leq ncrit(\delta)$ by Lemma~\ref{lem:critical}.
The pair $(T'_{c}\cap T(c_{b}),M'_{c}\cap M(c_b)\})$
is feasible for $c_{b}$ and the pair $(T'_{c}\cap T(c_{t}),M'_{c}\cap M(c_b))$
is feasible for~$c_{t}$. In this case $T'_c$ is partitioned by $T'_{c}\cap T(c_{b})$, $T'_{c}\cap T(c_{t})$, and 
$(B_{b}\cup B_{t})\setminus B$. Hence,
\[\begin{split}
w(T_{c})&\ge w_{tb}(c) \ge  w(c_{b})+w(c_{t})+w((B_{b}\cup B_{t})\setminus B)\\
&\ge w(T'_{c}\cap T(c_{b}))+w(T'_{c}\cap T(c_{t}))+w((B_{b}\cup B_{t})\setminus B)\\
&= w(T'_{c}).
\end{split}\]

Finally, consider the case that there is no m-task $m_{mid}\in M'_{c}\cap M_{e_{r}}$
with $b(m_{\dn})<b(m_{mid})<b(m_{\up})$ and that $m_{\up}$ does
not use $e_{r}$ (left-right branching case). Let $m_{abv}\in M'_{c}\cap M_{e_{r}}$
be the m-task minimizing $b(m_{abv})$ such that $b(m_{abv})>b(m_{\up})$
(possibly $m_{abv}=\top$). 
By the preprocessing 
of the input tasks, if $m_{abv}\neq\top$, then $m_{abv}$ must use
$e$, as well (otherwise two m-tasks with different bottleneck capacities would share one endpoint). 
Consider the DP-cells $c_{l}=(e,m_{\up},C_{\up},m_{abv},C_{abv},B_{l})$ and
$c_{r}=(e_{r},m_{\dn},C_{\dn},m_{abv},C_{abv},B_{r})$ where we define $B_{l}=\bound_{e}(m_{\up},m_{abv},T'_{c}\cup B)$,
$B_{r}=\bound_{e_{r}}(m_{\dn},m_{abv},T'_{c}\cup B)$, and $C_{abv}=\crit(m_{abv},T'_{c}\cup B)$.

Again, since $(T'_{c},M'_{c})$ is feasble for $c$, $c_l$ and $c_r$ are in fact DP-cells.

Also, the pair $(T'_{c}\cap T(c_{l}),M'_{c}\cap M(c_l))$
is feasible for $c_{l}$ and the pair $(T'_{c}\cap T(c_r),M'_{c}\cap M(c_r))$
is feasible for~$c_{r}$. By induction, we know that the DP computed
the optimal solutions $(T_{c_{l}},M_{c_{l}})$ and $(T_{c_{r}},M_{c_{r}})$
for $c_{l}$ and $c_{r}$, respectively. Observe that $T'_c$ is partitioned by $T'_{c}\cap T(c_{l})$, $T'_{c}\cap T(c_{r})$, and 
$(B_{l}\cup B_{r})\setminus B$. Hence,
\[\begin{split}
w(T_{c}) &\ge w_{lr}(c)\ge w(c_{l})+w(c_{r})+w((B_{l}\cup B_{r})\setminus B)\\
&= w(T'_{c}\cap T(c_{l}))+w(T'_{c}\cap T(c_{r}))+w((B_{l}\cup B_{r})\setminus B)\\
&= w(T'_{c}).
\end{split}\]
This concludes the proof.
\end{proof}

The proofs of the next three lemmas use a similar inductive pattern.
We show that whenever we extend the solution for a cell $c_{s}$ or
combine the solutions for two cells $c_{t,}c_{b}$ or $c_{l},c_{r}$
to a solution for some cell $c$ according to the DP-transition, then the
new solution is $k$-thin (has the inclusion property, is weakly feasible)
assuming that the original cells $c_{s}$ or $c_{t,}c_{b}$ or $c_{l},c_{r}$
were $k$-thin (have the inclusion property, are weakly feasible).

\begin{lem}[$k$-thinness] \label{lem:DP-k-thinness} For each
table entry
 $c=(e,m_{\dn},C_{\dn},m_{\up},C_{\up},B)$, we have that $(T_{c}\cup B,M_{c}\cup\{m_{\dn},m_{\up}\})$
is $k$-thin.
\end{lem}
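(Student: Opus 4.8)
The plan is to induct on the partial order $\prec$ along which the table is filled, following the particular case of the DP transition that produced $(T_c,M_c)$ (so, unlike in Lemma~\ref{lem:DP-optimality}, we do not range over all feasible pairs, but argue about the one the DP actually builds). For the base cases $e=e_\dn$ we have $T_c=M_c=\emptyset$, and the pair $(B,\{m_\dn,m_\up\})$ is $k$-thin because every task of $B$ uses $e$ and $\Card{B}\le k$, so no edge carries more than $k$ of its tasks at all.

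For the inductive step, assume the claim for every $c'\prec c$, and w.l.o.g.\ that $e_\dn$ is to the right of $e$, with $e_r$ the edge immediately right of $e$. Whichever of the three branching cases defined $(T_c,M_c)$, the new pair has the form $\big(T_{c'}\cup(\text{extra boundary tasks})\cup B,\ M_{c'}\cup\{m_\dn,m_\up\}\big)$ assembled from one subcell $c'=c_s$, or two subcells $c'\in\{c_b,c_t\}$, or $c'\in\{c_l,c_r\}$; in every case $M_{c'}\cup\{m_\dn(c'),m_\up(c')\}\subseteq M_c\cup\{m_\dn,m_\up\}$, and the extra boundary tasks are $B_s\setminus B$, $(B_b\cup B_t)\setminus B$, or $(B_l\cup B_r)\setminus B$, which lie in $B(c_s)$, in $B(c_b)\cup B(c_t)$, resp.\ in $B(c_l)\cup B(c_r)$. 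Two simple facts do most of the work. (i) $k$-thinness is preserved when the task set shrinks and/or the m-task set grows, since its definition quantifies over all subsets $T''$ and asks for a witness in $M'\cap M_e$. (ii) For every edge $e'$ lying in the region of a subcell $c'$ used in the transition, $(T_c\cup B)\cap T_{e'}\subseteq (T_{c'}\cup B(c'))\cap T_{e'}$: a task of $T_c$ contributing there is either already in $T_{c'}$ or is one of the added boundary tasks and hence lies in $B(c')$; and a task of $B$ that uses such an $e'$ must, by the compatibility property of the branching, also lie in $B(c')$ (it shares a delimiting capacity range and a boundary edge with that set). Combining (i), (ii) with the inductive hypothesis for $c'$ yields $k$-thinness of $(T_c\cup B,M_c\cup\{m_\dn,m_\up\})$ on every edge covered by the subcells.

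It remains to treat the edges not covered by any subcell: the edge $e$ in the single- and top-bottom cases, and the edges $e$ and $e_r$ in the left-right case. The key structural fact is that, by the definition of $T(c)$, every task of $T_c$ that uses $e$ has bottleneck capacity strictly above $b(m_\up)$, while the $\le k$ tasks of $B$ have bottleneck capacity in $(b(m_\dn),b(m_\up)]$. Hence a set $T''\subseteq(T_c\cup B)\cap T_e$ with $\Card{T''}>k$ cannot be contained in $B$; if it meets both $B$ and $T_c$ it is sandwiched by $m_\up$ (which uses $e$ in those two cases); and if $T''\subseteq T_c\cap T_e$, then every task in it, having bottleneck edge in the relevant subcell region but reaching $e$, also uses $e_r$, so $T''$ is a subset of that subcell's task set restricted to $e_r$ and the inductive hypothesis applies. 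In the left-right case the argument is symmetric, with $m_{abv}$ playing the role of the upper delimiter at $e$ and at $e_r$.

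\paragraph{Main obstacle.}
The part I expect to be most delicate is making fact (ii) fully rigorous: precisely verifying that a boundary task of $c$ intruding into the region of a subcell $c'$ is accounted for there — either as an element of $B(c')$ via the compatibility property, or, if its bottleneck capacity exceeds $b(m_\up(c'))$, as an element of $T_{c'}\subseteq T(c')$ — together with the fussy bookkeeping around the dummy m-tasks $\bot,\top$ and the convention $e_\up=e$ when $m_\up=\top$ (here one uses that $b(\bot)=0$ and $b(\top)=+\infty$ can never serve as a valid sandwich witness, so ignoring dummies is harmless). Once (ii) is in place, all three branching cases collapse to the inductive hypothesis plus the trivial $\Card{B}\le k$ count and the ``$T_c$ lies above $m_\up$ on $e$'' property.
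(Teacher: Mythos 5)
Your overall strategy coincides with the paper's: induct along $\prec$, dispose of the base case via $\Card{B}\le k$, and in the inductive step inherit $k$-thinness from the subcell(s) while separately checking the one genuinely new passage, namely $\bound_{e}(m_{\dn},m_{\up},T_{c}\cup B)$, which the compatibility property forces to equal $B$ and hence to have at most $k$ elements. The paper packages this more compactly by first restating $k$-thinness as ``for every edge and every pair of consecutive m-tasks of the pair on that edge, the set of sandwiched tasks has size at most $k$,'' after which every branching case collapses to $\Card{\bound_{e}(m_{\dn},m_{\up},T_{c}\cup B)}=\Card{B}\le k$; your facts (i) and (ii) are an expanded version of the same reduction.

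One step of your third paragraph would fail as written. In the sub-case $T''\subseteq T_{c}\cap T_{e}$ (all tasks of $T''$ lying above $m_{\up}$ at the edge $e$), you transfer $T''$ to the edge $e_{r}$ and invoke the inductive hypothesis there. This is problematic twice over: a task of $T_{c}$ whose bottleneck edge lies in $E(c)$ on the left of $e$ (which can happen when $e_{\up}$ is left of $e$) need not use $e_{r}$ at all; and even when every task of $T''$ does use $e_{r}$, the witness produced by the inductive hypothesis is an m-task in $M_{e_{r}}$, which need not use $e$ and therefore is not a valid witness for $T''$ in the $k$-thinness condition at $e$, where the definition demands a witness in $M'\cap M_{e}$. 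The repair is immediate and is what the paper implicitly relies on: in the single branching, $T_{c}\cup B$ and $T_{c_{s}}\cup B_{s}$ differ only by $B\setminus B_{s}$ while the m-task sets of the two pairs coincide, so $T''\subseteq T_{c}\cap T_{e}\subseteq (T_{c_{s}}\cup B_{s})\cap T_{e}$ and the inductive hypothesis for $c_{s}$ applied \emph{at the edge $e$ itself} already yields a witness in $(M_{c_{s}}\cup\{m_{\dn},m_{\up}\})\cap M_{e}=(M_{c}\cup\{m_{\dn},m_{\up}\})\cap M_{e}$. With that substitution (and the analogous one in the top-bottom and left-right cases, where $m_{mid}$, respectively $m_{abv}$, joins the delimiter set), your argument matches the paper's; the only remaining check is the passage $\bound_{e}(m_{\dn},m_{\up},T_{c}\cup B)=B$, exactly as you and the paper both conclude.
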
 
\begin{proof} It is sufficient to show that,
given any edge $e$ and any two tasks $m',m''\in(M_{c}\cup\{m_{\dn},m_{\up}\})\cap M_{e}$,
with $b(m')<b(m'')$ and such that there is no $m'''\in(M_{c}\cup\{m_{\dn},m_{\up}\})\cap M_{e}$
with $b(m')<b(m''')<b(m'')$, then the number of tasks $i$ in $(T_{c}\cup B)\cap T_{e}$
with $b(m')< b(i)\leq b(m'')$ is at most $k$. In other
terms, $\Card{\bound_{e}(m',m'',T_{c}\cup B)}\leq k$.

We prove the latter claim by induction, following the partial order
$\prec$ on the cells. For the base cases, 
recall that for each DP-cell $c$ we required that $\Card{B(c)}\le k$.
Hence, in that case $(T_{c}\cup B,M_{c}\cup\{m_{\dn},m_{\up}\})=(B,\{m_{\dn},m_{\up}\})$
and the claim is trivially true.

Now consider a non-base-case DP-cell $c$ and suppose the
claim is true for all cells $c'$ with $c'\prec c$. Assume w.l.o.g.~that
$e_{\dn}$ lies on the right of $e$. We distinguish the three branching
cases and show that in each case the pair $(T_{c},M_{c})$ is $k$-thin.

First suppose that the single branching case applies, that is, there
is a cell $c_{s}$ such that $T_{c}=T_{c_{s}}\cup(B(c_{s})\setminus B)$ and
$M_{c}=M_{c_{s}}$. By induction $(T_{c_{s}}\cup B(c_{s}),M_{c_{s}}\cup\{m_{\dn},m_{\up}\})$
is $k$-thin. Hence, it suffices to ensure that $\Card{\bound_{e}(m_{\dn},m_{\up},T_{c}\cup B)}\le k$.
However, the latter holds since
$\Card{\bound_{e}(m_{\dn},m_{\up},T_{c}\cup B)}=\Card{\bound_{e}(m_{\dn},m_{\up},B)}=\Card{B}$ by the compatibility property of the branching, and $|B|\leq k$ by the definition of DP-cells.

The same basic argument also works for the remaining two branching cases: it is sufficient to bound\newline $\Card{\bound_{e}(m_{\dn},m_{\up},T_{c}\cup B)}$, and an upper bound of $k$ follows from the compatibility property of the considered branching and by definition of DP-cells.
\end{proof}

\begin{lem}[Inclusion property]
\label{lem:DP-inclusion} For each table cell $c=(e,m_{\dn},C_{\dn},m_{\up},C_{\up},B)$, if $i\in crit(m_{\up},T_{c})$ then $i\in C_{\up}$.\end{lem}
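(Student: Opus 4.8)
The plan is to prove the inclusion property by induction on the partial order $\prec$, exactly as in the proof of Lemma~\ref{lem:DP-k-thinness}. In the base cases $T_c=\emptyset$, so nothing is to be shown. For the inductive step, assume without loss of generality that $e_\dn$ lies to the right of $e$ and let $e_r$ be the edge immediately to its right. In each of the three branching cases the computed $T_c$ is a disjoint union of the task sets $T_{c'}$ inherited from one or two sub-cells $c'\prec c$ together with a set $N$ of new boundary tasks: $N=B_s\setminus B$ for the single branching, $N=(B_b\cup B_t)\setminus B$ for the top-bottom branching, and $N=(B_l\cup B_r)\setminus B$ for the left-right branching. I will use throughout the trivial fact that $i\in\crit(m,S)$ holds iff $i\in S$, $P(i)\cap P(m)\neq\emptyset$, and $\frac{\delta}{2}b(m)\le b(i)\le b(m)$; in particular, if $i$ is critical for $m_\up$ in some set and also belongs to another set $S'$, then $i$ is critical for $m_\up$ in $S'$.

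Fix $i\in\crit(m_\up,T_c)$; we must show $i\in C_\up$. If $i\in N$, then $i$ lies in the set $T'$ occurring in the compatibility property of the applicable branching, which in particular asserts $\crit(m_\up,T')=C_\up$; hence $i\in\crit(m_\up,T')=C_\up$. If $i$ is inherited from the sub-cell $c_s$ (single branching) or $c_t$ (top-bottom branching), then this sub-cell has $m_\up$ as its upper m-task and $C_\up$ as the associated critical set; hence $i$ is critical for $m_\up$ in $T_{c_s}$ (resp.\ $T_{c_t}$), and the inductive hypothesis for that sub-cell gives $i\in C_\up$.

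It remains to rule out the possibility that $i$ is inherited from $c_b$ (top-bottom branching) or from $c_l$ or $c_r$ (left-right branching), for which the inductive hypothesis says nothing about criticality for $m_\up$. For each such cell $c'$ I claim $T(c')$ contains no task critical for $m_\up$, and proving this is the main obstacle --- not because any single step is difficult, but because it requires careful geometric bookkeeping, case by case, of the positions of the bottleneck edges of $m_\dn$, $m_\up$, and of the intermediate m-task ($m_{mid}$ for $c_b$, $m_{abv}$ for $c_l$ and $c_r$) relative to $e$ and $e_r$, using repeatedly the principle that an m-task $m$ cannot contain on its path the bottleneck edge of an m-task $m'$ with $b(m')<b(m)$ (otherwise $b(m)\le u_{e(m')}=b(m')$). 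The skeleton of the argument is as follows. Suppose $i\in T(c')$ is critical for $m_\up$, so $b(i)\le b(m_\up)$. If $c'\in\{c_l,c_r\}$, its upper m-task is $m_{abv}$ with $b(m_{abv})>b(m_\up)\ge b(i)$, so the definition of $T(c')$ forces $e(c')\notin P(i)$ and $e(m_{abv})\notin P(i)$; since $e$ is the rightmost edge of $m_\up$ and $m_{abv}$ uses $e$, one deduces $e(m_\up)\notin P(m_{abv})$ and hence $e(m_{abv})>e(m_\up)$, which places $E(c')$ entirely to the right of $e(m_\up)$, so $e(i)\neq e(m_\up)$; distinctness of edge capacities then yields $e(i)\notin P(m_\up)$ (else $b(m_\up)\le u_{e(i)}=b(i)\le b(m_\up)$ would force $u_{e(i)}=u_{e(m_\up)}$ for distinct edges), and since $[e(m_\up),e]\subseteq P(m_\up)$ this places $e(i)$ strictly to the right of $e$; but then $P(i)$, being an interval that still meets $P(m_\up)$ --- all of whose edges are $\le e$ --- must contain $e$, and therefore also $e_r$, contradicting $e(c_l)=e\notin P(i)$, respectively $e(c_r)=e_r\notin P(i)$. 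If $c'=c_b$, its upper m-task is $m_{mid}$ with $b(m_{mid})<b(m_\up)$; here $e(m_{mid})\notin P(m_\up)$ (by the capacity principle) together with $e_r\in P(m_\up)$ and $e_r\le e(m_{mid})$ shows that all edges of $P(m_\up)$ are strictly to the left of $e(m_{mid})$, while $e(i)\in E(c_b)$ is at or to the right of $e(m_{mid})$; since $P(i)$ still meets $P(m_\up)$, the interval $P(i)$ must contain $e(m_{mid})$, which forces $b(i)\le b(m_{mid})$ and hence, by the definition of $T(c_b)$, $e(m_{mid})\notin P(i)$ --- a contradiction. Finally, the symmetric situation in which $e_\dn$ lies to the left of $e$ is handled by reflecting the argument.
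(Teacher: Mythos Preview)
Your proof is correct, and the overall inductive skeleton matches the paper's. The treatment of the ``easy'' pieces---the base case, the new boundary tasks $N$, and the sub-cells $c_s$ and $c_t$ whose upper m-task is $m_\up$ itself---is identical to the paper's.

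Where you diverge is in the handling of $c_b$ and $c_r$. You take the uniform line that $T(c')$ contains no task critical for $m_\up$ for each $c'\in\{c_b,c_l,c_r\}$, and you verify this by explicit interval arithmetic. The paper instead treats these cases non-uniformly: for $c_l$ it argues (very tersely) as you do; for $c_b$ it does \emph{not} show the case is vacuous but rather proves that any $i\in\crit(m_\up,T_{c_b})$ is automatically in $\crit(m_{mid},T_{c_b})$, invokes the induction hypothesis on $c_b$ to get $i\in C_{mid}$, and then reads off $i\in C_\up$ from the compatibility clause $\crit(m_\up,T')=C_\up$ (since $C_{mid}\subseteq T'$); for $c_r$ it asserts that such an $i$ would lie in $B$ and again appeals to compatibility. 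Your geometric analysis actually shows that the paper's $c_b$ and $c_r$ implications are vacuously true, so in a sense you prove a sharper statement. The trade-off is that the paper's $c_b$ argument is shorter and leans on the recursive structure (induction hypothesis plus compatibility) rather than on geometry, whereas your route is fully self-contained but requires the careful bookkeeping you flagged. Both are valid; yours is more explicit, the paper's is more in the dynamic-programming spirit of pushing work onto the sub-cells.
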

\begin{proof}
We prove this claim by using the compatibility properties of the branching
procedures. The claim is trivially true for base case cells $c$ since $T_{c}\subseteq T(c)=\emptyset$.

Consider now a non-base-case cell $c$, and assume the claim holds
for any cell $c'\prec c$. Assume w.l.o.g.~that $e_{\dn}$ lies on
the right of $e$. Suppose that $T_{c}=T_{c_{s}}\cup(B_{s}\setminus B)$ for
some cell $c_{s}=(e_{r},m_{\dn},C_{\dn},m_{\up},C_{\up},B_{s})$
in the single branching case.
If $i\in crit(m_{\up},B_{s}\setminus B)$
then $i\in C_{\up}$ by the compatibility property of the single branching
procedure.
If $i\in crit(m_{\up},T_{c_{s}})$ then $i\in C_{\up}$
by the induction hypothesis. 

Assume now that $T_{c}=T_{c_{t}}\cup T_{c_{b}}\cup((B(c_{t})\cup B(c_{b}))\setminus B)$
for some cells $c_{b}=(e_{r},m_{\dn},C_{\dn},m_{mid},C_{mid},B_{b})$
and $c_{t}=(e_{r},m_{mid},C_{mid},m_{\up},C_{\up},B_{t})$ in the
top-bottom branching case. If $i\in crit(m_{\up},T_{c_{t}})$ then
$i\in C_{\up}$ by the induction hypothesis. If $i\in crit(m_{\up},T_{c_{b}})$
then $i\in crit(m_{mid},T_{c_{b}})$ and hence $i\in C_{mid}$ by
the induction hypothesis. Now the compatibility property of the top-bottom
branching case implies that $i\in C_{\up}$ (using that $i\in crit(m_{\up},T_{c_{b}})$).
If $i\in crit(m_{\up},(B(c_{t})\cup B(c_{b}))\setminus B)$ then $i\in C_{\up}$
by the compatibility property of the top-bottom branching case.

Finally, assume that $T_{c}=T_{c_{l}}\cup T_{c_{r}}\cup((B_{l}\cup B_{r})\setminus B)$
for two DP-cells defined as $c_{l}=(e,m_{\up},C_{\up},m_{abv},C_{abv},B_{l})$ and
$c_{r}=(e_{r},m_{\dn},C_{\dn},m_{abv},C_{abv},B_{r})$ in the left-right
branching case. If $i\in T_{c_{l}}$ then $b(i)>b(m_\up)$, so $i$
is not critical for
$m_{\up}$ and there is nothing to show. If $i\in crit(m_{\up},T_{c_{r}})$
then also $i\in B$ and the claim follows from the compatibility property
of the left-right branching case. Finally, if $i\in((B_{l}\cup B_{r})\setminus B$
then the claim also follows from the compatibility property.
\end{proof}

\begin{lem}[Weak feasibility] \label{lem:DP-weak-feasibility}
For each table entry $c=(e,m_{\dn},C_{\dn},m_{\up},C_{\up},B)$, we
have that $(T_{c}\cup B\cup C_{\dn}\cup C_{\up},M_{c}\cup\{m_{\dn},m_{\up}\})$
is weakly feasible. \end{lem}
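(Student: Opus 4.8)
The plan is to prove weak feasibility by induction on the partial order $\prec$, in exact analogy with the proofs of Lemmas~\ref{lem:DP-k-thinness} and~\ref{lem:DP-inclusion}. The base case is immediate: when $e=e_\dn$ we have $T_c=M_c=\emptyset$, so the pair in question is $(B\cup C_\dn\cup C_\up,\{m_\dn,m_\up\})$, whose weak feasibility is exactly the consistency property required of the DP-cell $c$.

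For the inductive step, consider a non-base-case cell $c$; assume $e_\dn$ lies to the right of $e$ (the other case being symmetric) and let $e_r$ be the edge right of $e$. The key observation is that weak feasibility is an edge-by-edge condition: for each edge $f$ we must control $d(\abv_f(m_f,T'))+d(\crit_f(m_f,T'))+d(m_f)$ where $m_f$ is the highest-bottleneck m-task of $(M_c\cup\{m_\dn,m_\up\})\cap M_f$, where here $T'=T_c\cup B\cup C_\dn\cup C_\up$. First I would split the edges of $E(c)$ according to which sub-cell's area they fall in. For an edge $f$ strictly interior to the area of a sub-cell $c'$ (one of $c_s$; $c_b,c_t$; or $c_l,c_r$), the set of tasks of $T'$ using $f$, together with the relevant m-tasks, coincides with those of the corresponding object for $c'$ — because tasks in $T(c)\setminus(\text{sub-cell areas})$ and the removed boundary tasks $B$, $B_s$, etc. either do not use $f$ or lie above the top m-task of $c'$ — so the bound follows directly from the inductive hypothesis applied to $c'$. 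The remaining edges to handle are $e$ itself and $e_r$ (the "seam" edges where the sub-cells are glued and where the newly added tasks $B_s\setminus B$, $(B_b\cup B_t)\setminus B$, or $(B_l\cup B_r)\setminus B$ and the new m-task $m_{mid}$ or $m_{abv}$ live). For edge $e$, the relevant m-task is still $m_\up$ and the tasks above-or-critical-for-$m_\up$ on $e$ are exactly $\abv_e(m_\up,\cdot)\cup\crit_e(m_\up,\cdot)$ computed from $C_\up$ together with the part of $B$ lying at or below $b(m_\up)$ — but the consistency property of the cell $c$ already guarantees $d(\abv_e(m_\up,B\cup C_\dn\cup C_\up))+d(\crit_e(m_\up,\cdot))+d(m_\up)\le u_e$, and the inclusion property (Lemma~\ref{lem:DP-inclusion}) together with $k$-thinness ensures no task of $T_c$ using $e$ is above or critical for $m_\up$ beyond what $C_\up$ already records; so the bound on $e$ reduces to the consistency property of $c$.

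The genuinely new work is at the seam edge $e_r$. Here I would use the compatibility property of the applicable branching: it asserts precisely that for $T'':=C_\dn\cup C_\up\cup(\ldots)\cup B\cup B_s$ (resp.\ $\cup B_b\cup B_t$, $\cup B_l\cup B_r$) the various $\crit$ and $\bound$ sets at $e_r$ equal the corresponding named sets. Combined with the weak feasibility of the sub-cell(s) — which bound $d(\abv_{e_r})+d(\crit_{e_r})+d(m)$ for the sub-cell's top m-task — and noting that on $e_r$ the highest m-task of $M_c\cup\{m_\dn,m_\up\}$ is the top m-task of the relevant sub-cell ($m_\up$, $m_{mid}$, or $m_{abv}$ as appropriate), the desired inequality at $e_r$ is obtained by rewriting the abv/crit decomposition at $e_r$ into the sub-cell's decomposition, using that every task of $T_c$ or $B_s$ (etc.) using $e_r$ is accounted for exactly once. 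I expect the main obstacle to be the bookkeeping in the top-bottom and left-right cases: there one must verify that the decomposition at $e_r$ with respect to the sub-cells' top m-tasks (two of them) correctly partitions the tasks using $e_r$, and that no task gets double-counted between "above for the bottom cell" and "critical/below for the top cell" — this is where the compatibility property's exact equalities $\bound_{e_r}(m_\dn,m_{mid},T')=B_b$ and $\bound_{e_r}(m_{mid},m_\up,T')=B_t$ (and the analogous $C_{mid}$, $C_{abv}$ conditions) are used, and where the inclusion property is invoked again to ensure the recursively-computed $T_{c_b},T_{c_t}$ contribute nothing above their respective top m-tasks on $e_r$ beyond the recorded critical sets. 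Once the seam edges $e$ and $e_r$ are dispatched, the claim follows for all edges of the area of $c$, completing the induction.
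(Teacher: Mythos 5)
Your overall skeleton matches the paper's: induction on $\prec$, base case via the consistency property, and an inductive step that splits edges by position and reduces most of them either to the consistency property of $c$ (edges outside the area, where $T(c)\cap T_f=M(c)\cap M_f=\emptyset$) or to the weak feasibility of the sub-cells via the compatibility and inclusion properties. Your treatment of the edge $e_r$ is essentially the paper's.

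However, there is a genuine gap in how you dispatch the edge $e$ (and, more generally, the edges $f$ with $e_\up < f \le e$, which your case split does not clearly assign to any sub-cell). You reduce the bound at $e$ to the consistency property of $c$, arguing that the inclusion property and $k$-thinness ensure ``no task of $T_c$ using $e$ is above or critical for $m_\up$ beyond what $C_\up$ already records.'' This is false for \emph{above} tasks: $T(c)$ by definition contains tasks with bottleneck edge in $E(c)$ and $b(i)>b(m_\up)$, and such a task can stretch leftward over $e$ (though not over $e_\up$). These tasks lie in $\abv_e(m_\up,T_c)$ and contribute demand on $e$, yet the consistency property of $c$ only constrains $B\cup C_\dn\cup C_\up$ against $\{m_\dn,m_\up\}$, and the inclusion property (Lemma~\ref{lem:DP-inclusion}) only concerns tasks \emph{critical} for $m_\up$, not those above it. Moreover, the governing m-task $m_f$ at such an edge need not be $m_\up$: $M_c$ may contain an m-task with bottleneck capacity above $b(m_\up)$ using $f$, in which case weak feasibility must be checked against \emph{that} m-task. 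The paper's proof handles exactly this in subcases a.3, b.4, and c.4: it observes $b(m_f)\ge b(m_\up)$, shows via the compatibility property that $\abv_f(m_f,T_c^{ext})$ and $\crit_f(m_f,T_c^{ext})$ equal (or are contained in) the corresponding sets for the sub-cell's extended task set $T_{c'}^{ext}$, and then invokes the induction hypothesis on the sub-cell --- whose weak feasibility statement applies to \emph{every} edge, not only those in its own area. To repair your proof you should treat $e$ and the edges in $(e_\up,e]$ the same way you treat the interior edges, i.e., by reduction to the sub-cell's inductive weak feasibility rather than to the consistency property of $c$.
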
 \begin{proof} For any edge $f$, define
$m_{f}:=m_{f}(c)$ as the highest bottleneck capacity m-task in
$M_{c}\cup\{m_{\dn}(c),m_{\up}(c),\bot\}\setminus\{\top\}$
using edge $f$. Let also $T_{c}^{ext}:=T_{c}\cup C_{\dn}(c)\cup C_{\up}(c)\cup B(c)$.
With this notation, we need to prove that for each edge $f$ 
\[
d(\abv_{f}(m_{f}(c),T_{c}^{ext}))+d(\crit_{f}(m_{f}(c),T_{c}^{ext}))+d(m_{f}(c))\leq u_{f}.
\]

We prove the claim by induction, following the partial order $\prec$
on the cells. If $c$ is a base case cell then $T_{c}\subseteq T(c)=\emptyset$,
$M_{c}\subseteq M(c)=\emptyset$ and hence $T_{c}^{ext}=C_{\dn}\cup C_{\up}\cup B$.
By the consistency property, $(T_{c}^{ext},\{m_{\dn},m_{\up}\})$
is weakly feasible.

For notation convenience, let us say that $e'<e''$ if edge $e'$
is to the left of edge $e''$ and $e'\ne e''$. We define analogously
$\leq$, $>$, and $\geq$.

Suppose now that $c$ is not a base case cell. By induction hypothesis,
we know that the claim holds for any cell $c'\prec c$. Assume w.l.o.g.
that $e<e_{\dn}$. Let $e_{r}$ be the first edge to the right of
$e$ (possibly $e_{r}=e_{\dn}$). 
We distinguish $3$ cases, depending on the branching that defines
the maximum value of $w(c)$.

\smallskip\noindent
\textbf{a) (Single branching)} Let $c_{s}$ be the cell achieving
the maximum. Recall that $T_{c}=T_{c_{s}}\cup(B_{s}\setminus B)$ and $M_{c}=M_{c_{s}}$.
We have $m_{f}=m_{f}(c)=m_{f}(c_{s})$ because $M_{c}=M_{c_{s}}$.
Let us assume $e_{\up}<e$, the case $e_{\up}\geq e$ being analogous.
Consider any edge $f$. We distinguish $3$ subcases depending on
the relative position of $f$:

\smallskip\noindent
\textbf{a.1) ($\mathbf{\Bf\leq \Be_{\up}}$ or $\mathbf{\Bf\geq \Be_{\dn}}$)}
Here $T(c)\cap T_{f}=M(c)\cap M_{f}=\emptyset$, hence $T_{c}^{ext}=C_{\dn}\cup C_{\up}\cup B$
and $M_{c}\cup\{m_{\dn},m_{\up}\}=\{m_{\dn},m_{\up}\}$. The claim
follows by the consistency property.

\smallskip\noindent
\textbf{a.2) ($\mathbf{\Be<\Bf<\Be_{\dn}}$)}. In this range of edges we
have $(B\setminus B_{s})\cap T_{f}=\emptyset$ by the compatibility property
of the single branching case. Hence $T_{c}^{ext}\cap T_{f}=(T_{c}\cup
C_{\dn}\cup C_{\up}\cup B)\cap T_{f}=(T_{c_{s}}\cup B\cup B_{s}\cup
C_{\dn}\cup C_{\up})\cap T_{f}=(T_{c_{s}}\cup C_{\dn}\cup
C_{\up}\cup B_{s})\cap T_{f}=T_{c_{s}}^{ext}\cap T_{f}$.
As a consequence, $\abv_{f}(m_{f},T_{c}^{ext})=\abv_{f}(m_{f},T_{c_{s}}^{ext})$
and $\crit_{f}(m_{f},T_{c}^{ext})=\crit_{f}(m_{f},T_{c_{s}}^{ext})$.
The claim follows by induction hypothesis on $c_{s}$.

\smallskip\noindent
\textbf{a.3) ($\mathbf{\Be_{\up}<\Bf\leq \Be}$)} In this case $b(m_{f})\geq b(m_{\up})$.
Since any task $i\in B$ has $b(i)\leq b(m_{\up})$, we
have $\abv_{f}(m_{f},T_{c}^{ext})=\abv_{f}(m_{f},T_{c_{s}}\cup B_{s}\cup
B\cup C_{\dn}\cup C_{\up})=\abv_{f}(m_{f},T_{c_{s}}\cup B_{s}\cup
C_{\dn}\cup C_{\up})=\abv_{f}(m_{f},T_{c_{s}}^{ext})$.
Also, any task $i\in B$ that is critical for $m_{f}$ must be contained
in $C_{\up}$ by the compatibility property of the single branching,
hence $\crit_{f}(m_{f},B)\subseteq C_{\up}\cap T_{f}$. Therefore
$\crit_{f}(m_{f},T_{c}^{ext})=\crit_{f}(m_{f},T_{c_{s}}\cup B_{s}\cup
B\cup C_{\dn}\cup C_{\up})=\crit_{f}(m_{f},T_{c_{s}}\cup B_{s}\cup
C_{\dn}\cup C_{\up})=\crit_{f}(m_{f},T_{c_{s}}^{ext})$.
The claim then follows by induction hypothesis on $c_{s}$.

\smallskip\noindent
\textbf{b) (Top-bottom branching)} Let $c_{b}$ and $c_{t}$ be the
cells achieving the maximum. Recall that $M_{c}=M_{c_{b}}\cup
M_{c_{t}}\cup\{m_{mid}\}$
and $T_{c}=T_{c_{b}}\cup T_{c_{t}}\cup((B_{b}\cup B_{t})\setminus B)$. Let
$e_{mid}:=e(m_{mid})$. Note that $e<e_{mid}$ and $e_{\up}<e_{mid}$.
Let us assume $e_{\up}<e$, the case $e_{\up}\geq e$ being analogous.
Consider any edge $f$. We distinguish $4$ subcases:

\smallskip\noindent 
\textbf{b.1) ($\mathbf{\Bf\leq \Be_{\up}}$ or $\mathbf{\Bf\geq \Be_{\dn}}$)}
Here $T(c)\cap T_{f}=M(c)\cap M_{f}=\emptyset$. 
The claim follows by the same argument as in case (a.1).

\smallskip\noindent
\textbf{b.2) ($\mathbf{\Be_{mid}\leq \Bf<\Be_{\dn}}$)} Note that $M(c_{t})\cap
M_{f}=T(c_{t})\cap T_{f}=\emptyset$.
We have $m_{f}=m_{f}(c)=m_{f}(c_{b})$. Observe also that $((B\cup
B_{t})\setminus B_{b})\cap T_{f}=\emptyset$
and $C_{\up}\cap T_{f}\subseteq C_{mid}\cap T_{f}$ by the compatibility
property of the top-bottom branching case. Altogether $T_{c}^{ext}\cap
T_{f}=(T_{c_{b}}\cup T_{c_{t}}\cup B_{b}\cup B_{t}\cup B\cup C_{\up}\cup
C_{\dn})\cap T_{f}=(T_{c_{b}}\cup B_{b}\cup C_{\up}\cup C_{\dn})\cap
T_{f}\subseteq(T_{c_{b}}\cup B_{b}\cup C_{mid}\cup C_{\dn})\cap
T_{f}=T_{c_{b}}^{ext}\cap T_{f}$.
Then $\abv_{f}(m_{f},T_{c}^{ext})\subseteq\abv_{f}(m_{f},T_{c_{b}}^{ext})$
and $\crit_{f}(m_{f},T_{c}^{ext})\subseteq\crit_{f}(m_{f},T_{c_{b}}^{ext})$.
The claim follows by induction hypothesis on $c_{b}$.

\smallskip\noindent
\textbf{b.3) ($\mathbf{\Be_{r}\leq \Bf<\Be_{mid}}$)} We have
$m_{f}=m_{f}(c)=m_{f}(c_{t})$
and observe that $b(m_{f})\geq b(m_{mid})$. By the compatibility
property of the top-bottom branching, if $i\in B\cap T_{f}$ and $b(i)>b(m_{mid})$,
then $i\in B_{t}$. Note also that for any $i\in(C_{mid}\cup C_{\dn}\cup
T_{c_{b}}\cup B_{b})\cap T_{f}$
we have $b(i)\leq b(m_{mid})$. Altogether
$\abv_{f}(m_{f},T_{c}^{ext})=\abv_{f}(m_{f},T_{c_{b}}\cup T_{c_{t}}\cup
B_{b}\cup B_{t}\cup B\cup C_{\dn}\cup
C_{\up})=\abv_{f}(m_{f},T_{c_{t}}\cup B_{t}\cup
C_{\up})=\abv_{f}(m_{f},T_{c_{t}}\cup B_{t}\cup C_{mid}\cup
C_{\up})=\abv_{f}(m_{f},T_{c_{t}}^{ext})$.
By the compatibility property of the top-bottom branching, if
$i\in(B_{b}\cup B\cup C_{\dn})\cap T_{f}$
is critical for $m_{f}$ (hence for $m_{mid}$), then $i\in C_{mid}$.
By Lemma~\ref{lem:DP-inclusion}, if $i\in T_{c_{b}}\cap T_{f}$ is
critical for $m_{f}$ (hence for $m_{mid}$), then $i\in C_{mid}$.
Altogether $\crit_{f}(m_{f},T_{c}^{ext})=\crit_{f}(m_{f},T_{c_{b}}\cup
T_{c_{t}}\cup B_{b}\cup B_{t}\cup B\cup C_{\dn}\cup
C_{\up})\subseteq\crit_{f}(m_{f},T_{c_{t}}\cup B_{t}\cup C_{\up}\cup
C_{mid})=\crit_{f}(m_{f},T_{c_{t}}^{ext})$.
The claim follows by induction hypothesis on $c_{t}$.

\smallskip\noindent
\textbf{b.4) ($\mathbf{\Be_{\up}<\Bf<\Be_{r}}$)} We have $m_{f}=m_{f}(c)=m_{f}(c_{t})$
and observe that $b(m_{f})\geq b(m_{\up})$. Note that $T_{c_{b}}\cap
T_{f}=\emptyset$.
Also, for any $i\in(B\cup B_{b}\cup C_{\dn}\cup C_{mid})\cap T_{f}$
we have that $b(i)\leq b(m_{\up})$. Then
$\abv_{f}(m_{f},T_{c}^{ext})=\abv_{f}(m_{f},T_{c_{b}}\cup T_{c_{t}}\cup
B_{b}\cup B_{t}\cup B\cup C_{\dn}\cup
C_{\up})=\abv_{f}(m_{f},T_{c_{t}}\cup B_{t}\cup
C_{\up})=\abv_{f}(m_{f},T_{c_{t}}^{ext})$.
By the compatibility property of the top-bottom branching case, if
$i\in(B_{b}\cup B\cup C_{\dn}\cup C_{mid})\cap T_{f}$ is critical
for $m_{f}$, then $i\in C_{\up}$. Thus
$\crit_{f}(m_{f},T_{c}^{ext})=\crit_{f}(m_{f},T_{c_{b}}\cup
T_{c_{t}}\cup B_{b}\cup B_{t}\cup B\cup C_{\dn}\cup
C_{\up})=\crit_{f}(m_{f},T_{c_{t}}\cup B_{t}\cup C_{\up}\cup
C_{mid})=\crit_{f}(m_{f},T_{c_{t}}^{ext})$.
The claim follows by induction hypothesis on $c_{t}$.

\smallskip\noindent
\textbf{c) (Left-right branching)} Let $c_{l}$ and $c_{r}$ be the
cells achieving the maximum. Recall that $M_{c}=M_{c_{l}}\cup
M_{c_{r}}\cup\{m_{abv}\}$
and $T_{c}=T_{c_{l}}\cup T_{c_{r}}\cup((B_{l}\cup B_{r})\setminus B)$. Let
$e_{abv}:=e(m_{abv})$ ($e_{abv}:=e_{r}$ if $m_{abv}=\top$). Let
us assume that $e_{abv}>e_{r}$, the case $e_{abv}\leq e_{r}$ being
analogous. Consider any edge $f$. We distinguish $4$ subcases:

\smallskip\noindent
\textbf{c.1) ($\mathbf{\Bf\leq \Be_{\up}}$ or $\mathbf{f\geq e_{\dn}}$)}
In this case $T(c)\cap T_{f}=M(c)\cap M_{f}=\emptyset$.
The claim follows by the same argument as in case
(a.1).

\smallskip\noindent
\textbf{c.2) ($\mathbf{\Be_{abv}\leq f<\Be_{\dn}}$)} In this case
$T(c_{l})\cap T_{f}=M(c_{l})\cap M_{f}=\emptyset$.
As a consequence, $m_{f}(c)=m_{f}(c_{r})$. Also, $((B\cup
B_{l})\setminus B_{r})\cap T_{f}=\emptyset$
by the compatibility property of the left-right branching. Furthermore,
$C_{\up}\cap T_{f}\subseteq B_{r}\cup C_{\dn}$ and $C_{abv}\cap
T_{f}\subseteq B_{r}\cup T_{c_{r}}\cup C_{\dn}$.
Then $T_{c}^{ext}\cap T_{f}=(T_{c_{l}}\cup T_{c_{r}}\cup B_{l}\cup
B_{r}\cup B\cup C_{\dn}\cup C_{\up})\cap T_{f}=(T_{c_{r}}\cup
B_{r}\cup C_{\dn})\cap T_{f}=(T_{c_{r}}\cup B_{r}\cup C_{\dn}\cup
C_{abv})\cap T_{f}=T_{c_{r}}^{ext}\cap T_{f}$.
As a consequence
$\abv_{f}(m_{f}(c),T_{c}^{ext})=\abv_{f}(m_{f}(c_{r}),T_{c_{r}}^{ext})$
and $\crit_{f}(m_{f}(c),T_{c}^{ext})=\crit_{f}(m_{f}(c_{r}),T_{c_{r}}^{ext})$.
The claim follows by induction hypothesis on $c_{r}$.

\smallskip\noindent
\textbf{c.3) ($\mathbf{\Be_{r}\leq \Bf<\Be_{abv}}$)} We have
$m_{f}=m_{f}(c)=m_{f}(c_{l})$
and $b(m_{f})\geq b(m_{abv})$. Note that any task $i\in(T_{c_{r}}\cup
B\cup B_{r}\cup C_{\dn}\cup C_{abv})\cap T_{f}$
has $b(i)\leq b(m_{abv})$. Hence
$\abv_{f}(m_{f},T_{c}^{ext})=\abv_{f}(m_{f},T_{c_{l}}\cup T_{c_{r}}\cup
B_{l}\cup B_{r}\cup B\cup C_{\dn}\cup
C_{\up})=\abv_{f}(m_{f},T_{c_{l}}\cup B_{l}\cup C_{\up}\cup
C_{abv})=\abv_{f}(m_{f},T_{c_{l}}^{ext})$.
Furthermore, if a task $i\in(T_{c_{r}}\cup B\cup B_{r}\cup C_{\dn})\cap T_{f}$
is critical for $m_{f}$, then $i\in C_{abv}$ by the compatibility
property of the left-right branching. Consequently
$\crit_{f}(m_{f},T_{c}^{ext})=\crit_{f}(m_{f},T_{c_{l}}\cup
T_{c_{r}}\cup B_{l}\cup B_{r}\cup B\cup C_{\dn}\cup
C_{\up})\subseteq\crit_{f}(m_{f},T_{c_{l}}\cup C_{abv}\cup
C_{\up}\cup B_{l})=\crit_{f}(m_{f},T_{c_{l}}^{ext})$.
The claim follows by induction hypothesis on $c_{l}$.

\smallskip\noindent
\textbf{c.4) ($\mathbf{\Be_{\up}<\Bf<\Be_{r}}$)} In this case
$m_{f}=m_{f}(c)=m_{f}(c_{l})$
and $b(m_{f})\geq b(m_{\up})$. By the compatibility property of the
left-right branching, $((B_{r}\cup B)\setminus B_{l})\cap T_{f}=B\cap T_{f}$.
Observe that any task $i\in(B\cup C_{\dn})\cap T_{f}$ has 
$b(i)\leq b(m_{\up})$. Also, $T_{c_{r}}\cap T_{f}\subseteq T(c_{r})\cup
T_{f}=\emptyset$.
Then $\abv_{f}(m_{f},T_{c}^{ext})=\abv_{f}(m_{f},T_{c_{l}}\cup
T_{c_{r}}\cup B_{l}\cup B_{r}\cup B\cup C_{\dn}\cup
C_{\up})=\abv_{f}(m_{f},T_{c_{l}}\cup B_{l}\cup
C_{\up})\subseteq\abv_{f}(m_{f},T_{c_{l}}\cup B_{l}\cup
C_{\up}\cup C_{abv})=\abv_{f}(m_{f},T_{c_{l}}^{ext})$.
By the compatibility property of the left-right branching, any
$i\in(B\cup C_{\dn})\cap T_{f}$
that is critical for $m_{f}$ must belong to $C_{\up}$. Thus
$\crit_{f}(m_{f},T_{c}^{ext})=\crit_{f}(m_{f},T_{c_{l}}\cup
T_{c_{r}}\cup B_{l}\cup B_{r}\cup B\cup C_{\dn}\cup
C_{\up})=\crit_{f}(m_{f},T_{c_{l}}\cup B_{l}\cup
C_{\up})\subseteq\crit_{f}(m_{f},T_{c_{l}}\cup B_{l}\cup
C_{\up}\cup C_{abv})=\crit_{f}(m_{f},T_{c_{l}}^{ext})$.
The claim follows by induction hypothesis on $c_{l}$. \end{proof}

Now the proof of Lemma~\ref{lem:DP} follows from Lemmas~\ref{lem:DP-running-time}, ~\ref{lem:DP-optimality},
\ref{lem:DP-k-thinness}, and~\ref{lem:DP-weak-feasibility}, and the
fact that the cell $c^{*}:=(e^{*},\bot,\emptyset,\top,\emptyset,\emptyset)$ corresponds
to the optimal weakly-feasible $k$-thin maze-pair.

\bibliographystyle{plain}
\bibliography{citations}

\newpage

\appendix
\section{PTAS for $\delta$-small tasks}
\label{apx:PTAS-small-tasks}

The techniques in \cite{BSW11} immediately imply a $(1+\eps)$-approximation
algorithm for $\delta$-small tasks, assuming that $\delta$ is sufficiently small,
depending on $\eps$.

Fix an integer $\ell \in \mathbb{N}$. For each $k\in \mathbb{N}$ we define a set
$F^{k,\ell}:=\left\{ i\in T |  2^{k}\le b(i)<2^{k+\ell}\right\}$. Let $OPT(F^{k,\ell})$ be the optimum solution when considering only tasks in $F^{k,\ell}$. For each of these sets,
we want to compute a set that approximates the optimal solution well while leaving
a bit of the capacity of the edges unused.

\begin{defn}[\cite{BSW11}] 
Consider a set $F^{k,\ell}$ and let \mbox{$\alpha,\beta>0$}. A set $F\subseteq F^{k,\ell}$ is
called $(\alpha,\beta)$\emph{-approximative} if 
\begin{itemize}
\item $w(F)\ge\frac{1}{\alpha}\cdot w(OPT(F^{k,\ell}))$, and 
\item $\sum_{i\in F\cap T_{e}}d_{i}\le u_{e}-\beta\cdot2^{k}$ for each
edge $e$ such that $T_{e}\cap F^{k,\ell}\ne\emptyset$.  (Hence it is a feasible solution.)
\end{itemize}
An algorithm that computes \emph{$(\alpha,\beta)$-approximative}
sets in polynomial time is called an $(\alpha,\beta)$\emph{-approximation
algorithm}. 
\end{defn}
\smallskip 

The reader may think of $\beta$ being a constant in the same magnitude as $\eps$. 

\begin{lem*}[\cite{BSW11}]
\label{lem:F^k-LP}
For every combination of constants \mbox{$\eps>0$}, ${0<\beta<1}$, and $\ell\in\mathbb{N}$, there exists a $\delta>0$ such that if all tasks are $\delta$-small, then for each set $F^{k,\ell}$ there is a polynomial time $(\frac{1+\eps}{1-\beta},\beta)$-approximation algorithm.
\end{lem*}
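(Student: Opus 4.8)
The plan is to reduce the claim to the LP-rounding for $\delta$-small tasks inside a single window that is established in \cite{BSW11}, after first shrinking the edge capacities just enough to expose the required free space $\beta\cdot 2^k$.

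I would begin by recording two elementary facts about a window $F^{k,\ell}$. If an edge $e$ carries some task $i\in F^{k,\ell}$, then $u_e\ge b(i)\ge 2^k$, and since $i$ is $\delta$-small, $d_i<\delta\, b(i)<\delta\,2^{k+\ell}$; hence on every such \emph{relevant} edge the largest relative demand $d_i/u_e$ is below $\delta\,2^{\ell}$. As $\ell$ is a fixed constant, choosing $\delta$ small enough as a function of $\eps,\beta,\ell$ alone (never of the size of the instance) makes this quantity smaller than any threshold needed below.

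Next I would pass to the instance with the same tasks and paths but reduced capacities $u'_e:=u_e-\beta\cdot 2^k$ on every relevant edge $e$ (and $u'_e:=u_e$ otherwise); since $u_e\ge 2^k$ on relevant edges, $u'_e\ge(1-\beta)u_e$. Scaling any feasible fractional solution of the canonical LP with capacities $u_e$ by the factor $1-\beta$ produces a fractional solution whose load on every edge is at most $(1-\beta)u_e\le u'_e$, so it is feasible for the canonical LP with capacities $u'_e$ and its weight is $1-\beta$ times the original; hence the optimum of the LP with capacities $u'_e$ is at least $(1-\beta)$ times that of the LP with capacities $u_e$, which is at least $w(OPT(F^{k,\ell}))$. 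With respect to the $u'_e$, the bottleneck capacity of every task shrinks by at most a factor $1-\beta$, so the tasks remain $\frac{\delta}{1-\beta}$-small and their bottleneck capacities still lie within a bounded ratio of one another. Choosing $\delta=\delta(\eps,\beta,\ell)$ small enough that $\frac{\delta}{1-\beta}\cdot 2^{\ell}$ is below the threshold that \cite{BSW11} requires for accuracy $\eps$, the small-task LP-rounding of \cite{BSW11} now applies to the reduced instance: in polynomial time it outputs an integral solution $F\subseteq F^{k,\ell}$ that is feasible for the capacities $u'_e$ and has weight at least $(1+\eps)^{-1}$ times the optimum of the corresponding canonical LP, hence at least $\frac{1-\beta}{1+\eps}\,w(OPT(F^{k,\ell}))$. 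Since feasibility for $u'_e$ means exactly $\sum_{i\in F\cap T_e}d_i\le u_e-\beta\cdot 2^k$ on every relevant edge (and $F$ uses no other edge), $F$ is $(\frac{1+\eps}{1-\beta},\beta)$-approximative, and it is computed in polynomial time.

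I expect the genuinely hard ingredient to be the one imported from \cite{BSW11}: that, for a window of $\delta$-small tasks, the canonical LP has integrality gap $1+O(\eps)$ and a matching integral solution can be found efficiently. The delicate point there is that a naive randomized-rounding-plus-union-bound argument would force $\delta$ to decrease with the number of edges, whereas here $\delta$ must remain an absolute constant; \cite{BSW11} avoid this with their grouping-and-scaling analysis, and it is exactly that analysis I would cite rather than redo. Everything else above --- the capacity reduction, bookkeeping of the relevant edges, and the arithmetic that turns the $1-\beta$ loss into the stated $(\frac{1+\eps}{1-\beta},\beta)$ guarantee --- is routine.
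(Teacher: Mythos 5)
The paper does not prove this lemma at all: it is stated verbatim as a result imported from \cite{BSW11}, and the surrounding appendix only explains how to combine it with the framework lemma to obtain Lemma~\ref{lem:ptasSmall}. So there is no in-paper proof to compare against; what can be judged is whether your reconstruction is sound. It is. The bookkeeping is correct: on every edge $e$ carrying a task of $F^{k,\ell}$ one has $u_e\ge 2^k$, so the reduced capacities $u'_e=u_e-\beta\cdot 2^k$ satisfy $u'_e\ge(1-\beta)u_e$, scaling the LP optimum costs exactly the factor $1-\beta$, the tasks remain $\frac{\delta}{1-\beta}$-small and confined to a window of bounded capacity ratio $2^\ell/(1-\beta)$, and feasibility for $u'_e$ is literally the $\beta\cdot 2^k$ slack condition in the definition of $(\alpha,\beta)$-approximative. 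You also correctly identify the one genuinely hard ingredient --- that for sufficiently small $\delta$ (depending only on $\eps,\beta,\ell$, not on $n$) the canonical LP restricted to a window admits a polynomial-time rounding losing only $1+\eps$ --- as the piece to cite rather than redo; since the paper itself attributes the whole lemma to \cite{BSW11}, this is a legitimate black box. The only organizational difference worth noting is that in \cite{BSW11} the $\beta$-slack is not manufactured by a preliminary capacity reduction but falls out of the grouping-and-scaling rounding itself (each capacity class is rounded with a small reserve, and the reserves aggregate to the $\beta\cdot 2^k$ guarantee); your factorization --- shrink capacities first, then invoke a clean $(1+\eps)$ LP-rounding --- is an equally valid way to arrive at the stated $\bigl(\frac{1+\eps}{1-\beta},\beta\bigr)$ guarantee, provided the cited rounding is read as an approximation relative to the LP optimum, which is what the grouping-and-scaling analysis in fact delivers.
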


The next lemma shows that while losing only a factor $\frac{\ell+q}{\ell}$ in the approximation ratio,
$(\alpha,\beta)$-approximation algorithms yield approximation algorithms for the whole problem. By choosing
$\ell$ large enough depending on $q$, this loses only a factor of $1+\eps$.

\begin{lem*}[\cite{BSW11}]
\label{lem:framework} Let $\ell \in \mathbb{N}$ and $q \in \mathbb{N}$ be constants and let 
$\beta=2^{1-q}$.
Assume we are given an \emph{$(\alpha,\beta)$}-approximation algorithm
for each set $F^{k,\ell}$, with running time $O(p(n))$ for a polynomial
$p$. Then there is a $\left(\frac{\ell+q}{\ell}\cdot\alpha\right)$-approximation
algorithm with running time $O(m\cdot p(n))$ for the set of \emph{all}
tasks.
\end{lem*}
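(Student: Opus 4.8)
The plan is to group the bottleneck scales into the families $F^{k,\ell}$, call the given $(\alpha,\beta)$-approximation on each non-empty family, and then combine the outputs of one carefully chosen arithmetic progression of families. Two things have to be arranged: that the combined profit is within $\frac{\ell+q}{\ell}\cdot\alpha$ of $w(OPT)$, and that the combined set is feasible; the latter will work because consecutive chosen families are separated by a factor $2^q$ in bottleneck scale, so the reserved slack $\beta\cdot 2^k$ of the top family crossing any edge absorbs all the interference from lower families.

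First I would rescale all capacities and demands by a common power of two so that $b(i)\ge 2^{\ell}$ for every task (this changes nothing). Put $g:=\ell+q$ and, for $r\in\{0,\dots,g-1\}$, let $K_r:=\{k\in\mathbb{Z}_{\ge 0}:k\equiv r\pmod g\}$ and $\mathcal{F}_r:=\bigcup_{k\in K_r}F^{k,\ell}$. Two elementary facts drive everything. First, for $k<k'$ in the same $K_r$ one has $k'\ge k+g$, so the scale intervals $[2^k,2^{k+\ell})$ and $[2^{k'},2^{k'+\ell})$ are disjoint; in fact every task of $F^{k',\ell}$ has bottleneck at least $2^q$ times that of any task of $F^{k,\ell}$. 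Second, after the rescaling a task $i$ with $\kappa:=\lfloor\log_2 b(i)\rfloor\ge\ell$ lies in $F^{k,\ell}$ for exactly the $\ell$ consecutive values $k\in\{\kappa-\ell+1,\dots,\kappa\}$, which meet exactly $\ell$ of the $g$ residue classes; hence every task belongs to exactly $\ell$ of the families $\mathcal{F}_0,\dots,\mathcal{F}_{g-1}$.

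The algorithm runs the $(\alpha,\beta)$-approximation on every non-empty $F^{k,\ell}$, obtaining $F_k\subseteq F^{k,\ell}$, forms $S_r:=\bigcup_{k\in K_r}F_k$ for each $r$, and outputs the heaviest feasible $S_r$. For the profit bound, fix an optimal solution $OPT$. By the second fact, $\sum_{r=0}^{g-1}w(OPT\cap\mathcal{F}_r)=\ell\cdot w(OPT)$, so some $r^\ast$ has $w(OPT\cap\mathcal{F}_{r^\ast})\ge\frac{\ell}{g}\,w(OPT)$. Since $OPT\cap F^{k,\ell}$ is feasible for $F^{k,\ell}$ we have $w(OPT(F^{k,\ell}))\ge w(OPT\cap F^{k,\ell})$, and by the first fact the sets $F^{k,\ell}$, $k\in K_{r^\ast}$, are pairwise disjoint, so
\[
w(S_{r^\ast})=\sum_{k\in K_{r^\ast}}w(F_k)\ \ge\ \frac{1}{\alpha}\sum_{k\in K_{r^\ast}}w(OPT(F^{k,\ell}))\ \ge\ \frac{1}{\alpha}\,w(OPT\cap\mathcal{F}_{r^\ast})\ \ge\ \frac{\ell}{g\,\alpha}\,w(OPT),
\]
which is the claimed ratio.

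The heart of the argument, and the step I expect to be the main obstacle, is showing that each $S_r$ is feasible. Fix $r$ and an edge $e$, and let $k_1<\dots<k_t$ be the bands in $K_r$ with $F_{k_j}\cap T_e\ne\emptyset$, so $k_{j+1}\ge k_j+g$. I would combine two bounds on $D_j:=d(F_{k_j}\cap T_e)$. On one hand, $e$ is used by a task of $F^{k_j,\ell}$, so the $(\alpha,\beta)$-property gives $D_j\le u_e-\beta\,2^{k_j}$; in particular $D_t\le u_e-\beta\,2^{k_t}$, leaving a slack of $\beta\,2^{k_t}$ on $e$. On the other hand, partition $F_{k_j}\cap T_e$ into the tasks whose bottleneck edge lies weakly left of $e$ and those whose bottleneck lies strictly right; in each part all tasks use the extreme bottleneck edge of that part, whose capacity is below $2^{k_j+\ell}$, and since $F_{k_j}$ is feasible this bounds the demand of each part, giving $D_j=O(2^{k_j+\ell})$. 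Then $d(S_r\cap T_e)=\sum_j D_j\le(u_e-\beta\,2^{k_t})+\sum_{j<t}O(2^{k_j+\ell})$; because $k_j\le k_t-(t-j)g$ the tail is a geometric series dominated by its top term $O(2^{k_t+\ell-g})=O(2^{k_t-q})$, and the calibration $\beta=2^{1-q}$ together with the gap $q=g-\ell$ between consecutive chosen bands makes this at most $\beta\,2^{k_t}$, so $d(S_r\cap T_e)\le u_e$. (If an extreme bottleneck edge happens to be $e$ itself, then $u_e<2^{k_j+\ell}$ is already band-sized and the same estimate applies directly.) For the running time, each non-empty $F^{k,\ell}$ lies in exactly one $K_r$, so the $(\alpha,\beta)$-algorithm is called once per non-empty band; a non-empty band contains a task whose bottleneck edge is one of the $m$ edges, so there are at most $m$ distinct bottleneck capacities and hence $O(\ell m)=O(m)$ non-empty bands, giving total time $O(m\cdot p(n))$ apart from the lower-order work of forming the $g$ unions and comparing weights. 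The delicate points to get right are the per-edge, per-band demand bound $O(2^{k+\ell})$ and the verification that the interference of the lower bands fits inside the slack $\beta\,2^{k_t}$, which is exactly why $\beta$ is tied to $q$ and why consecutive bands in a residue class are placed $q$ doublings apart.
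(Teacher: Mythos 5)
The paper does not actually prove this lemma---it is imported from \cite{BSW11} without proof---so I can only judge your reconstruction on its own terms. Your overall strategy is the standard (and correct) one: rescale so all bands have nonnegative index, group the bands into residue classes modulo $g=\ell+q$, observe that each task lies in exactly $\ell$ of the $g$ classes so that some class captures an $\frac{\ell}{\ell+q}$ fraction of $w(OPT)$, and argue that within one class the reserved slack $\beta\cdot 2^{k_t}$ of the topmost band crossing an edge absorbs the interference from all lower bands. The profit half of your argument (disjointness of the bands inside a residue class, the averaging step, and the chain of inequalities) is complete and correct, as is the running-time accounting.

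The feasibility half, however, does not close with the constants you use. Your per-band bound is $D_j<2\cdot 2^{k_j+\ell}$ (each of the two directional parts is at most the capacity of its extreme bottleneck edge, which is below $2^{k_j+\ell}$). Summing over $k_j\le k_t-(t-j)g$ gives $\sum_{j<t}D_j<2^{k_t+\ell+1}\cdot\frac{1}{2^g-1}=\beta 2^{k_t}\cdot\frac{2^g}{2^g-1}$, which strictly exceeds the available slack $\beta 2^{k_t}$; indeed the single top term $D_{t-1}<2^{k_{t-1}+\ell+1}\le 2^{k_t+1-q}=\beta 2^{k_t}$ can already exhaust the whole budget, leaving nothing for the bands $k_{t-2},k_{t-3},\dots$. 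So the constants hidden in your $O(\cdot)$'s are not affordable. Two repairs work. (i) Apply the $(\alpha,\beta)$-slack also at the lower bands' extreme bottleneck edges $e_L,e_R$ (each is used by a task of $F^{k_j,\ell}$), which sharpens the bound to $D_j\le(u_{e_L}-\beta 2^{k_j})+(u_{e_R}-\beta 2^{k_j})$; with this correction the geometric series sums to at most $\beta 2^{k_t}$ (the final inequality reduces to $2^{\ell+q}-2\le 2^{\ell+q}-1$). (ii) More cleanly, prove feasibility of $S_r$ by induction over edges in nondecreasing order of capacity, exactly as in the proof of Lemma~\ref{lem:weak-feas-global-feas} of this paper: all left-going lower-band tasks through $e$, taken over \emph{all} bands $k_1,\dots,k_{t-1}$ at once, share a single edge $e_L$ with $u_{e_L}<2^{k_{t-1}+\ell}\le 2^{k_t-q}$, and the induction hypothesis bounds their total demand by $u_{e_L}$ in one shot; together with the symmetric right-hand bound the lower-band demand is below $2^{k_t+1-q}=\beta 2^{k_t}$, which fits the reserved slack exactly.
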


By defining the constants appropriately we obtain Lemma~\ref{lem:ptasSmall}.
For a given $\epsilon>0$ we choose $q \in \mathbb{N}$ such that $\beta:=2^{1-q}\le \epsilon$.
Then we define $\ell\in \mathbb{N}$ such that $\frac{\ell+q}{\ell}\le 1+\epsilon$.
Lemma~\ref{lem:F^k-LP} yields a value $\delta>0$. Using Lemma~\ref{lem:framework} we get
a $\frac{1+\epsilon}{1-\beta}\cdot \frac{\ell+q}{\ell}\le 1+O(\epsilon)$-approximation algorithm
for $\delta$-small tasks. As the above reasoning holds for any $\epsilon>0$, the claim  
of Lemma~\ref{lem:ptasSmall} follows.

\end{document}